\newtheorem{theorem}{Theorem}[section]
\newtheorem{lemma}[theorem]{Lemma}
\theoremstyle{definition}
\newtheorem{definition}{Definition}[section]
\newtheorem{example}{Example}[section]
\newtheorem{notation}{Notation}[section]
\newtheorem{question}{Question}[section]
\newtheorem*{theorem*}{Theorem}
\def\R{{\mathbb R}}
\def\Z{{\mathbb Z}}
\def\N{{\mathbb N}}
\def\T{{\mathcal T}}
\def\bigO{\mathcal{O}}
\def\E{\mathbb{E}}
\title{
    {Convergence Properties of the Asynchronous Maximum Model}\\
    {\large Australian National University}\\
}
\author{John Larkin}
\date{Tuesday 4 June 2024}
\begin{document}
\maketitle

\chapter*{Abstract}
Let $G = (V,E)$ be a connected directed graph on $n$ vertices. Assign values from the set $\{1,2,\dots,n\}$ to the vertices of $G$ and update the values according to the following rule: uniformly at random choose a vertex and update its value to the maximum of the values in its neighbourhood. The value at this vertex can potentially decrease. This random process is called the asynchronous maximum model. Repeating this process we show that for a strongly connected directed graph eventually all vertices have the same value and the model is said to have \textit{converged}. In the undirected case the expected convergence time is shown to be asymptotically (as $n\to \infty$) in $\Omega(n\log n)$ and $\bigO(n^2)$ and these bounds are tight. We further characterise the convergence time in $\bigO(\frac{n}{\phi}\log n)$ where $\phi$ is the vertex expansion of $G$. This provides a better upper bound for a large class of graphs. Further, we show the number of rounds until convergence is in $\bigO((\frac{n}{\phi}\log n)g(n))$ with high probability, where $g(n)$ satisfies $\frac{1}{g^2(n)} \to 0$ as $n \to \infty$.
\par
For a strongly connected directed graph the convergence time is shown to be in $\bigO(nb^2 + \frac{n}{\phi'}\log n)$ where $b$ is a parameter measuring directed cycle length and $\phi'$ is a parameter measuring vertex expansion.

\chapter*{Acknowledgements}
I would like to express a sincere thanks to my supervisor Dr. Ahad N. Zehmakan. I am thankful for the ideas, expertise and direction he has provided for this project. I am appreciative of his additions to the introduction section of the paper. I am grateful for his knowledge and guidance throughout this project, which has helped me to grow in my enjoyment of research.

\tableofcontents

\chapter{Introduction}

Interactions between entities are ubiquitous in the modern world. Consider a social network; the entities are people and the interaction is whether they are friends or not. A virus spreading; the entities are again people and the interaction is whether they have spent time together or not. Fire propagation; the entities are areas likely to burn and the interaction is some method whereby fire can spread between them.

\par From a mathematical perspective, cf.~\cite{wolfram2018cellular,adler1991bootstrap,zehmakan2019spread}, it is natural to abstractly model this in the form of a graph $G$ which consists of vertices (entities) and edges (interactions between them). The advantage is now we can in some sense `forget' about the underlying model, whether it is a social network, virus or fire, and just consider the abstract graph. Then results for one model can generalise to others.

\par To model various process such as opinion formation, information spreading, fire propagation, and virus transmission, one needs to define a state for each agent/vertex. This state can then update as a result of interaction with the connections and by following a certain deterministic or stochastic updating rule. For example, consider coloring the vertices of a graph either blue or white. Then, an update consists of each vertex changing its current colour to the colour most common in its neighbourhood and keeping its colour in the case of a tie. This is called the \textit{Majority Model}~\cite{peleg2002local,gartner2018majority} and could be interpreted as people (vertices) having certain beliefs (blue or white) which change according to the most popular belief out of all their friends (neighbours in the graph). One may study this model on certain classes of graphs and indeed this has been done in the case of undirected cycles $C_n$ \cite[][]{RMM} and Erd\H{o}s–R\'{e}nyi random graphs $\mathcal{G}_{n,p}$ \cite{OFinER}. For $\mathcal{G}_{n,p}$, it was shown if the vertices are colored blue independently with probability $p_b \leq \frac{1}{2} - \omega(\frac{1}{\sqrt{np}})$ where $(1+\varepsilon)\frac{\log n}{n} \leq p$ the majority model becomes fully red almost asymptotically surely (i.e., with a probability tending to 1). This is an example of a threshold behaviour where if the initial density of blue nodes in the graph is slightly less than $\frac{1}{2}$ the process becomes fully red. Importantly, the conversion of the problem to a graph-theoretic setting allows for rigorous mathematical analysis and the statement of such results.

Several other extensions of the Majority Model have been considered such as in a noisy setup~\cite{balister2010random}, with biased tie-breaking rule~\cite{gartner2021majority}, the presence of random edges~\cite{out2021majority}, and with different threshold values~\cite{dreyer2009irreversible}.

\par Further questions can also be studied. For example what is the minimum set of vertices you need to colour blue so that the entire graph almost surely becomes all blue? \cite{OFinER} This corresponds to the people in the community you should persuade if you want your opinion to spread. In terms of fire propagation; what is the optimal location for a firebreak to prevent the spread of wildfire? The answer to this was shown in general to be NP-Hard to compute, so its difficulty can be quantified.\cite{firebreak}. \newline

\par The models presented so far can be broadly classified into the field of Agent-based modelling. This area of research is used in but not limited to the study of social sciences, biology, environmental sciences, business and networks. The purpose of an agent based model is generally to try and capture the large-scale behaviour of a system when potentially only small scale interactions are known.~\cite{abm_cite}.

\par One could create an Agent-based model in order to study the formation and change of individual opinions in the real-world. With some assumptions the problem can be translated into a mathematical context which can be precisely studied. The Majority Model is one such example. Although the formation of one's opinion is more complex than choosing the most popular opinion out of their friends, the simplification allows for rigorous mathematical analysis.

\par This paper will focus on one particular dynamic graph model; the asynchronous maximum model. Here, each vertex is given some initial value, for example an integer between $1$ and $n$. Then some vertex is chosen uniformly at random and its value is updated to be the maximum of its neighbours. Crucially, the chosen vertex \textit{only} considers the values of its neighbours, so potentially its own value could decrease. The term asynchronous arises since the vertices update one at a time. 

\par Arguably, the following two questions are the most well-studied questions about various dynamic graph process, which model real-world applications, cf.~\cite{goles1983iterative,frischknecht2013convergence,n2020rumor}.

\begin{enumerate}
    \item (Period) Does the process ever terminate or reach a stable cycle of states? If so how many states are in the cycle?
    \item (Convergence) How many rounds on average does it take to reach stability?
\end{enumerate}

The study of the period and convergence time for a strongly connected directed graph $G$ forms a large part of this paper. In the case where $G$ is undirected we can provide bounds on the convergence time and constructions that achieve these bounds. Although there is no immediate translation of this problem to a scenario in the real world it does not make the study of such a process redundant. Rather we hope the theoretical understanding and proof techniques gained could potentially be applied to other interesting problems. \newline

\par 
We briefly introduce some of the proof techniques used. Many of the arguments rely on the observation that if you have two geometrically distributed random variables $X, Y$ with success probabilities $p$ and $q$ with $p > q$, then $\E[X] \leq \E[Y]$. In fact, we generally use this argument when $Y$ is a geometric distribution and $X$ is some process where the probability of success changes but is always greater than or equal to $q$.
\par 
Observe that there are at most $n^n$ ways to assign values from the set $\{1,2\dots,n\}$ to a graph $G$ on $n$ vertices. One can create a graph $\mathcal{G}$ where every vertex is one of these $n^n$ valuations and draw a directed edge from $G_1$ to $G_2$ when there is a non-zero probability of $G_1$ updating to $G_2$ under the asynchronous maximum model. This is called the Markov Chain of Possibilities for $G$. A walk in this graph corresponds to a sequence of updates in the asynchronous maximum model. Further, the Markov Chain can be partitioned into maximal strongly connected components. There are certain maximal strongly connected components that act as `sinks' in the Markov Chain and as the number of rounds increases we almost surely become trapped in a sink. The largest sink is the period of the process and the convergence time is the expected number of rounds (in the worst case) taken to reach a sink.

\par In this way the problems of period and convergence can be studied by analysing the Markov Chain of Possibilities. Another crucial observation (in the undirected case) is that once an edge in $G$ exists with the current maximum value in the graph on both vertices it can never be removed by any update. This provides a measure to determine how many rounds until convergence, we can instead examine how many rounds until every vertex in the graph is part of one of these edges.

\par These techniques allow us to provide bounds on the expected number of rounds until the process converges and also the type of graphs that the model converges to.

\newpage
\section{Summary of Results}

\par 
In Chapter 2 we define the notation of a valuation, a functions assigning numbers to vertices of $G = (V,E)$. We define $k$-boundaries for a set $S \subseteq V$ which formalises the notion of a breadth first search from $S$. We then define an iterative graph model which generalises the notion of updating values on the vertices of $G$. By considering all possible valuations and the probability to transition between them we obtain a Markov chain. This leads to the definition of period of convergence for \textit{any} iterative graph model. Then we introduce the asynchronous maximum model. The convergence time results are summarised in Table \ref{table:1}.
\par In Chapter 3 we analyse the period and convergence on the more simple (for this problem) class of undirected graphs. The period of the process is shown to be 1 while the convergence time $\T(G)$ is bounded between $\Omega(n\log n)$ and $\bigO(n^2)$ for any undirected graph $G$. Further we provide the bound of $\bigO(\frac{n}{\phi}\log n)$ where $\phi$ is the vertex expansion of $G$, potentially giving a much better upper bound than $\bigO(n^2)$ for a large class of graphs with good expansion properties. Further, we show that the convergence time is concentrated around $\bigO(\frac{n}{\phi}\log n)$ in the sense that with high probability the model will converge in $\bigO(g(n)\frac{n}{\phi}\log n)$ where $g(n)$ is any function with the property that $\frac{1}{g^2(n)} \to 0$ as $n\to \infty$. Therefore, almost all undirected graphs on $n$ vertices will converge in $\bigO(g(n)\frac{n}{\phi}\log n)$ rounds.

\par In Chapter 4 we generalise to strongly connected directed graphs and provide period and convergence time theorems. The period is shown to be 1. The convergence time is in $\bigO(nb^2 + \frac{n}{\phi '}\log n )$ where $b$ is a parameter relating to cycle length in $G$ and $\phi'$ relates to the vertex expansion.

\begin{table}[h!]
\centering
\begin{tabular}{ |c| c | c | }
\hline
 Convergence Time $\mathcal{T}(G)$ & Graph Type  & Reference \\ 
 \hline
   $\Omega(n \log n) $  & Undirected & \ref{asymptotics_G} \\ 
   $\bigO(n^2) $  & Undirected & \ref{asymptotics_G}  \\ 
  $\Theta(n \log n)$  & Undirected $K_n$ & \ref{complete_graph_time}  \\ 
 $\Theta(n^2)$  & Undirected $P_n$ & \ref{path_graph_time} \\
 $\bigO(\frac{n}{\phi}\log n)$  & Undirected, $\phi = $ vertex expansion & \ref{asymptotics_vertex}  \\
 $\bigO(nb^2 + \frac{n}{\phi'}\log n)$  & Strongly connected. Parameters: $b,\phi'$ & \ref{conv_time_sc_graph}  \\
 \hline
\end{tabular}
\caption{Convergence Time $\mathcal{T}(G)$ Summary}
\label{table:1}
\end{table}

\chapter{Preliminaries}
For completeness we define a directed graph. We assume the reader is familiar with concepts such as a directed graphs, walks and (strong and weakly) connected. All asymptotics hold with respect to $n \to \infty$.

\begin{definition}
    (Directed and Undirected Graph) A directed graph is a pair $G = (V,E)$ of sets where $E \subseteq V \times V$. Elements of the set $V$ are called vertices (or nodes) and elements of the set $E$ are called edges. The edges are ordered pairs from $V \times V$. A graph is \textit{undirected} if $(u,v) \in E$ whenever $(v,u) \in E$. A graph is \textit{simple} if $(v,v) \notin E$ for all $v \in V$. \cite[][]{diestel}
\end{definition}

Please note when writing $G = (V,E)$ (or simply $G$) in this paper, we will \textbf{always} be referring to a simple and weakly connected graph on $|V| = n$ vertices. However not all graphs we consider are simple and weakly connected on $n$ vertices. For example the Markov Chain of Possibilities $\mathcal{G}$ (which is derived from $G$) will often not be simple or weakly connected and will have $n^n$ vertices. We always use the simple font for $G$ while the fancy font for the Markov Chain $\mathcal{G}$.

\section{Notation}
Below we introduce some important notation which will be used throughout the paper.
\begin{itemize}
    \item $G = (V,E)$ denotes a simple (directed) graph with vertex set $V$ and edge set $E$. $(v_1,v_2) \in E$ denotes a directed edge from $v_1$ to $v_2$. Alternatively we write $v_1 \sim v_2$ to mean there is a directed edge from $v_1$ to $v_2$.
    \item Let $V' \subseteq V$ be a subset of the vertices of a graph $G$. The graph induced on vertex set $V'$ is denoted by $G[V'] = (V', E')$ where $E' = \{(u,v)\in E | \text { both } u, v \in V'\}$.
    \item $K_n$ denotes the (undirected) complete graph on $n$ vertices.
    \item $C_n$ is a directed cycle on $n$ vertices.
    \item $P_n$ is the undirected path graph on $n$ vertices.
    \item For a graph $G = (V,E)$ and $u,v \in V$. Write $u \sim v$ to mean $u$ is adjacent to $v$. That is $(u,v) \in E$ is an edge in $G$.
    \item $\N$ is the set of natural numbers. i.e $\N = \{1,2,3,\dots \}$
    \item $[n] = \{1,2,\dots, n\}$ denotes the set of the first $n$ natural numbers. $[f]$ denotes an equivalence class of functions. It is clear from context whether a number or a function is being discussed.
    \item $\bigsqcup$ denotes a union of disjoint sets.
    \item $\log$ denotes the natural logarithm.
\end{itemize}

\section{Valuations}
We introduce the notion of a valuation. This captures the idea of assigning values such as integers to the vertices of a graph and having these integers change according to some predefined rule.

\begin{definition} (Valuation)
Let $G = (V,E)$ be a graph (directed or undirected) with vertex set $V$ and edge set $E$. Fix $t \in \Z$ such that $t \geq 0$. A \textit{valuation} of the graph $G$ is a function $f_t : V \to [n]$. That is for each vertex $v$ in the graph, we assign to it an element in the set $[n]$.
\end{definition}

We remark that $f_t$ can only take values in $[n]$. This definition seems somewhat restrictive. For example we could allow $f_t:V \to \R$. However one can show that for the asynchronous maximum model we do not lose any generality when the image of the valuation function is $[n]$. For further analysis on this topic please see Appendix \ref{val_app}.

\begin{definition} (Constant Valuation)
    The valuation $f_{con} : V \to [n]$ is called a constant valuation if for all $v \in V$,
    $$f_{con}(v) = k$$
    for some $k \in [n]$.
\end{definition}

\begin{example}
    Consider $G = K_n$, the complete graph on $n$ vertices. The function $f_t(v) = 1$ for all $v \in V$ is a constant valuation. That is every vertex is assigned the integer $1$.
\end{example}

A valuation $f_t$ does not necessarily depend on $t$. The use of the variable $t$ is meant to suggest that the valuation can change with time according to some predefined rule. We will introduce and study one such rule later which will define the asynchronous maximum model. 

\begin{notation}
    $(G,f)$ denotes graph $G$ with valuation $f$.
\end{notation}

\begin{definition}
    Let $G$ be a graph. The \textit{valuation family of $G$} is the set of all valuations of $G$ and is denoted by
    $$\mathcal{F}_G = \{f \, | f: V \to [n] \text{ is a valuation }\}$$
    When the graph is clear from context, $\mathcal{F}_G = \mathcal{F}$.
\end{definition}
Note the family $\mathcal{F}$ is finite since there are at most $n^n$ functions from $V\to [n]$.

\begin{definition}\label{boundary}(Boundary of $S$)
    Let $G$ be a graph and $S \subseteq V$ be non-empty. Define
    $$\Gamma (S) = \{v \in V | u \in S, v \notin S \text{ and } u \sim v\}$$
    we call $\Gamma(S)$ the boundary of $S$. It is the set of neighbours of elements in $S$ that are themselves not in $S$.
\end{definition}

If the graph is strongly connected then we can extend the above definition to partition it into these boundary components.

\begin{definition}\label{nbound}
    ($k$-boundaries of $S$)
    Let $G$ be a graph and $S \subseteq V$ be non-empty. For $k \in \Z, k\geq 0$ define
    $$\Gamma^k (S) = 
    \begin{cases}
         S &  \text{if $k = 0$} \\
         \Gamma(S) & \text{if $k = 1$} \\
         \Gamma(\Gamma^{k-1}(S)) \setminus \Gamma^{k-2}(S) & \text{if $k \geq 2$} \\
    \end{cases}$$
\end{definition}
\begin{notation}
    If the graph $G = (V,E)$ is not clear from context, then for $\emptyset \subsetneq S \subseteq V$, $\Gamma_G^k(S)$ denotes the $k$-boundary of $S$ with respect to $G$ and $S$.
\end{notation}

Informally, $\Gamma^k(S)$ are all vertices that can be reached from $S$ in a walk of length $k$ but not in a walk of length less than $k$. For $v \in V$, with a slight abuse of notation, $\Gamma(v) = \{u \in V | (v,u) \in E\ \text{ and } u \neq v\}$ will be called the \textbf{out-neighbours} of $v$. The $k$-boundaries partition the graph into disjoint sets. Please see Figures \ref{period_ex} and \ref{sc_example_period} for an example of graphs partitioned into $k$-boundaries.

\begin{definition}
(Strongly Connected) Let $G$ be a graph. $G$ is strongly connected if for any two vertices $u,v \in V$ there exists a walk from $u$ to $v$.
\end{definition}

\begin{lemma}\label{nbounpartition}
    Let $G$ be a strongly connected directed graph. Let $\emptyset \subsetneq S \subseteq  V$. There exists a $N \in \N \cup \{0\}$ such that $$V =  \bigsqcup_{k=0}^N \Gamma^k(S) $$
    where $\Gamma^k(S) \neq \emptyset$ for all $0 \leq k \leq N$ and $\Gamma^{k}(S) = \emptyset$ for all $k > N$. That is the $k$-boundaries of $S$ partition the set $V$ into exactly $N+1$ disjoint subsets.
\end{lemma}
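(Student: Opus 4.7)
The plan is to identify the $k$-boundaries $\Gamma^k(S)$ with the level sets of a breadth-first search from $S$, and then to read off the partition and emptiness claims directly from this identification. For each $v \in V$ let $d(v)$ denote the length of a shortest walk from some element of $S$ to $v$; because $G$ is strongly connected and finite, $d(v) \in \{0, 1, \dots, n-1\}$ for every $v$. Set $A_k := \{v \in V : d(v) = k\}$ and $N := \max_{v \in V} d(v)$. Then tautologically $V = \bigsqcup_{k=0}^{N} A_k$ with $A_k = \emptyset$ for $k > N$, so the whole statement reduces to proving $\Gamma^k(S) = A_k$ for every $k \geq 0$ and separately showing $A_k \neq \emptyset$ for $0 \leq k \leq N$.

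I would prove $\Gamma^k(S) = A_k$ by induction on $k$. The base cases $k = 0, 1$ are immediate from Definitions \ref{boundary} and \ref{nbound}. For the inductive step, take $v \in A_k$ and a shortest walk $s = u_0, u_1, \dots, u_k = v$; a standard splicing argument shows $d(u_j) = j$ for every $j$, since any shortcut from $S$ to some $u_j$ of length less than $j$ could be extended to a walk from $S$ to $v$ of length less than $k$, contradicting $d(v) = k$. In particular $u_{k-1} \in A_{k-1} = \Gamma^{k-1}(S)$ by the inductive hypothesis, so $v \in \Gamma(\Gamma^{k-1}(S))$, and $d(v) = k \neq k-2$ excludes $v \in \Gamma^{k-2}(S)$; hence $v \in \Gamma^k(S)$. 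For the converse inclusion, any $v \in \Gamma(\Gamma^{k-1}(S))$ is an out-neighbour of some $u \in A_{k-1}$, so $d(v) \leq k$; combining $d(v) \neq k - 1$ (built into the definition of $\Gamma(\cdot)$), $d(v) \neq k - 2$ (from the subtraction), and the inductive identifications $\Gamma^j(S) = A_j$ for $j < k$, one concludes $d(v) = k$.

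Non-emptiness of the first $N+1$ levels falls out of the same splicing argument: choosing any $v \in A_N$ with a shortest walk $s = u_0, \dots, u_N = v$ yields a vertex $u_k \in A_k = \Gamma^k(S)$ for every $0 \leq k \leq N$.

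The main obstacle is the converse inclusion $\Gamma^k(S) \subseteq A_k$ in the directed setting. Out-adjacency bounds distance only from above, so $v \in \Gamma(\Gamma^{k-1}(S))$ initially only forces $d(v) \leq k$, not $d(v) = k$; in a directed cycle one can imagine a ``wrap-around'' that reintroduces vertices of small $d$-value at a large $k$. Ruling this out cleanly requires using strong connectivity together with the inductive hypothesis $\Gamma^{k-1}(S) = A_{k-1}$ in a careful way, so that the subtraction of $\Gamma^{k-2}(S)$ combined with the exclusion of $\Gamma^{k-1}(S)$ is enough to pin $d(v)$ down to exactly $k$. This interplay between the directed adjacency structure and the interleaving of BFS levels is the technical heart of the proof.
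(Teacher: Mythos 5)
Your route---identifying $\Gamma^k(S)$ with the BFS level sets $A_k=\{v: d(v)=k\}$---is genuinely different from the paper's, which argues by induction on $|V|$: it deletes a vertex $v\in S$, applies the induction hypothesis to $G[V\setminus\{v\}]$ with seed set $S'=(S\cup\Gamma_G(S))\setminus\{v\}$, and proves $\Gamma_{G'}^k(S')=\Gamma_G^{k+1}(S)$ by an inner induction. But your proof is incomplete exactly at the point you yourself flag as ``the technical heart'': the inclusion $\Gamma^k(S)\subseteq A_k$ is said to follow from using strong connectivity and the inductive hypothesis ``in a careful way,'' yet no such argument is supplied. For $v\in\Gamma(\Gamma^{k-1}(S))$, the directed edge from some $u\in A_{k-1}$ into $v$ gives only $d(v)\le k$; excluding $d(v)=k-1$ and $d(v)=k-2$ still leaves $d(v)\le k-3$ open, because a directed edge provides no lower bound on $d(v)$ in terms of $d(u)$.

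The wrap-around you worry about actually occurs, so this gap cannot be filled. Take $V=\{1,2,3,4,5\}$ with edges $1\to 2$, $2\to 3$, $3\to 1$, $3\to 4$, $4\to 5$, $5\to 1$ (strongly connected) and $S=\{1\}$. Then $\Gamma^2(S)=\{3\}$ and $\Gamma^3(S)=\Gamma(\{3\})\setminus\Gamma^1(S)=\{1,4\}\setminus\{2\}=\{1,4\}$, whereas $A_3=\{4\}$; in particular $\Gamma^0(S)$ and $\Gamma^3(S)$ are not disjoint. (Likewise, on the directed $4$-cycle with $S=\{1\}$ one finds $\Gamma^4(S)=\{1\}\setminus\{3\}=\{1\}\neq\emptyset$, so the tail of boundaries never becomes empty.) Hence the identification $\Gamma^k(S)=A_k$---and indeed Lemma \ref{nbounpartition} as literally stated---fails for directed graphs, because Definition \ref{nbound} subtracts only $\Gamma^{k-2}(S)$ rather than all of $\bigcup_{j<k}\Gamma^j(S)$; that shortcut is legitimate only in the undirected case, where adjacency forces $|d(u)-d(v)|\le 1$ and your argument goes through verbatim. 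So the key inclusion in your proposal is both unproved and, with the paper's definition, unprovable; the fix is to amend the definition (or the statement) rather than the proof.
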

We require $G$ to be strongly connected since otherwise there could exist some vertices which cannot be reached by a walk from $S$.

\begin{proof}
    Please see Section \ref{n_boundaries_app} in the Appendix.
\end{proof}

\section{Maximal SCC}
Given a directed graph there is a natural way to partition the vertices into disjoint sets, called the Maximal Strongly Connected Components (SCC).

\begin{definition} (Maximal Strongly Connected Component)
    Let $U \subseteq V$ be a subset of the vertices. $G[U]$ is a maximal strongly connected component if $G[U]$ is a strongly connected graph and for all $v \in V\setminus U$, $G[U \cup \{v\}]$ is not strongly connected.
\end{definition}
Given a directed graph we can partition its vertex set into maximal strongly connected components. The advantage is now each component can be analysed separately, which may be easier to do since the graph is strongly connected. For an example of such a partition please see Figure \ref{mc_example} where every box surrounds a maximal strongly connected component.

\section{Iterative Graph Model}

\begin{definition}
    (Iterative Graph Model) Let $G$ be a graph and $\mathcal{F}$ the valuation family of $G$. For all $f,g \in \mathcal{F}$ assign a probability $p_{(f,g)} \in [0,1]$ of $f$ transitioning to $g$. Further, for all $f \in \mathcal{F}$, we have $\sum_{g \in \mathcal{F}} p_{(f,g)} = 1$. This assignment of probabilities is called an iterative graph model.
\end{definition}
The above definition captures that valuations $f$ are updated to other valuations $g$ according to some probability distribution (which depends on $f$). In practice when defining an iterative graph model we will usually not assign a specific numerical value to every possible transition $f$ to $g$. Instead we can define how to update every valuation $f$ where potentially the update includes some `randomness'. The asynchronous maximum model and the majority model (where colors are in correspondence with numbers) are both examples of iterative graph models. The former includes randomness while the latter does not.

\section{Markov Chain of Possibilities}

\par
Fix a graph $G$ and let $\mathcal{V} = \mathcal{F} $  denote the valuation family of $G$. We now define a directed graph $\mathcal{G} = (\mathcal{V}, \mathcal{E})$. Let $(f,g) \in \mathcal{E}$ if there exists a non-zero probability that $f$ will update to $g$ in the iterative graph model. $(f,g)$ means there is a directed edge from $f$ to $g$. Under this definition self-loops are allowed, so there could be a non-zero probability that a valuation $f$ transitions to itself.

\begin{definition}
    Let $G$ be a graph and $\mathcal{G}$ be given as above. By the definition of an iterative graph model, for every $(f,g) \in \mathcal{E}$ there is a associated non-zero probability of transitioning from $f$ to $g$. The \textit{Markov Chain of Possibilities} is $\mathcal{G}$ with this assignment of probabilities to the edges.
\end{definition}

Please refer to Figure \ref{mc_example}. This is a Markov Chain of Possibilities for some iterative graph model. Every vertex is a valuation and every edge represents some non-zero probability of transitioning from one valuation to another.

\begin{figure}[h!]
\centering
    \begin{tabularx}{0.93\textwidth}{*{2}{>{\centering\arraybackslash}X}}
\begin{tikzpicture}[
every edge/.style = {->, draw=black,very thick},
 vrtx/.style args = {#1/#2}{%
      circle, draw, thick, fill=white,
      minimum size=9mm, label=#1:#2}
                    ]
\node(f) [vrtx= center/$f$] at (0, 0) {};
\node(g) [vrtx= center/$g$] at (2, 0) {};
\node(h) [vrtx= center/$h$] at (4, 0) {};
\node(a) [vrtx= center/$a$] at (6, 0) {};
\node(b) [vrtx= center/$b$] at (2, 2) {};
\node(c) [vrtx= center/$c$] at (4, 2) {};
\node(d) [vrtx = center/$d$] at (6, 2) {};
\node(e) [vrtx = center/$e$] at (4, 4) {};
\path   (f) edge node[yshift = 5mm] {$\frac{1}{2}$} (b);
\path   (b) edge node[yshift = 5mm] {$\frac{2}{3}$} (e);
\path   (b) edge node[xshift = 3mm] {$\frac{1}{3}$} (g);
\path   (g) edge node[yshift = -5mm] {$\frac{1}{4}$} (f);
\path   (g) edge node[yshift = -5mm] {$\frac{1}{8}$} (h);
\path   (h) edge node[xshift = -3mm] {$\frac{1}{2}$} (c);
\path   (h) edge node[yshift = 5mm] {$\frac{1}{2}$} (d);
\path   (d) edge node[xshift = 3mm] {$1$} (a);
\path   (a) edge node[yshift = -5mm] {$\frac{1}{3}$} (h);
\path   (c) edge node[yshift = 5mm] {$1$} (d);

\path   (f) edge [loop above] node {$\frac{1}{2}$} (f);
\path   (g) edge [loop below] node {$\frac{5}{8}$} (g);
\path   (e) edge [loop above] node {$1$} (e);
\path   (a) edge [loop below] node {$\frac{2}{3}$} (a);

\end{tikzpicture} 
\caption*{$\mathcal{G}$}
&   
\begin{tikzpicture}[
every edge/.style = {->, draw=black,very thick},
 vrtx/.style args = {#1/#2}{%
      circle, draw, thick, fill=white,
      minimum size=9mm, label=#1:#2}
                    ]
\node(f) [vrtx= center/$f$] at (0, 0) {};
\node(g) [vrtx= center/$g$] at (2, 0) {};
\node(h) [vrtx= center/$h$] at (4, 0) {};
\node(a) [vrtx= center/$a$] at (6, 0) {};
\node(b) [vrtx= center/$b$] at (2, 2) {};
\node(c) [vrtx= center/$c$] at (4, 2) {};
\node(d) [vrtx = center/$d$] at (6, 2) {};
\node(e) [vrtx = center/$e$] at (4, 4) {};
\path   (f) edge node[yshift = 5mm] {$\frac{1}{2}$} (b);
\path   (b) edge node[yshift = 5mm] {$\frac{2}{3}$} (e);
\path   (b) edge node[xshift = 3mm] {$\frac{1}{3}$} (g);
\path   (g) edge node[yshift = -5mm] {$\frac{1}{4}$} (f);
\path   (g) edge node[yshift = -5mm] {$\frac{1}{8}$} (h);
\path   (h) edge node[xshift = -3mm] {$\frac{1}{2}$} (c);
\path   (h) edge node[yshift = 5mm] {$\frac{1}{2}$} (d);
\path   (d) edge node[xshift = 3mm] {$1$} (a);
\path   (a) edge node[yshift = -5mm] {$\frac{1}{3}$} (h);
\path   (c) edge node[yshift = 5mm] {$1$} (d);

\path   (f) edge [loop above] node {$\frac{1}{2}$} (f);
\path   (g) edge [loop below] node {$\frac{5}{8}$} (g);
\path   (e) edge [loop above] node {$1$} (e);
\path   (a) edge [loop below] node {$\frac{2}{3}$} (a);

\draw [red, dotted, very thick] (-0.6,-1.8) rectangle (2.7,2.8);

\draw [blue, dotted, very thick] (3.3,-1.8) rectangle (6.8,2.8);

\draw [blue, dotted, very thick] (3.3,3) rectangle (4.7,5.7);

\end{tikzpicture} 
\caption*{Partition into Maximal SCC}  
    \end{tabularx}
\caption{Markov Chain $\mathcal{G}$ and the partition into Maximal SCC}
\label{mc_example}
\end{figure}

\begin{definition}\label{absorbing_state_def}
    (Absorbing Components) Let $\mathcal{G}$ be a Markov chain of possibilities. Partition $\mathcal{G}$ into the maximal strongly connected components. The absorbing components are the strongly connected components that have no directed edges leaving them.
\end{definition}

In Figure \ref{mc_example} the absorbing components are boxed in blue. Consider starting at any vertex in $\mathcal{G}$. Since the Markov chain is finite, if we transition between vertices in $\mathcal{G}$ according to the probabilities given by the directed edges, then the probability of being in an absorbing state tends to $1$ as the number of transitions increases. Therefore the asynchronous maximum model will eventually reach one of these absorbing states and never leave it.

\par With regards to the asynchronous maximum model, and update can be regarded as transitioning between vertices of $\mathcal{G}$ and the probability of this update is given by the associated probability on the directed edge. Now we can define both the period and convergence time of this process.

\section{Period}
\begin{definition}\label{period_def} (Period)
    Let $G$ be a graph equipped with an iterative graph model. The $\textit{period}$ of $G$ is the size of the largest absorbing component in the Markov chain of possibilities $\mathcal{G}$.
\end{definition}

\section{Convergence}

\begin{definition}\label{conv_time_G_f}
    (Convergence Time of $(G,f)$) Let $\mathcal{G} = (\mathcal{V},\mathcal{E})$ denote the Markov chain of possibilities for $G$. We can choose a valuation $f \in \mathcal{V}$ and consider a random walk on $\mathcal{G}$ starting at $f$. The transition probabilities in the random walk are the probabilities assigned to the edges in $\mathcal{E}$. During the random walk a transition along an edge will be called an \textbf{iteration} or \textbf{round} or \textbf{update}. Let $Y_f$ be the random variable denoting the number of transitions in the random walk starting at $f$ until a valuation in an absorbing component is reached.
    
    The convergence time of $(G,f)$ is defined as
    $$T(G,f) := \mathbb{E}[Y_f].$$
    
\end{definition}

We are interested in the expected number of rounds in the worst case, which motivates the following definition.
\begin{definition}\label{conv_time}
    (Convergence Time of G)
    The convergence time of $G$ is given by
    $$\mathcal{T}(G) := \max_{f \in [\mathcal F]}T(G,f)$$
    We take the maximum expected number of rounds until an absorbing component is reached over all possible valuations $f$ of $G$.

\end{definition}

\section{Inequalities with Expectations}

\begin{theorem}\label{stoc_dom} 
    (Stochastically Dominated) 
    Let $X,X'$ be random variables and $u : \R \to \R$ a non-decreasing function. Suppose that for all $x \in \R$,
    $$\mathbb{P}(X' \leq x) \leq \mathbb{P}(X \leq x)$$
    then it follows
    $$\E[u(X)]\leq \E[u(X')]$$
    and we say $X'$ is \textit{stochastically dominated} by $X$ or that $X$ \textit{stochastically dominates} $X'$. \cite{book_eco}
\end{theorem}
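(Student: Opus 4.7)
The plan is to prove this by coupling. The hypothesis $\mathbb{P}(X' \leq x) \leq \mathbb{P}(X \leq x)$ for all $x$ says that $X'$ is no smaller than $X$ in the first-order stochastic sense, so the natural strategy is to realise $X$ and $X'$ on a single probability space with $X \leq X'$ pointwise almost surely. Granted such a coupling, the conclusion is immediate: applying the non-decreasing function $u$ preserves the pointwise inequality, giving $u(X) \leq u(X')$ almost surely, so taking expectations yields $\E[u(X)] \leq \E[u(X')]$.

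For the coupling itself I would use the quantile-transform construction. Writing $F(x) = \mathbb{P}(X \leq x)$ and $F'(x) = \mathbb{P}(X' \leq x)$ for the two distribution functions, I define the generalised inverses
\[
F^{-1}(t) = \inf\{x \in \R : F(x) \geq t\}, \qquad (F')^{-1}(t) = \inf\{x \in \R : F'(x) \geq t\}
\]
for $t \in (0,1)$. Let $U$ be uniform on $(0,1)$ and set $Y = F^{-1}(U)$, $Y' = (F')^{-1}(U)$; a standard check gives that $Y$ has the distribution of $X$ and $Y'$ the distribution of $X'$. The crucial monotonicity step is the set inclusion $\{x : F'(x) \geq t\} \subseteq \{x : F(x) \geq t\}$, which holds because $F(x) \geq F'(x) \geq t$ whenever $x$ lies in the left-hand set; passing to infima reverses the inclusion and gives $(F')^{-1}(t) \geq F^{-1}(t)$ for every $t \in (0,1)$, hence $Y \leq Y'$ on the sample space. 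Combined with the monotonicity of $u$ this closes the argument, since $\E[u(X)] = \E[u(Y)] \leq \E[u(Y')] = \E[u(X')]$.

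The one mildly delicate point is the quantile-inverse step, which needs care when $F$ or $F'$ has jumps or flat pieces; this is absorbed entirely into the definition of the generalised inverse as an infimum, so no case analysis on continuity is actually required. A coupling-free alternative is the layer-cake identity: for any non-decreasing $u$ and any integrable random variable $Z$ one has $\E[Z] = \int_0^\infty \mathbb{P}(Z > t)\,dt - \int_{-\infty}^0 \mathbb{P}(Z \leq t)\,dt$, and since $\{u \leq t\}$ is a lower level set, $\mathbb{P}(u(X) \leq t) = \mathbb{P}(X \leq c_t)$ (or $\mathbb{P}(X < c_t)$) for some threshold $c_t$, so the hypothesis on CDFs transfers termwise and the two integrals compare in the right direction. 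I would present the coupling proof because it makes transparent that stochastic dominance is the operative notion behind the comparisons with geometric random variables signalled in the introduction.
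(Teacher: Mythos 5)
The paper does not actually prove Theorem \ref{stoc_dom}; it is stated as a known result and attributed to the cited reference, so there is no in-paper argument to compare against. Your quantile-coupling proof is the standard one and is correct: the Galois relation $F^{-1}(t)\leq x \iff t\leq F(x)$ (valid for right-continuous $F$ with the infimum definition of the generalised inverse) gives $F^{-1}(U)\overset{d}{=}X$ and $(F')^{-1}(U)\overset{d}{=}X'$, the hypothesis $F'\leq F$ pointwise yields $(F')^{-1}(t)\geq F^{-1}(t)$ by the set-inclusion argument you give, and monotonicity of $u$ plus monotonicity of expectation finishes the proof. The only caveat worth recording is one the theorem statement itself elides: the conclusion presupposes that $\E[u(X)]$ and $\E[u(X')]$ are well defined (e.g.\ $u(X),u(X')$ integrable, or both expectations taken in $[-\infty,+\infty]$ without an $\infty-\infty$ ambiguity); with that proviso the pointwise inequality $u(Y)\leq u(Y')$ transfers to expectations. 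Your layer-cake alternative is also sound in spirit, though as written the claim $\mathbb{P}(u(X)\leq t)=\mathbb{P}(X\leq c_t)$ or $\mathbb{P}(X<c_t)$ needs the observation that $\mathbb{P}(X<c)=\lim_{x\uparrow c}\mathbb{P}(X\leq x)$ so that the CDF hypothesis passes to the strict-inequality case; the coupling proof avoids this and is the better one to present. For the paper's purposes (comparing hitting times to geometric random variables) your proof is more than adequate.
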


\begin{theorem}\label{coupling_proof}
    Let $Q_j$ for $j\geq 1$ be a Bernoulli random variable with success probability $q \neq 1$. That is
    $$Q_j = \begin{cases}
        1 & \text{with probability } q \\
        0 & \text{with probability } 1-q 
    \end{cases}$$
    Suppose that for all $j \geq 1$, $q\leq p_j \leq 1$. Let $P_j$ for $j\geq 1$ be a random variable defined by
    $$P_j = \begin{cases}
        1 & \text{with probability } p_j \\
        0 & \text{with probability } 1-p_j \\
    \end{cases}$$
    Define the following random variables.
    $$Q = \min\{j | Q_j = 1\}$$
    $$P = \min\{j | P_j = 1\}$$
    Then for all $k \in \N$, $\mathbb{P}(Q \leq k) \leq \mathbb{P}(P \leq k)$. In other words, $Q$ is stochastically dominated by $P$.
\end{theorem}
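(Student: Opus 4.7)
The cleanest route is a coupling argument: rather than compute the two probabilities separately, I would construct the sequences $(Q_j)$ and $(P_j)$ simultaneously on a common probability space so that $P_j \geq Q_j$ holds pointwise, and then read off the conclusion immediately from the definitions of $P$ and $Q$.

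Concretely, the plan is to take an i.i.d.\ sequence $U_1, U_2, \ldots$ of $\mathrm{Uniform}[0,1]$ random variables (independent of any randomness driving the $p_j$'s, in case those are themselves random and adapted to some filtration $\mathcal{F}_{j-1}$). I would then define
\[
Q_j = \mathbf{1}[U_j < q], \qquad P_j = \mathbf{1}[U_j < p_j].
\]
The marginals match the hypotheses: $Q_j$ is Bernoulli($q$), and conditional on $p_j$, $P_j$ is Bernoulli($p_j$). Because $p_j \geq q$ for every $j$, the event $\{U_j < q\}$ is contained in $\{U_j < p_j\}$, so $P_j \geq Q_j$ holds sample-path by sample-path. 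In particular, $\{j : Q_j = 1\} \subseteq \{j : P_j = 1\}$ for every realisation, whence $P = \min\{j : P_j = 1\} \leq \min\{j : Q_j = 1\} = Q$ pointwise.

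This pointwise inequality immediately gives $\{Q \leq k\} \subseteq \{P \leq k\}$ for each $k \in \N$, and therefore $\mathbb{P}(Q \leq k) \leq \mathbb{P}(P \leq k)$, as required. Since the coupled pair $(P,Q)$ has the correct marginals, the inequality transfers back to the original (uncoupled) random variables.

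There is no serious obstacle here; the only point that deserves care is whether the $p_j$'s are permitted to be random and to depend on the past outcomes, which is implicitly the situation in the applications (where $p_j$ is the current success probability of the asynchronous maximum process). The coupling approach handles this transparently, because the shared uniform $U_j$ is drawn independently of the history at step $j$, so the marginal distribution of $P_j$ conditional on the past is still Bernoulli($p_j$). If one preferred to avoid coupling entirely, the alternative is a one-line computation together with the tower property: $\mathbb{P}(P > k) = \mathbb{E}\bigl[\prod_{j=1}^k (1-p_j)\bigr] \leq (1-q)^k = \mathbb{P}(Q > k)$, which is really the same inequality phrased analytically. I would present the coupling version, since it also foreshadows the stochastic-domination perspective invoked in Theorem~\ref{stoc_dom}.
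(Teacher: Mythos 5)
Your proof is correct and takes essentially the same approach as the paper's: both arguments construct a coupling under which $P_j \geq Q_j$ holds sample-path by sample-path, and your shared-uniform construction induces exactly the same joint law as the paper's procedure (which sets $P_j' = 1$ whenever $Q_j = 1$ and otherwise with probability $\frac{p_j - q}{1-q}$, giving $\mathbb{P}(P_j'=1) = p_j$). Your explicit treatment of the case where the $p_j$ are random and adapted to the history is a useful refinement that the paper's sketch (which it itself flags as incomplete) does not address.
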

The above Theorem will allow us to make the following argument. Suppose we have a random process (called Process 1) composed of rounds where in each round there is a success or failure. For Process 1 the probability of success in each round is $q$. These are modelled by $Q_j$ in the theorem where $Q_j = 1$ is a success. Suppose we have another random process (called Process 2) where the success in round $j$ is $p_j \geq q$. These are modelled by $P_j$ where $P_j = 1$ is a success. Let the random variables $Q, P$ denote the number of rounds until a success for Process 1 and 2 respectively. Intuitively, since $p_j \geq q$ we should expect that $\mathbb{P}(Q \leq k) \leq \mathbb{P}(P \leq k) $. We are more likely to obtain up to $k$ successes in Process 2 than Process 1 since the probability of success in each round is higher for Process 2. This along with Theorem \ref{stoc_dom} then gives us that we expect a success in Process 2 \textit{before} a success in Process 1, again aligning with intuition.
\begin{proof}
    Please see Appendix \ref{coupling_proc_app}. This is a potential proof idea and not a complete proof. The `potential proof' may require more attention to be a complete proof.
\end{proof}

\section{Vertex Expansion}

\begin{definition}
    ($\widetilde{G}$: Dual of $G$) Let $G = (V,E)$ be a directed graph. Define $\widetilde{G} = (V,\widetilde{E})$ such that
    $$\widetilde{E} = \{(v,u) | (u,v) \in E\}.$$
    That is reverse every directed edge in $G$.
\end{definition}
Note that for an undirected graph $G = \widetilde{G}$.

\begin{definition}\label{vertex_exp_def_undirected}
 (Outward and Inward vertex expansion of $G$)
    Let $G$ be a directed graph and $A \subseteq V$ a non empty subset of the vertices. Define
    $$\phi_{out}(G) = \min_{0< |A| \leq \frac{n}{2}} \frac{|\Gamma_{G}(A)|}{|A|}.$$
    Where the minimum is taken over all possible choices of $A$.
    We call $\phi_{out}(G)$ the outward vertex expansion. Similarly define the inward vertex expansion as
    
    $$\phi_{in}(G) := \phi_{out}(\widetilde{G})$$
    and write $\phi_{out} := \phi_{out}(G)$ and $\phi_{in} = \phi_{in}(G)$ when $G$ is clear from context. For an undirected graph $G = \widetilde{G}$ so we have $\phi(G) := \phi_{out} = \phi_{in}$. We write $\phi, \phi_{in}, \phi_{out}$ when the graph $G$ is clear from context. We only write $\phi$ when discussing an undirected graph.
\end{definition}
The quantities $\phi_{out}$ and $\phi_{in}$ are measures of how many out (respectively in) neighbours any set $A \subseteq V$ will have in the worst case when $A$ is not to large. For an undirected graph, if the quantity $\phi$ is large then we expect the graph to be well-connected in some sense. If $\phi = 0$ then the graph is disconnected.

\section{Asynchronous Maximum Model}
Now we introduce the update rule for the asynchronous maximum model for a graph $G = (V,E)$. This is an example of an iterative graph model. Note that we do not explicitly give numerical values for the transition probabilities from $f_t$ to $f_{t+1}$, however the model implicitly defines them.
\newline
\textbf{Update Rule for the Asynchronous Maximum Model}
\par 
    Let $f_0 : V \to C$ denote an initial valuation function for a graph $G$. The valuation function $f_t$ for $t > 0$ is updated according to the following three steps:
    \begin{enumerate}
        \item Choose a vertex $v' \in V$ uniformly at random.
        \item Update the valuation function according to the following rule
        \begin{equation*}
            f_t(v) = \begin{cases}
                \text{max}_{v \sim u}\{f_{t-1}(u)\} & \text{if } v = v' \text{ and } \Gamma(v') \neq \emptyset \\
                f_{t-1}(v) & \text{otherwise }\\
            \end{cases}
        \end{equation*}
        \item Repeat steps 1 and 2.
    \end{enumerate}
In words; we first pick a vertex $v'$ uniformly at random and then update its value to the maximum value of its out-neighbours. If the vertex has no out neighbour its value does not change. All other vertices remain unchanged. We call steps 1 and 2 an \textbf{iteration} or \textbf{round} or \textbf{update} (which is consistent with Definition \ref{conv_time_G_f}). If the graph is clear from context then we may refer to $f_t$ as the graph with valuation $f_t$.

\begin{example}

\begin{figure}[h]
\centering
    \begin{tabularx}{0.8\textwidth}{*{2}{>{\centering\arraybackslash}X}}
\begin{tikzpicture}[
every edge/.style = {draw=black,very thick},
 vrtx/.style args = {#1/#2}{%
      circle, draw, thick, fill=white,
      minimum size=8mm, label=#1:#2}
                    ]
\node(A) [vrtx= center/5] at (0, 0) {};
\node(B) [vrtx= center/5] at (0, -2) {};
\node(C) [vrtx= center/3] at (2,0) {};
\node(D) [vrtx= center/4] at (2,-2) {};
\node(E) [vrtx= center/5] at (4, 0) {};
\node(F) [vrtx= center/2] at (4, -2) {};
\path   (A) edge (B)
        (A) edge (C)
        (C) edge (D)
        (B) edge (D)
        (C) edge (A)
        (C) edge (E)
        (E) edge (F)
        (D) edge (F)
        (E) edge (D);
\end{tikzpicture}
    \caption*{Before update: $f_0$}  
    &   
\begin{tikzpicture}[
every edge/.style = {draw=black,very thick},
 vrtx/.style args = {#1/#2}{%
      circle, draw, thick, fill=white,
      minimum size=8mm, label=#1:#2}
                    ]
\node(A) [vrtx= center/5] at (0, 0) {};
\node(B) [vrtx= center/5] at (0, -2) {};
\node(C) [vrtx= center/3] at (2,0) {};
\node(D) [vrtx= center/4] at (2,-2) {};
\node(E) [red][vrtx= center/4] at (4, 0) {};
\node(F) [vrtx= center/2] at (4, -2) {};
\path   (A) edge (B)
        (A) edge (C)
        (C) edge (D)
        (B) edge (D)
        (C) edge (A)
        (C) edge (E)
        (E) edge (F)
        (D) edge (F)
        (E) edge (D);
\end{tikzpicture}
\caption*{After update: $f_1$}  
    \end{tabularx}
\caption{The top-right vertex is chosen for update}
\label{graphs_ex_val}
\end{figure}

Please refer to Figure \ref{graphs_ex_val}. This is an undirected graph with $n = 6$ vertices and initial valuation $f_0 : V \to [6]$ (left). A vertex is selected uniformly at random which is marked in red. Its value is changed to $\max\{2,3,4\}$. After the update the valuation is $f_1$ (right).
\end{example}

\chapter{Undirected Graphs}

In this chapter we analyse the convergence time and period of the asynchronous maximum model on an \textit{undirected} connected graph. Firstly, we show that the period of the model is 1. For any $G$ the convergence time is shown to be between $\Omega(n \log n)$ and $\bigO(n^2)$. Further these bounds are in fact tight for the complete graph $K_n$ and the path graph $P_n$ respectively. Finally we characterise the convergence time in terms of the vertex expansion $\phi$ and show that the number of rounds until convergence cannot be much larger than the expectation.

\par 

\section{Period}\label{period_sec}

Here we show that the period of the asynchronous maximum model is 1. This means eventually the model reaches some valuation (not necessarily unique) where any update does not change the valuation.

\par

\begin{definition} (Maximum in round $t$)
Let $(G,f)$ be a graph with valuation $f$. Define
    $$M_f = \max \{f(v) | v \in V\}. $$
    So $M_f$ is the current maximum value in valuation $f$. The the valuations are indexed according to $t$, for example $f_t$, then we may also refer to $M_{f_t}$ as $M_t$.
\end{definition}

The following lemma will show that from any valuation there exists a sequence of vertices that can be chosen to reach a constant valuation. To highlight the argument in the proof we first provide an illustrative example. Please refer to Figure \ref{period_ex}. This is a graph on $n = 10$ vertices with initial valuation $f_0$. Let $S$ be the set of vertices with the maximum value in the graph under $f_0$. Now partition the graph into the $k$-boundaries $S,\Gamma(S)$ and $\Gamma^2(S)$. The vertices we select are the following; in each round choose a vertex from $\Gamma(S)$. Repeat until we have selected all vertices in $\Gamma(S)$. This will result in all vertices in $S \sqcup \Gamma(S)$ having the maximum value. After this, in each round, we choose vertices from $\Gamma^2(S)$. This will result in all vertices from $S \sqcup \Gamma(S) \sqcup \Gamma^2(S)$ having the maximum value. We have reached a constant valuation and are done. This argument is generalised in the following lemma.

\begin{figure}[h!]
\centering
\begin{tikzpicture}[
every edge/.style = {draw=black,very thick},
 vrtx/.style args = {#1/#2}{%
      circle, draw, thick, fill=white,
      minimum size=8mm, label=#1:#2}]
\node(A) [vrtx= center/10] at (0, 0) {};
\node(B) [vrtx= center/9] at (2, 2) {};
\node(C) [vrtx= center/1] at (2, 0) {};
\node(D) [vrtx= center/3] at (2, -2) {};
\node(E) [vrtx= center/2] at (4, 3) {};
\node(F) [vrtx= center/4] at (4, 2) {};
\node(G) [vrtx= center/6] at (4, 1) {};
\node(H) [vrtx= center/5] at (4, 0) {};
\node(I) [vrtx= center/7] at (4, -1) {};
\node(J) [vrtx= center/8] at (4, -2) {};

\path   (A) edge (B)
        (B) edge (C)
        (A) edge (C)
        (A) edge (D)
        (B) edge (E)
        (B) edge (F)
        (B) edge (G)
        (C) edge (G)
        (C) edge (H)
        (H) edge (I)
        (I) edge (J)
        (C) edge (I)
        (D) edge (J);

\draw [red, very thick] (-0.7,-1) rectangle (0.7,1);
\node[] at (0,1.4) (name) {$S$};

\draw [red, very thick] (1.3,-2.8) rectangle (2.7,2.6);
\node[] at (2,2.9) (name) {$\Gamma(S)$};

\draw [red, very thick] (3.3,-2.8) rectangle (4.7,3.6);
\node[] at (4,3.9) (name) {$\Gamma^2(S)$};

\end{tikzpicture}
\caption{Graph Partitioned into $k-$boundaries}  
\label{period_ex}
\end{figure}

\begin{lemma}\label{con_path}
    Let $(G,f)$ be a graph with valuation $f:V \to [n]$ and consider the Markov chain of possibilities $\mathcal{G}$. Then there exists a directed path in the Markov chain from $f$ to $f_{con}$ where $f_{con}$ is some constant valuation.
\end{lemma}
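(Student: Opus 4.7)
The plan is to make the illustrative argument given just before the lemma precise via induction on the $k$-boundaries of the set of current-maximum vertices. Concretely, let $M = M_f$ and $S = \{v \in V : f(v) = M\}$; since $G$ is (weakly) connected and undirected (hence trivially strongly connected when viewed as a directed graph), Lemma \ref{nbounpartition} gives a partition
\[
V \;=\; \bigsqcup_{k=0}^{N} \Gamma^k(S),
\]
with $\Gamma^0(S) = S$ and each $\Gamma^k(S)$ non-empty. I will then exhibit an explicit update sequence in the asynchronous maximum model and verify that each update occurs with positive probability, so the corresponding sequence of valuations is a directed path in $\mathcal{G}$.

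First I would record the easy monotonicity observation that the graph-wide maximum never increases under an update: if every vertex has value $\le M$, then $\max_{u \sim v} f_t(u) \le M$, so after updating $v$ we still have all values $\le M$. In particular $M_{f_t} = M$ throughout the construction, and any vertex whose value reaches $M$ and is not subsequently chosen retains value $M$. Next I would define the update sequence in $N$ phases: in phase $k \in \{1, \dots, N\}$, process the vertices of $\Gamma^k(S)$ one at a time in arbitrary order. Then I prove by induction on $k$ the invariant that after phase $k$, every vertex in $\bigsqcup_{j=0}^{k}\Gamma^j(S)$ has value $M$. The base case $k=0$ is immediate by definition of $S$. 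For the inductive step, each $v \in \Gamma^{k+1}(S)$ has, by Definition \ref{nbound}, at least one neighbour $u \in \Gamma^{k}(S)$; by the induction hypothesis (and the fact that no vertex in earlier boundaries is touched during phase $k+1$), such a $u$ has $f(u) = M$, so when $v$ is selected for update its new value becomes $\max_{u' \sim v} f(u') = M$.

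After the $N$-th phase, the invariant gives $f(v) = M$ for all $v \in V$, so we have reached the constant valuation $f_{con} \equiv M$. To convert this into the existence of a directed path in the Markov chain $\mathcal{G}$, I would note that at every round the chosen vertex is selected uniformly at random from $V$, so the probability of picking the required vertex is $1/n > 0$; therefore each consecutive transition in the sequence of valuations corresponds to an edge of $\mathcal{G}$ (with the non-zero transition probability required by the definition of the Markov chain of possibilities), yielding the desired directed path from $f$ to $f_{con}$.

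The main obstacle is purely bookkeeping: one must be careful to order the phases so that, when updating $v \in \Gamma^{k+1}(S)$, the witnessing neighbour in $\Gamma^{k}(S)$ still has value $M$. This is where the disjointness of the $k$-boundaries from Lemma \ref{nbounpartition} is used, together with the monotonicity of the global maximum, so that previously promoted vertices are never accidentally demoted. Everything else (positivity of the transition probabilities, existence of the partition) follows directly from results already established in the preliminaries.
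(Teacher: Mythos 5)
Your proposal is correct and follows essentially the same route as the paper's own proof: partition $V$ into the $k$-boundaries of the set $S$ of maximum-valued vertices via Lemma \ref{nbounpartition}, then update the boundaries outward one vertex at a time so that each chosen vertex acquires the value $M_f$ from an already-promoted neighbour. Your version merely makes the induction on phases and the monotonicity of the global maximum explicit, which the paper leaves informal.
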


\begin{proof}
Let $f_0 = f$. Let $$S = \{v \in V | f(v) = M_f\}$$ be the set of vertices with maximum value. Further assign $M_0 = M_f$ as the initial maximum value. $S$ is non-empty so we can consider $k$-boundaries of $S$. 
\par 

To show the existence of such a path in the Markov chain it suffices to provide a sequence of vertices that could be chosen by the asynchronous model to reach a constant valuation. We have from Lemma \ref{nbounpartition} that there exists a $k \in \N \cup \{0\}$ such that
$$V = S \sqcup \Gamma(S) \sqcup \Gamma^2(S) \sqcup \dots \sqcup \Gamma^k(S) = \bigsqcup_{n=0}^k \Gamma(S)^n$$

The path taken through the Markov chain of possibilities will be the following. Select any $v' \in \Gamma(S)$. For all $u \in V$, $f_{0}(u) \leq  M_0 $. Further $f_0(s) = M$ and $s \sim v'$ for $s \in S$. Therefore
updating the model on this node produces 
        \begin{equation*}
            f_1(v) = \begin{cases}
                M_0 & \text{if } v = v' \\
                f_{0}(v) & \text{if } v \neq v' \\
            \end{cases}
        \end{equation*}
The valuation $f_0$ is identical to $f_1$ except now we have one more vertex with value $M_0$, namely a vertex in $\Gamma(S)$. We can continue in this manner, choosing to update all vertices in $\Gamma(S)$, then update all vertices in $\Gamma^2(S)$ and so on until updating all vertices in $\Gamma^k(S)$. The number of updates is finite and equals $|V| - |S|$. Upon completion all vertices will have the value $M_0$. This is a valid sequence of updates, so there exists a path in the Markov chain of possibilities from $(G,f)$ to $(G,f_{con})$.

\end{proof}

The above lemma states that no matter what valuation you start with, it is always possible to trace a path through the Markov chain of possibilities to a graph with a constant valuation. This proof also provides information about the function $f_{con}$. It is a valuation that assigns $M_0$ to every vertex. This leads to a complete characterisation of the size of the absorbing states as we now show.

\begin{theorem}\label{path_in_chain_implies_period_1}
    The period of the asynchronous maximum model for an undirected graph $G$ is 1, that is the size of all absorbing states in the Markov chain of possibilities is 1.
\end{theorem}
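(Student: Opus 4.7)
The plan is to leverage Lemma \ref{con_path} together with the observation that any constant valuation is a fixed point of the update rule, so its only outgoing edge in $\mathcal{G}$ is a self-loop. This makes every constant valuation a singleton absorbing component, and the main task is to show that no absorbing component can be larger than one of these singletons.

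First I would observe that for a constant valuation $f_{con}$ with $f_{con}(v) = k$ for all $v \in V$, any vertex $v'$ selected by the update rule has $\max_{v' \sim u} f_{con}(u) = k$ (using here that $G$ is connected and $n \geq 2$, so every $v'$ has a neighbour; the isolated-vertex case is trivial). Hence $f_{con}$ transitions to itself with probability $1$, and the only outgoing edge of $f_{con}$ in $\mathcal{G}$ is the self-loop. In particular, the singleton $\{f_{con}\}$ is a maximal strongly connected component with no outgoing edges, i.e.\ an absorbing component in the sense of Definition \ref{absorbing_state_def}.

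Next I would argue that every absorbing component $A$ in $\mathcal{G}$ must coincide with such a singleton. Fix any $f \in A$. By Lemma \ref{con_path} there is a directed path in $\mathcal{G}$ from $f$ to some constant valuation $f_{con}$. Since $A$ is absorbing, no edge leaves $A$, so following this path we never exit $A$; therefore $f_{con} \in A$. Because $A$ is a (maximal) strongly connected component, $f$ must be reachable from $f_{con}$ in $\mathcal{G}$. But the only edge out of $f_{con}$ is its self-loop, so the only vertex of $\mathcal{G}$ reachable from $f_{con}$ is $f_{con}$ itself. Consequently $f = f_{con}$, and as $f \in A$ was arbitrary we conclude $A = \{f_{con}\}$, which has size $1$.

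The main (minor) obstacle is just being careful about the definition of ``absorbing component'' versus ``fixed point'': the statement concerns the size of maximal SCCs with no outgoing edges, not merely self-loops, so the argument must explicitly use strong connectivity of $A$ to force $f = f_{con}$ rather than only concluding $f_{con} \in A$. Since this bounds the size of every absorbing component by $1$, and Lemma \ref{con_path} guarantees at least one absorbing component exists (any constant valuation gives one), the period as defined in Definition \ref{period_def} is exactly $1$, completing the proof.
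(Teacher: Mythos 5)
Your proposal is correct and follows essentially the same route as the paper: both invoke Lemma \ref{con_path} to show every absorbing component contains a constant valuation, and then use the fact that a constant valuation is a fixed point (so nothing else is reachable from it) to force the component to be a singleton. The only cosmetic difference is that you argue directly via reachability from $f_{con}$ within the strongly connected component, whereas the paper phrases the same step as a proof by contradiction.
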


\begin{proof}
    Let $\mathcal{G}$ be the Markov chain of possibilities for a graph $G$. Firstly we argue that every absorbing state contains $f_{con}$ where $f_{con}$ is some constant valuation. Then we will show that no absorbing state can have size strictly greater than 1. This implies $f_{con}$ is in its own absorbing state.
    \par 
    Let $\Delta = (\mathcal{V'},\mathcal{E'}) \subseteq \mathcal{G}$ be an absorbing state. Since $\Delta$ is a strongly connected component then there exists a path between any two elements $f'$ and $f$ in $\Delta$. Let $f'$ be in $\Delta$. By Lemma $\ref{con_path}$, there exists a path from $f'$ to $f_{con}$ for some constant valuation $f_{con}$. Since $\Delta$ is absorbing, the path from $f'$ to $f_{con}$ cannot leave $\Delta$. Therefore every absorbing state contains a constant valuation.
    \par Suppose for contradiction that the absorbing state $\Delta$ contains strictly more than one valuation. We have already shown $f_{con} \in \Delta$. Therefore assume there exists $f' \in \Delta$ with $f' \neq f_{con}$. Since $\Delta$ is strongly connected, there exists a path from $f_{con}$ to $f'$. This implies there exists some sequence of vertices that can be updated to obtain $f'$ from $f_{con}$. However this is a contradiction since \textit{any} vertex updated under a constant valuation does not change value. Therefore the absorbing state $\Delta$ has size at most $1$.
    \par
    Since every absorbing state contains the constant valuation and has size at most 1, then every absorbing state must \textit{only} contain a constant valuation. Therefore the period of the asynchronous maximum model is 1.
\end{proof}

\section{Convergence}
Here we analyse the convergence time of the asynchronous maximum model on undirected graphs.

\begin{definition}\label{strong_edge}
    (Strong Edge) Let $(G,f_t)$ be an undirected graph with valuation $f_t$. An edge $(u,v) \in E$ is a \textit{strong edge} if
    $f(u) = f(v) = M_t$. That is, $(u,v)$ is an edge where both $u$ and $v$ contain the current maximum value in $f$.
\end{definition}

\begin{definition}
    (Strong Edge Set) Let $(G,f_t)$ be an undirected graph with valuation $f_t$. The set $S_t\subseteq V$ is a \textit{strong edge set} if all of the vertices in $S_t$ belong to a strong edge and $|S_t|$ is maximal. Note the strong edge set consists of vertices and not edges. If the valuation is clear from context then we write $S = S_t$.
\end{definition}
The strong edge set for a valuation $f$ may be empty if there are no strong edges in the graph. Further, for a given valuation this set is unique, so we are justified in using the terminology `the' strong edge set $S$.

\begin{example}
Please refer to Figure \ref{graphs_ex_val_here}. For the graph on the left, the maximum value in the graph is $5$ and there is a strong edge set $S$ of size 2. Note that not all vertices with the maximum value are in $S$. For the graph on the right, the strong edge set is empty since the maximum value is $6$.
\end{example}

\begin{figure}[h!]
\centering
    \begin{tabularx}{0.8\textwidth}{*{2}{>{\centering\arraybackslash}X}}
\begin{tikzpicture}[
every edge/.style = {draw=black,very thick},
 vrtx/.style args = {#1/#2}{%
      circle, draw, thick, fill=white,
      minimum size=8mm, label=#1:#2}
                    ]
\node(A) [red][vrtx= center/5] at (0, 0) {};
\node(B) [red][vrtx= center/5] at (0, -2) {};
\node(C) [vrtx= center/3] at (2,0) {};
\node(D) [vrtx= center/4] at (2,-2) {};
\node(E) [vrtx= center/5] at (4, 0) {};
\node(F) [vrtx= center/2] at (4, -2) {};
\path   (A) edge (B)
        (A) edge (C)
        (C) edge (D)
        (B) edge (D)
        (C) edge (A)
        (C) edge (E)
        (E) edge (F)
        (D) edge (F)
        (E) edge (D);
\end{tikzpicture}
    \caption*{Valuation with $S$ highlighted in red}  
    &   
\begin{tikzpicture}[
every edge/.style = {draw=black,very thick},
 vrtx/.style args = {#1/#2}{%
      circle, draw, thick, fill=white,
      minimum size=8mm, label=#1:#2}
                    ]
\node(A) [vrtx= center/5] at (0, 0) {};
\node(B) [vrtx= center/5] at (0, -2) {};
\node(C) [vrtx= center/3] at (2,0) {};
\node(D) [vrtx= center/4] at (2,-2) {};
\node(E) [vrtx= center/5] at (4, 0) {};
\node(F) [vrtx= center/6] at (4, -2) {};
\path   (A) edge (B)
        (A) edge (C)
        (C) edge (D)
        (B) edge (D)
        (C) edge (A)
        (C) edge (E)
        (E) edge (F)
        (D) edge (F)
        (E) edge (D);
\end{tikzpicture}
    \caption*{Valuation where $S = \emptyset$}  
    \end{tabularx}
\caption{Two examples of the strong edge set $S$}
\label{graphs_ex_val_here}
\end{figure}

\subsection{Potential Function}
Now we define a potential function on $(G,f_t)$. Let $g : \mathcal{F} \to \mathbb{Z}$ be given by

\begin{align*}
    & g(f_t) = |S_t| && \text{$S_t \subseteq V$ is the strong edge set under valuation $f_t$} \\
\end{align*}

\begin{lemma}\label{potential_func}
    Let $(G,f_0)$ be a graph with initial valuation $f_0$. Update the valuations according to the asynchronous maximum model. The potential function $g_t$ is non-negative, non-decreasing (with respect to $t$) and bounded above by $n$.
\end{lemma}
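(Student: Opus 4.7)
The plan is to verify the three properties in ascending order of difficulty. Non-negativity is immediate because $g_t = |S_t|$ is the cardinality of a set, hence a non-negative integer. The upper bound $g_t \leq n$ is equally immediate: by definition $S_t \subseteq V$ and $|V| = n$. The only substantial content is monotonicity, so the argument will concentrate there.

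For monotonicity, I would prove the pathwise statement that for any single update, $|S_{t+1}| \geq |S_t|$, from which the claim follows by induction on $t$. The central observation is that a strong edge can never be destroyed by a single update. Concretely, fix a strong edge $(u,v)$ under $f_t$, so $f_t(u) = f_t(v) = M_t$, and let $w$ be the vertex chosen for update. First I would record the general fact that $M_{t+1} \leq M_t$, since the only value that changes is $f_{t+1}(w) = \max_{w \sim x} f_t(x) \leq M_t$, while all other values are preserved. Second, since $f_t(u) = M_t$ survives whenever $w \neq u$, and since if $w = u$ then $f_{t+1}(u) \geq f_t(v) = M_t$ as $u \sim v$, we actually always have $M_{t+1} \geq M_t$, so $M_{t+1} = M_t$.

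With $M_{t+1} = M_t$ established, I would check case by case that $(u,v)$ remains a strong edge. If $w \notin \{u,v\}$ then both endpoint values are unchanged. If $w = u$, then $f_{t+1}(u) = \max_{u \sim x} f_t(x)$; this is at most $M_t$ and at least $f_t(v) = M_t$, so $f_{t+1}(u) = M_t = M_{t+1}$, and $f_{t+1}(v) = f_t(v) = M_t$. The case $w = v$ is symmetric. In every case $(u,v)$ is a strong edge under $f_{t+1}$, so every vertex in $S_t$ belongs to $S_{t+1}$, giving $S_t \subseteq S_{t+1}$ and hence $|S_{t+1}| \geq |S_t|$.

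I do not expect any real obstacle; the only subtlety is to avoid the trap of thinking $M_t$ might decrease. The clean way to present this is to first prove the lemma $M_{t+1} = M_t$ whenever $S_t \neq \emptyset$ (and note $|S_{t+1}| \geq 0 = |S_t|$ trivially when $S_t = \emptyset$), since that turns the case analysis on $w$ into a short verification. The inductive step from a single update to arbitrary $t$ is a one-line remark, and combining with the non-negativity and the ambient bound $|V| = n$ completes the proof.
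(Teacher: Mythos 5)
Your proof is correct, and it is essentially the same elementary case analysis as the paper's, just packaged differently. The paper cases on the updated vertex $v'$ according to whether $v' \in S_t$ and whether $v'$ has a neighbour of value $M_t$, tracking how $S_t$ can grow; you instead fix a strong edge $(u,v)$, isolate the fact $M_{t+1} = M_t$ whenever $S_t \neq \emptyset$, and case on whether the updated vertex is an endpoint of that edge. Your organization is arguably the cleaner of the two: it handles the only genuinely delicate point (that the maximum cannot drop once a strong edge exists) up front as a standalone claim, it disposes of $S_t = \emptyset$ in one line rather than via the paper's sub-case discussion of $M_{t+1} \leq M_t$, and it directly yields the stronger containment $S_t \subseteq S_{t+1}$, which the paper only records in a remark after its proof but then relies on later. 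Nothing is missing.
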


The above lemma is useful as $g$ can be used as a metric for how close we are to convergence. Note that $g(f_t) = n$ corresponds to $f_t$ being a constant valuation.

\begin{proof}
    $g_t$ is the cardinality of a subset of vertices, so it must be between $0$ and $|V|$. Therefore $g_t$ is non-negative and bounded above by $|V| = n$.
    \par We show $g_{t+1} \geq g_t$ and therefore non-decreasing. Suppose vertex $v' \in V$ is chosen during an update from $t$ to $t+1$. Observe that the only vertex whose value changes is $v'$. Consider the following cases depending on whether or not $v'$ is in the strong edge set.
    \par \textbf{Case 1: $v' \in S_t$}.
    \par If $v' \in S_t$, then by definition it has a neighbour with the current maximum $M_t$ in the graph (see \ref{strong_edge}). The update rule then gives
            \begin{equation*}
            f_t(v) = f_{t+1}(v) = \begin{cases}
                M_t & \text{if } v = v' \\
                f_{t}(v) & \text{if } v \neq v' \\
            \end{cases}
        \end{equation*}
    That is no values on any vertices change. Then $|S_{t+1}| = |S_t|$.
    \par \textbf{Case 2: $v' \notin S_t$}.
    \par If $v' \notin S_t$ then we consider two cases. Whether or not $v'$ has a neighbour with value $M_t$.
    \par If $v'$ has a neighbour $u$ with value $f_t(u) = M_t$ then the update rule gives 
            \begin{equation*}
            f_{t+1}(v) = \begin{cases}
                M_t & \text{if } v = v' \\
                f_{t}(v) & \text{if } v \neq v' \\
            \end{cases}
        \end{equation*}
    Now the edge $(v',u)$ is a strong edge. Hence $|S_{t+1}| \geq |S_t| + 1$ since we have at least included the vertex $v'$ into the strong edge set $S_t$ to obtain $S_{t+1}$. Further, the update of $v'$ does not change the values of any vertices already in $S_t$.
    \par If $v'$ has no neighbour $u$ such that $f_t(u) = M_t$, then we have
            \begin{equation*}
            f_{t+1}(v) = \begin{cases}
                 \text{max}_{v \sim u}\{f_{t}(u)\} & \text{if } v = v' \\
                f_{t}(v) & \text{if } v \neq v' \\
            \end{cases}
        \end{equation*}
By assumption $\text{max}_{v \sim u}\{f_{t}(u)\} < M_{t}$. 
\par If $S_t = \emptyset$ then consider $M_{t+1} \leq M_t$ which is the maximum value in the graph in round $t+1$. If $f_t(u) = M_{t+1}$ then the update will produce a strong edge $(v',u)$ and $|S_{t+1}| \geq 2$. Otherwise if $f_t(u) \neq M_{t+1}$ then the update does not produce a strong edge and $S_{t+1} = \emptyset$.
\par Otherwise if $S_t \neq \emptyset$. By assumption $v'$ has no neighbour $u$ such that $f_t(u) = M_t$. Therefore the update cannot produce a strong edge and $S_t = S_{t+1}$.
\end{proof}

We note that in the above proof we show that $S_t \subseteq S_{t+1}$ for all $t$. This is slightly stronger that proving the potential function is non-decreasing. It shows once an element is added to the strong edge set it cannot be removed by any update. The advantage of a potential function is that we can study how it increases with respect to the update process. In fact, an increase in the potential function corresponds to transitioning between maximal strongly connected components of the Markov chain of possibilities $\mathcal{G}$. Since $g_t(f) \leq n$ then it can only increase a finite number of times which will allow us to bound the expected time taken to reach an absorbing state.

\subsection{Bounds on Convergence Time}

Now we can try to analyse the convergence time of the asynchronous maximum model. In section \ref{period_sec}, the absorbing states in the Markov chain of possibilities were shown to be constant valuation states. These exactly correspond to strong edge sets of size $|V|$. Lemma $\ref{potential_func}$ prompts us to study the following two stages of convergence:
\begin{enumerate}
    \item How many rounds until a strong edge is formed?
    \item Once a strong edge exists, how many rounds until the strong edge set $S \subseteq V$ is the entire vertex set $V$?  \\
\end{enumerate}

We will call these Phase 1 and Phase 2 respectively.

\par

\begin{theorem}\label{asymptotics_G}
    For any graph $G$ with $n$ vertices,
    \begin{equation}
        \Omega(n\log n) = \T(G) = \bigO(n^2)
    \end{equation}
\end{theorem}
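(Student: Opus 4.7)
The plan is to establish the two bounds separately, leveraging the two-phase decomposition highlighted just before the theorem statement.

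For the upper bound $\T(G) = \bigO(n^2)$, I would first handle Phase 1. While no strong edge exists, the set $T_{\max}^{(t)} = \{v : f_t(v) = M_t\}$ satisfies $T_{\max}^{(t)} \ne V$, so connectedness of $G$ forces $|\Gamma(T_{\max}^{(t)})| \ge 1$. Choosing any $v' \in \Gamma(T_{\max}^{(t)})$ resets $v'$ to value $M_t$ (its maximum-valued neighbour has value $M_t$, and no value in the graph exceeds $M_t$), producing a strong edge. The per-round success probability is therefore at least $1/n$, so Theorem \ref{coupling_proof} combined with Theorem \ref{stoc_dom} bounds the expected waiting time by that of a $\mathrm{Geom}(1/n)$ variable, namely $n$. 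For Phase 2, whenever $|S_t| \ge 2$ and $S_t \ne V$, connectedness again gives $|\Gamma(S_t)| \ge 1$. By Lemma \ref{potential_func} and the remark after it, strong edges persist, so $M_t$ is now a permanent global maximum and selecting any $v' \in \Gamma(S_t)$ strictly enlarges $S_t$. Each enlargement has success probability at least $1/n$ and hence expected waiting time at most $n$, and at most $n - 2$ enlargements suffice to reach $|S_t| = n$. Summing, $\T(G) \le n + (n - 2) \cdot n = \bigO(n^2)$.

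For the lower bound $\T(G) = \Omega(n \log n)$, I would pick the adversarial initial valuation $f_0$ that assigns distinct values $1, 2, \dots, n$ to the vertices, and let $v^*$ denote the unique vertex receiving value $n$. Convergence forces every vertex to hold value $n$, but a vertex's value only changes when it is itself selected for update; hence each of the $n - 1$ vertices in $V \setminus \{v^*\}$ must be selected at least once. Since vertices are drawn uniformly at random, the number of rounds needed is at least the coupon-collector time for collecting $n - 1$ fixed coupons from a uniform pool of $n$, whose expectation is $n \sum_{k=1}^{n-1} \tfrac{1}{k} = \Theta(n \log n)$. Therefore $\T(G) \ge T(G, f_0) = \Omega(n \log n)$.

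The main obstacle is justifying the monotonicity used in Phase 2: without the persistence of strong edges, an update at a vertex in $\Gamma(S_t)$ might in principle not increase $|S_t|$, or an update elsewhere might decrease $M_t$. Both pitfalls are ruled out by Lemma \ref{potential_func}, which is precisely what makes the dominance of Phase 2 by at most $n - 2$ geometric waiting times legitimate and converts an otherwise intricate Markov chain calculation into a clean sum.
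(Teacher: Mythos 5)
Your upper bound is essentially the paper's own argument (Phase 1 dominated by a $\mathrm{Geom}(1/n)$ waiting time, Phase 2 by $n-2$ such waits, using the persistence of strong edges from Lemma \ref{potential_func}), and it is correct; the paper merely uses $\delta/n$ instead of $1/n$ in Phase 1, which changes nothing asymptotically. The problem is in your lower bound. The claim ``convergence forces every vertex to hold value $n$'' is false, and it fails precisely because of the defining quirk of this model: the selected vertex ignores its own value. If the unique vertex $v^*$ with value $n$ is itself selected before any neighbour has copied $n$, its value drops to $\max_{v^* \sim u} f(u) < n$, the value $n$ vanishes from the graph permanently, and the process converges to some constant $c < n$. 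So it is not true that each vertex of $V \setminus \{v^*\}$ must be selected, and the specific-coupon computation $n\sum_{k=1}^{n-1} \tfrac{1}{k}$ does not apply as stated.

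The argument is repairable with one observation: whatever constant value $c$ the process converges to, a vertex's value changes only in rounds in which it is selected, and under your all-distinct initial valuation at most one vertex starts with value $c$. Hence at least $n-1$ \emph{distinct} vertices (which $n-1$ depends on the random outcome $c$) must each be selected before absorption, so the convergence time dominates the waiting time for the number of distinct selected vertices to reach $n-1$; this is $\sum_{k=0}^{n-2}\mathrm{Geom}\bigl(\tfrac{n-k}{n}\bigr)$ with expectation $n(H_n - 1) = \Theta(n\log n)$ by Lemma \ref{harmonic_num}. With that fix your lower bound is a genuinely different (and arguably more elementary) route than the paper's, which instead starts from a valuation with two adjacent $2$'s among $1$'s and lower-bounds the growth time of the strong edge set via the upper bound $|\Gamma(S)| \le n - |S|$ on the success probability; your version avoids any reference to strong edges but must confront the possibility that the maximum value is destroyed, which the paper's choice of valuation sidesteps by design.
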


\begin{proof}
    Firstly we show $\T(G) = \Omega(n\log n)$. Let $f'$ be some valuation that assigns the value $2$ to two adjacent vertices and the value $1$ to every other vertex. Since there exists a strong edge between the two adjacent values of 2 which are maximal then the absorbing state will be reached when all values in the graph have been updated to 2.  Let $Z$ denote a random variable that is the number of rounds until $S$ is equal to $V$ for valuation $f'$. Let $Z_i$ be a random variable denote the number of rounds until $|S| = i$ given that $|S| = i-1$. We can write $Z = Z_3 + Z_4 + \dots + Z_n$ since it must be the case that $|S|$ increases by $0$ or $1$ after an update for this valuation $f'$ (this is not true for an arbitrary valuation $f$).
    Then
    $$ \sum_{i=3}^n \mathbb{E}[Z_i] = \mathbb{E}[Z] = T(G,f') \leq \max_{f \in [\mathcal F]}T(G,f) = \T(G).$$    
    Let $|S|$ denote the current size of the strong edge set. We consider an update successful if it increases the size of the strong edge set. The probability of success is given by $\frac{|\Gamma(S)|}{n} \leq \frac{n-|S|}{n}$. This is because $|\Gamma(S)|$ denotes the neighbours of the set $S$ not already in $S$ and of any one of these vertices is chosen then the size of $S$ increases. We can consider each $Z_i$ as being \textit{stochastically dominated} by a geometric random variable with success probability $\frac{n-|S|}{n} = \frac{n-(i-1)}{n}$. Therefore the expectation of this geometric random variable lower bounds $\mathbb{E}[Z_i]$. Then we have
    $$\frac{n}{n-(i-1)} \leq \mathbb{E}[Z_i].$$
    Therefore
    $$\sum_{i=3}^n \frac{n}{n-{i-1}} \leq \sum_{i=3}^n \mathbb{E}[Z_i] = \E[Z] $$
    We can now evaluate the sum.
    \begin{align*}
        & \sum_{i=3}^n \frac{n}{n-{i-1}}  = n \sum_{i=3}^n \frac{1}{n-{i-1}} && \\ 
        & = n \sum_{j=1}^{n-2} \frac{1}{j} && \\ 
         & = n H_{n-2} && \text{} \\ 
         & = \Omega(n\log n) && \text{Lemma \ref{harmonic_num}} \\ 
    \end{align*}
    Where $H_n$ is the $n$-th harmonic number. \newline

    \par Now we show $\T(G) =\bigO(n^2)$. Let $f$ be any valuation of $G$. In the worst case, the update process would need to first produce a strong edge. Further, in the worst case the size of the strong edge set would increase by at most $0$ or $1$ when the correct vertex is chosen. We will use these two assumptions on the worst case to show the result.
    \par Let $X$ be the number of rounds until a strong edge is formed. For any given round $i$ the probability $p_i$ that a strong edge forms is lower bounded by $\frac{n}{\delta}$ where $\delta \geq 1$ is the minimum degree of the graph $G$. If we consider $X'$ a geometric random variable with success probability $\frac{\delta}{n}$, then $X'$ is stochastically bounded by $X$. Therefore
    $$\E[X] \leq \E[X'] = \frac{n}{\delta}.$$
    \par Let $Z$ be the number of rounds until $S = V$ given that there exists exactly one strong edge in the graph $G$. In the worst case, a successful update would only increase the size of $S$ by $1$ (there exist graphs and valuations $|S|$ can change from $|S| = 0$ to $|S| = n$ in one round. Let $Z_i$ denote the number of rounds until $|S| = i$ given that $|S| = i-1$. Then $Z = \sum_{i=3}^n Z_i$. Further, the probability of success is at least $\frac{|\Gamma(S)|}{n} \geq \frac{1}{n}$ since the graph is connected. Therefore each $Z_i$ stochastically dominates a random variable $Z_i'$ which is geometrically distributed with success probability $\frac{1}{n}$ and
    $$\E[Z] = \sum_{i=3}^n \E[Z_i] \leq \sum_{i=3}^n \E[Z_i']  =\sum_{i=3}^n n = n(n-2) $$
    Then letting $Y$ be a random variable denoting the number of rounds until $S = V$, we have that
    $$\E[Y] = \E[X + Z] = \E[X] + \E[Z] \leq \frac{n}{\delta} + n(n-2) \leq n + n^2 - 2n \leq n(n-1) \leq n^2$$
    Therefore $\E[Y] = \bigO(n^2)$ and since $f$ was an arbitrary valuation we have $\T(G) = \bigO(n^2)$.
\end{proof}

We have gained the following insights. Phase 1 has a dependence on the minimum degree $\delta$. Phase 2 seems to be the dominating factor of convergence in since Phase 1 is expected to finish in $n$ rounds in the worst case while Phase 2 is expected to take at least $n\log n$ rounds. Further, Phase 2 depends on $|\Gamma(S)|$ where $S$ is the strong edge set.

\subsection{Tightness of Bounds}
We show there exist graphs that obtain the bounds $\bigO(n \log n)$ and $\Omega(n^2)$.

\begin{lemma}\label{complete_graph_time}
    The complete graph $K_n$ has convergence time $\T(K_n) = \Theta(n \log n)$.
\end{lemma}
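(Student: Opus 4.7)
The plan is essentially free from Theorem~\ref{asymptotics_G}: the lower bound $\mathcal{T}(K_n) = \Omega(n \log n)$ is immediate, since the valuation $f'$ (two adjacent vertices with value $2$, all other vertices with value $1$) constructed in that proof exists in $K_n$, giving $T(K_n, f') = \Omega(n \log n)$ at once. Hence I only need to establish the matching upper bound $\mathcal{T}(K_n) = O(n \log n)$ for an arbitrary starting valuation $f_0$.

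I would split the trajectory into the two phases identified after Theorem~\ref{asymptotics_G}. For Phase 1, I would argue that a strong edge is guaranteed after \emph{at most one} round. If $f_0$ already has two vertices at the maximum $M_0$, they are adjacent in $K_n$ and we already have a strong edge. Otherwise the maximum is attained at a unique vertex $u$, and in the next round either (i) some vertex $v \neq u$ is chosen, in which case $v$'s new value is $\max_{w \sim v} f_0(w) = M_0$ (because $u \in \Gamma(v)$), producing the strong edge $(u,v)$, or (ii) $u$ itself is chosen, in which case its new value becomes $M' := \max_{w \sim u} f_0(w)$, and by definition some neighbour $w$ of $u$ already has value $M'$, so $u$ and $w$ share the new maximum, producing a strong edge. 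Thus Phase 1 deterministically contributes $O(1)$ rounds.

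For Phase 2, once $|S_t| \geq 2$, completeness of $K_n$ forces every vertex of $V \setminus S_t$ to be adjacent to some vertex of $S_t$ at value $M_t$, while picking any vertex already in $S_t$ leaves $S_t$ unchanged (its maximum neighbour still lies in $S_t$ at value $M_t$, so its value does not drop). Hence $|S_{t+1}| = |S_t| + 1$ precisely when one of the $n - |S_t|$ outside vertices is selected. Letting $Z_i$ denote the number of rounds needed to go from $|S| = i-1$ to $|S| = i$, each $Z_i$ is stochastically dominated by a geometric random variable of success probability $(n - i + 1)/n$ via Theorem~\ref{coupling_proof}, so
\begin{equation*}
\mathbb{E}\!\left[\sum_{i=3}^{n} Z_i\right] \;\leq\; \sum_{i=3}^{n} \frac{n}{n - i + 1} \;=\; n\, H_{n-2} \;=\; O(n \log n),
\end{equation*}
using the harmonic estimate invoked in the proof of Theorem~\ref{asymptotics_G}. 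Summing the two phases gives $T(K_n, f_0) = O(n \log n)$ uniformly in $f_0$, hence $\mathcal{T}(K_n) = O(n \log n)$, which together with the free lower bound yields $\Theta(n \log n)$. There is no serious obstacle; the only mild subtlety is in Phase 1, where one must notice that even picking the unique current maximum vertex does not waste a round—so Phase 1 finishes in a single deterministic step rather than as a geometric random variable, and the coupon-collector-type sum in Phase 2 is the dominating contribution.
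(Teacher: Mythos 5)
Your proposal is correct and follows essentially the same route as the paper: the lower bound is imported from Theorem~\ref{asymptotics_G}, Phase 1 is observed to finish deterministically in at most one round (your case analysis of choosing a non-maximum vertex versus the unique maximum vertex is just a more explicit version of the paper's one-line justification), and Phase 2 is the same coupon-collector sum $\sum_{i=3}^{n} \frac{n}{n-(i-1)} = nH_{n-2} = \bigO(n\log n)$. No gaps.
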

The proof is essentially the same idea as used in Theorem \ref{asymptotics_G}.
\begin{proof}
    $\T(K_n) = \Omega(n \log n)$ by Theorem $\ref{asymptotics_G}$. Then it suffices to show $\T(K_n) = \max_{f \in [\mathcal{F}]} T(K_n,f) = \bigO(n \log n)$. 
    \par Let $f \in [\mathcal{F}]$ be any valuation of $K_n$. Again we analyse Phase 1 and Phase 2. Let $X$ denote a random variable that is the number of rounds until a strong edge is formed. We have that either $\mathbb{P}(X = 0) = 1$ or $\mathbb{P}(X = 1) = 1$. The former occurs when a strong edge already exists in the valuation $f$. The latter is because we are guaranteed to either choose a neighbour of the current maximum vertex or otherwise we choose the current maximum vertex and both of these result in a strong edge. Therefore $\E[X] \leq 1$. 
    \par Now we analyse phase 2. Let $Z$ be the number of rounds until we reach an absorbing state. The size of the strong edge set $S$ can increase by at most $1$ in each round. This is because for a complete graph, all vertices with the maximum value are in $S$ and we only update one vertex per round. Let $Z_i$ be the random variable denoting the number of rounds until $|S| = i$ given $|S| = i-1$ we have that $Z = \sum_{i=3}^{n} Z_i$. Since $K_n$ is the complete graph we have that the probability of a successful update is exactly $\frac{n-|S|}{n}$. So each $Z_i$ is a geometric random variable with success probability $\frac{n-|S|}{n} = \frac{n-(i-1)}{n}$. The expectation is then
    $$\E[Z] = \sum_{i=3}^n \E[Z_i] = \sum_{i=3}^n \frac{n}{n-(i-1)} = n\sum_{j = 1}^{n-2} \frac{1}{j} = nH_{n-2} = \bigO(n \log n)$$
    The last equality is by Lemma $\ref{harmonic_num}$. Letting $Y = X + Z$ be the number of rounds until we reach a constant valuation state we have
    $$T(K_n,f) = \E[Y] = \E[X] + \E[Z] \leq 1 + \E[Z] = 1 + \bigO(n \log n) = \bigO(n \log n)$$
\end{proof}

\begin{lemma}\label{path_graph_time}
    There exists a valuation $f$ of the path graph $P_n$ such that $T(P_n,f) = \Omega(n^2)$. Therefore $\T(P_n) = \Theta(n^2)$
\end{lemma}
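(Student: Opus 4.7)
The plan is to exhibit an initial valuation on $P_n$ whose convergence requires $\Omega(n^2)$ rounds in expectation. Label the vertices of $P_n$ along the path as $v_1, v_2, \ldots, v_n$, and define $f$ by $f(v_1) = f(v_2) = 2$ and $f(v_i) = 1$ for all $i \geq 3$. The edge $(v_1, v_2)$ is strong under $f$, so by Lemma~\ref{potential_func} the strong edge set $S_t$ is monotonically non-decreasing in $t$ and always contains $\{v_1, v_2\}$. In particular the maximum value $M_t = 2$ is preserved for all $t$, and convergence corresponds to reaching the constant valuation $f_{con} \equiv 2$, i.e., $|S_t| = n$.

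The central structural observation is that $S_t$ is always a prefix of the path, namely $S_t = \{v_1, \ldots, v_{k_t}\}$ for some $k_t \in \{2, 3, \ldots, n\}$, and $k_{t+1} - k_t \in \{0, 1\}$. I would prove this by induction on $t$, casing on the vertex $v'$ chosen in round $t+1$: if $v' \in S_t$ then $v'$ has a neighbour of value $2$ so no value changes; if $v' = v_{k_t + 1}$ then its neighbours are $v_{k_t}$ (value $2$) and $v_{k_t + 2}$ (value $1$, if it exists), so $v'$ updates to $2$ and joins $S_t$, extending it to $\{v_1, \ldots, v_{k_t+1}\}$; and if $v' = v_j$ for some $j \geq k_t + 2$ then both neighbours of $v'$ have value $1$, so $v'$ stays at $1$ and no strong edges are created. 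Moreover, no vertex other than $v'$ can join $S_t$ in a single update, because a vertex joins $S$ only when its new neighbour (the updated $v'$) carries the maximum value, and in our valuation the only vertices with value $2$ are those already in $S_t$.

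Consequently, the only update enlarging $|S|$ from $k$ to $k+1$ is the one selecting $v_{k+1}$, which happens with probability exactly $1/n$. Letting $Z_i$ denote the waiting time for $|S|$ to first reach $i$ after it reaches $i-1$, each $Z_i$ is geometrically distributed with parameter $1/n$, so $\E[Z_i] = n$. Letting $Y$ count the rounds until convergence, we have $Y = \sum_{i=3}^n Z_i$, and by linearity of expectation
\[
T(P_n, f) = \E[Y] = \sum_{i=3}^n n = n(n-2) = \Omega(n^2).
\]
Therefore $\T(P_n) \geq T(P_n, f) = \Omega(n^2)$, and together with the upper bound $\T(P_n) = \bigO(n^2)$ from Theorem~\ref{asymptotics_G} this yields $\T(P_n) = \Theta(n^2)$.

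The main obstacle is the structural claim that $S_t$ remains a contiguous prefix and can only grow by a single vertex per round. In a general graph an update can enlarge $|S|$ by more than one if the chosen vertex sits between two pre-existing isolated maxima; ruling this out here relies critically on the bounded degree of $P_n$ and on the contiguous block structure of the starting valuation, which by induction persists throughout the process. Once this is established, the remaining computation reduces to a clean sum of geometric expectations.
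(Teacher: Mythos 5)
Your proof is correct and follows essentially the same route as the paper: the same initial valuation (value $2$ on an endpoint and its neighbour, $1$ elsewhere), the same decomposition $Z = \sum_{i=3}^{n} Z_i$ into geometric waiting times with success probability $1/n$, and the same sum $n(n-2) = \Omega(n^2)$. The only difference is that you explicitly prove by induction the structural fact that $S_t$ stays a contiguous prefix with $|\Gamma(S_t)| = 1$, which the paper simply asserts; this is a welcome strengthening but not a different argument.
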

\begin{proof}
    $\T(P_n) \in \bigO(n^2)$ by Theorem \ref{asymptotics_G} so it suffices to show $\T(P_n) \in \Omega(n^2)$. Let $f'$ be the valuation given in Figure \ref{fig:path_val} that assigns the value $2$ to an endpoint and its neighbour and the value $1$ to every other vertex.
    \begin{figure}[h!]
    \centering
    \begin{tikzpicture}[
every edge/.style = {draw=black,very thick},
 vrtx/.style args = {#1/#2}{%
      circle, draw, thick, fill=white,
      minimum size=8mm, label=#1:#2}
                    ]
\node(A) [vrtx= center/2] at (0, 0) {};
\node(B) [vrtx= center/2] at (2, 0) {};
\node(C) [vrtx= center/1] at (4, 0) {};
\node(D) [vrtx= center/1] at (6, 0) {};
\node(E) [vrtx= center/\dots] at (8, 0) {};
\node(F) [vrtx= center/1] at (10, 0) {};
\node(G) [vrtx= center/1] at (12, 0) {};

\path   (A) edge (B)
        (B) edge (C)
        (C) edge (D)
        (D) edge (E)
        (E) edge (F)
        (F) edge (G);
\end{tikzpicture}
    \caption{Valuation of $P_n$ achieving $\Omega(n^2)$ convergence time}
    \label{fig:path_val}
\end{figure}

    Then $T(P_n,f') \leq \T(P_n) = \max_{f \in [\mathcal{F}]} T(P_n,f)$. This valuation already has a strong edge so we are in phase 2. Let $Z$ be the number of rounds until $S = V$. Let $Z_i$ denote the number of rounds until $|S| = i$ given $|S| = i-1$. Then $Z_i$ is a geometric random variable with success probability $ \frac{\Gamma(S)}{n}=\frac{1}{n}$ since $\Gamma(S) = 1$ for $2 \leq |S| < n$. Then
    $$T(P_n,f') = \E[Z] = \sum_{i=3}^n \E[Z_i] = \sum_{i=3}^n n = n(n-2).$$
    Note that for $n\geq 4$, $\frac{1}{2}n^2\leq n(n-2) = T(P_n,f') \leq \T(P_n)$. Therefore $\T(P_n) = \Omega(n^2)$.
    
\end{proof}

\subsection{Bounds in Terms of Vertex Expansion}

It has already been remarked that Phase 2 seems to dominate the convergence time and this depends on the value of $|\Gamma(S)|$ as $|S|$ increases. We note that $K_n$ which converges quickly under this model is a highly connected graph while $P_n$ which converges slowly is not well connected. These insights prompt us to study the convergence time in terms of the vertex expansion $\phi$ given in Definition \ref{vertex_exp_def_undirected}.
\par

\begin{lemma}\label{vertex_expansion_bound}
    Let $G$ be a strongly connected graph on $n \geq 2$ vertices. Then
    $$0<\frac{2}{n} \leq \phi_{out}(G) \leq 5$$
    So $\phi_{out}(G)$ is bounded by constants.
\end{lemma}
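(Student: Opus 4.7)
The plan is to prove the two inequalities separately: the lower bound using strong connectivity, and the upper bound by exhibiting an explicit half-size witness set.

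For the lower bound, I would take an arbitrary $A \subseteq V$ with $0 < |A| \leq n/2$. Since $n \geq 2$, this forces $|A| < n$, so $V \setminus A$ is non-empty. Strong connectivity then guarantees at least one directed edge from some vertex of $A$ to some vertex of $V \setminus A$ --- otherwise no vertex outside $A$ would be reachable from $A$ inside $G$, contradicting the strongly connected hypothesis. Hence $|\Gamma(A)| \geq 1$, which gives
\[
    \frac{|\Gamma(A)|}{|A|} \;\geq\; \frac{1}{|A|} \;\geq\; \frac{1}{n/2} \;=\; \frac{2}{n}.
\]
Passing to the minimum over admissible $A$ preserves the inequality, so $\phi_{out}(G) \geq 2/n$.

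For the upper bound, it suffices to exhibit a single witness set whose boundary ratio is small. I would pick any $A^{*} \subseteq V$ with $|A^{*}| = \lfloor n/2 \rfloor$; this is admissible in the definition of $\phi_{out}$. By construction $\Gamma(A^{*}) \subseteq V \setminus A^{*}$, so
\[
    |\Gamma(A^{*})| \;\leq\; n - \lfloor n/2 \rfloor \;=\; \lceil n/2 \rceil,
\]
and therefore $|\Gamma(A^{*})|/|A^{*}| \leq \lceil n/2 \rceil / \lfloor n/2 \rfloor$, which is at most $2$ for all $n \geq 2$ (with equality only at $n=3$). The minimum in the definition of $\phi_{out}$ can only be smaller, hence $\phi_{out}(G) \leq 2 \leq 5$.

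I do not expect any real obstacle here; each side is a short counting argument, and strong connectivity enters only through the single fact that every non-trivial cut has at least one out-edge. The constant $5$ in the statement is loose --- the half-partition witness above in fact gives the stronger bound $\phi_{out}(G) \leq 2$ --- but the coarser constant is harmless for later use of the lemma.
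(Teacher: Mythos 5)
Your proposal is correct and follows essentially the same route as the paper: strong connectivity forces $|\Gamma(A)|\geq 1$ for every admissible $A$, giving the lower bound, and a witness set of size $\lfloor n/2\rfloor$ gives the upper bound. Your arithmetic for the witness set is in fact cleaner than the paper's (which loosely substitutes $\lfloor n/2\rfloor$ by $\tfrac{n}{2}-1$ and lands on $5$), and yields the sharper constant $2$, which of course still implies the stated bound.
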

\begin{proof}
    Please see the Appendix at \ref{vertex_bound_proof}.
\end{proof}

\begin{theorem}\label{asymptotics_vertex}
    The convergence time of $G$ with vertex expansion $\phi$ is in $\bigO(\frac{n}{\phi}\log n)$.
\end{theorem}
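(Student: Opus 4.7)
The plan is to split the convergence time into the same two phases used in the proof of Theorem \ref{asymptotics_G}: Phase 1 is the wait for the first strong edge to appear, and Phase 2 is the subsequent growth of the strong edge set $S_t$ from size $2$ up to $n$. For Phase 1, the proof of Theorem \ref{asymptotics_G} already gives an expected bound of $\bigO(n/\delta) = \bigO(n)$ rounds, which is subsumed by $\bigO((n/\phi)\log n)$ because Lemma \ref{vertex_expansion_bound} forces $\phi = \bigO(1)$. The substantive work lies in Phase 2.

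During Phase 2, the current maximum $M$ is frozen and $g_t = |S_t|$ is monotonically non-decreasing by Lemma \ref{potential_func}; moreover picking any vertex $v \in \Gamma(S_t)$ and updating it immediately pushes $v$ into $S_{t+1}$, so the per-round success probability is at least $|\Gamma(S_t)|/n$. Writing $W_k$ for the number of rounds needed to go from $|S_t|=k$ to $|S_t|=k+1$, Theorem \ref{coupling_proof} lets me stochastically dominate $W_k$ by a geometric random variable, so $\E[W_k] \le n/|\Gamma(S_k)|$. Hence the Phase 2 contribution is at most $\sum_{k=2}^{n-1} n/|\Gamma(S_k)|$, and what remains is a good lower bound on $|\Gamma(S_k)|$ for every value of $k$.

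I would split this sum at $k = n/2$. For $2 \le k \le n/2$ the definition of vertex expansion applied directly to $S_k$ yields $|\Gamma(S_k)| \ge \phi k$, so this part of the sum evaluates to $(n/\phi)\sum_{k=2}^{n/2} 1/k = \bigO((n/\phi)\log n)$ via the harmonic-number estimate used in Lemma \ref{complete_graph_time}. The delicate case is $k > n/2$, where $S_k$ is too large for the vertex expansion definition to apply to it. The key trick is to invoke vertex expansion on a carefully chosen subset of the small complement $T = V \setminus S_k$. Let $T' = T \setminus \Gamma(S_k)$, the vertices of $T$ with no neighbour in $S_k$. Any neighbour of a vertex of $T'$ must lie in $T$ (or else that vertex of $T'$ would itself be in $\Gamma(S_k)$) and, being adjacent to $T'$, lies outside $T'$, so $\Gamma(T') \subseteq \Gamma(S_k)$. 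Because $|T'| \le |T| < n/2$, vertex expansion applied to $T'$ (provided $T' \ne \emptyset$) gives $|\Gamma(T')| \ge \phi|T'| = \phi(n-k-|\Gamma(S_k)|)$, and rearranging yields
\[
|\Gamma(S_k)| \;\ge\; \frac{\phi(n-k)}{1+\phi}.
\]
The edge case $T' = \emptyset$ is trivial since then $\Gamma(S_k) = T$ and $|\Gamma(S_k)| = n-k$, which exceeds $\phi(n-k)/(1+\phi)$. Using $1+\phi = \bigO(1)$ from Lemma \ref{vertex_expansion_bound}, the tail sum becomes $\sum_{k=n/2+1}^{n-1} \bigO(n/(\phi(n-k)))$, which after the substitution $j = n-k$ is again a harmonic-type sum of size $\bigO((n/\phi)\log n)$.

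The main obstacle is precisely the $k > n/2$ regime, since the direct inequality $|\Gamma(A)| \ge \phi|A|$ genuinely fails for $|A| > n/2$, and without the $T'$ construction a naive bound of $|\Gamma(S_k)| \ge 1$ would push the tail contribution up to $\bigO(n^2)$ rather than the target $\bigO((n/\phi)\log n)$. The $T'$ trick is what converts a lower bound on the inner boundary of the small complement into a usable lower bound on the outer boundary $\Gamma(S_k)$ that the asynchronous maximum model is actually sensitive to. Adding together Phase 1 with both sub-ranges of Phase 2 then delivers $\T(G) = \bigO((n/\phi)\log n)$, as required.
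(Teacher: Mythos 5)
Your proposal is correct, and while it follows the paper's overall skeleton (Phase 1 bounded by $\bigO(n/\delta)$, Phase 2 handled by stochastic domination with geometric variables and split at $|S| = n/2$, with the small-$|S|$ range handled by $|\Gamma(S)| \geq \phi|S|$ and a harmonic sum), it diverges genuinely in the regime $|S| > n/2$. The paper sidesteps bounding $|\Gamma(S)|$ there: it partitions $V\setminus S$ into $W = \Gamma(S)$ and $U = V \setminus (S \cup W)$, first shrinks $U$ to empty using expansion applied to $U$ (which is small enough), and then shrinks $W$ using only the trivial success probability $|W|/n$, paying a factor that is absorbed because $\phi = \bigO(1)$. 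You instead prove a direct lower bound on the quantity the process actually depends on: setting $T = V\setminus S_k$ and $T' = T \setminus \Gamma(S_k)$, your observation that $\Gamma(T') \subseteq \Gamma(S_k)$ is sound (a neighbour of $T'$ cannot lie in $S_k$, else that vertex of $T'$ would be in $\Gamma(S_k)$), and applying expansion to $T'$ (legitimate since $|T'| \leq n-k < n/2$) yields $|\Gamma(S_k)| \geq \phi(n-k)/(1+\phi)$, with the $T' = \emptyset$ case trivially stronger. This converts the tail into a single harmonic-type sum of size $\bigO(\frac{n}{\phi}\log n)$. Your route is arguably cleaner: it avoids the paper's two-stage shrinking argument (whose claim that $|U|$ decreases only when a vertex of $\Gamma(U)$ is chosen is somewhat loosely justified) and keeps the whole of Phase 2 as one uniform "bound $|\Gamma(S_k)|$ from below" computation; what it costs is the extra factor $1+\phi$, which is harmless by Lemma \ref{vertex_expansion_bound}. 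Both arguments rely on $\phi = \bigO(1)$ to absorb constants, and both give the same final bound.
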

The above theorem characterises an upper bound on $\T(G)$ in terms of the property $\phi$. This is interesting as now we potentially have a better upper bound than $\bigO(n^2)$ for a large class of graphs. In particular, graphs with nearly constant vertex expansion $\phi$ should converge quickly.

\begin{proof}
    Let $f$ be the initial valuation of $G$ that achieves the maximum expected number of rounds. So $T(G,f) = \T(G).$ We show that $T(G,f) \in \bigO(\frac{n}{\phi}\log n)$
    We consider three sections in this proof depending on the size of the strong edge set $|S|$. In each section we bound the expected number of rounds for the strong edge set to be in the specified size interval.
    \par \textbf{Section 1:} $0 \leq |S| < 2.$
    \par
    Without loss of generality we can assume the valuation $f$ has no strong edges so $|S| = 0$. Let $X$ be a random variable denoting the number of rounds until a strong edge is formed. For any given round, the probability that a strong edge is formed is at least $\frac{\delta}{n}$. Therefore $X$ stochastically dominates $X'$ where $X'$ is a geometric random variable with success probability $p = \frac{\delta}{n}$. Therefore $\E[X] \leq \E[X'] = \frac{n}{\delta}$.
    
    \par \textbf{Section 2:} $2 \leq |S| \leq \lfloor \frac{n}{2} \rfloor$. After a strong edge has formed we can leverage the definition of vertex expansion up until $|S| \leq \frac{n}{2}$. Let $Z$ be a random variable denoting the number of rounds until $|S| > \frac{n}{2}$ given $|S| = 2$. Let $Z_i$ be a random variable denoting the number of rounds until $|S| = i$ given that $|S| = i-1$. In the worst case we assume that for $|S|\geq 2$, $|S|$ increases by at most one on a successful update.
    \par Assume $|S| = i-1$ where $2 \leq i-1 \leq \lfloor \frac{n}{2}\rfloor$. We call an update successful if it increases $|S|$. For any given round the probability of a successful update is $\frac{\Gamma(S)}{n}$. However by definition of $\phi$ we have that $\phi|S| \leq \Gamma(S)$. Therefore $Z_i$ stochastically dominates a geometric random variable $Z_i'$ which has success probability $\frac{\phi|S|}{n} = \frac{\phi(i-1)}{n}$. Then $\E[Z_i] \leq \E[Z_i'] = \frac{n}{\phi(i-1)}$ and we have
    $$\E[Z] = \sum_{i=3}^{\lfloor \frac{n}{2} \rfloor+1} \E[Z_i] \leq \frac{n}{\phi} \sum_{i=3}^{{\lfloor \frac{n}{2} \rfloor}+1}  \frac{1}{i-1} = \frac{n}{\phi} \sum_{i=2}^{\lfloor \frac{n}{2} \rfloor} \frac{1}{i} = \frac{n}{\phi}(H_{\lfloor \frac{n}{2} \rfloor} - 1)= \bigO(\frac{n}{\phi}\log n).$$
    So $\E[Z] = \bigO(\frac{n}{\phi}\log n)$.
    \par \textbf{Section 3:} $\lfloor \frac{n}{2} \rfloor + 1 \leq |S| < n$. Please refer to Figure $\ref{fig:phase3}$. We partition $G$ into the sets $S, W = \Gamma(S)$ and $U = V \setminus (S \cup \Gamma(S))$.

    \begin{figure}[h!]
        \centering
        \includegraphics[scale = 0.4]{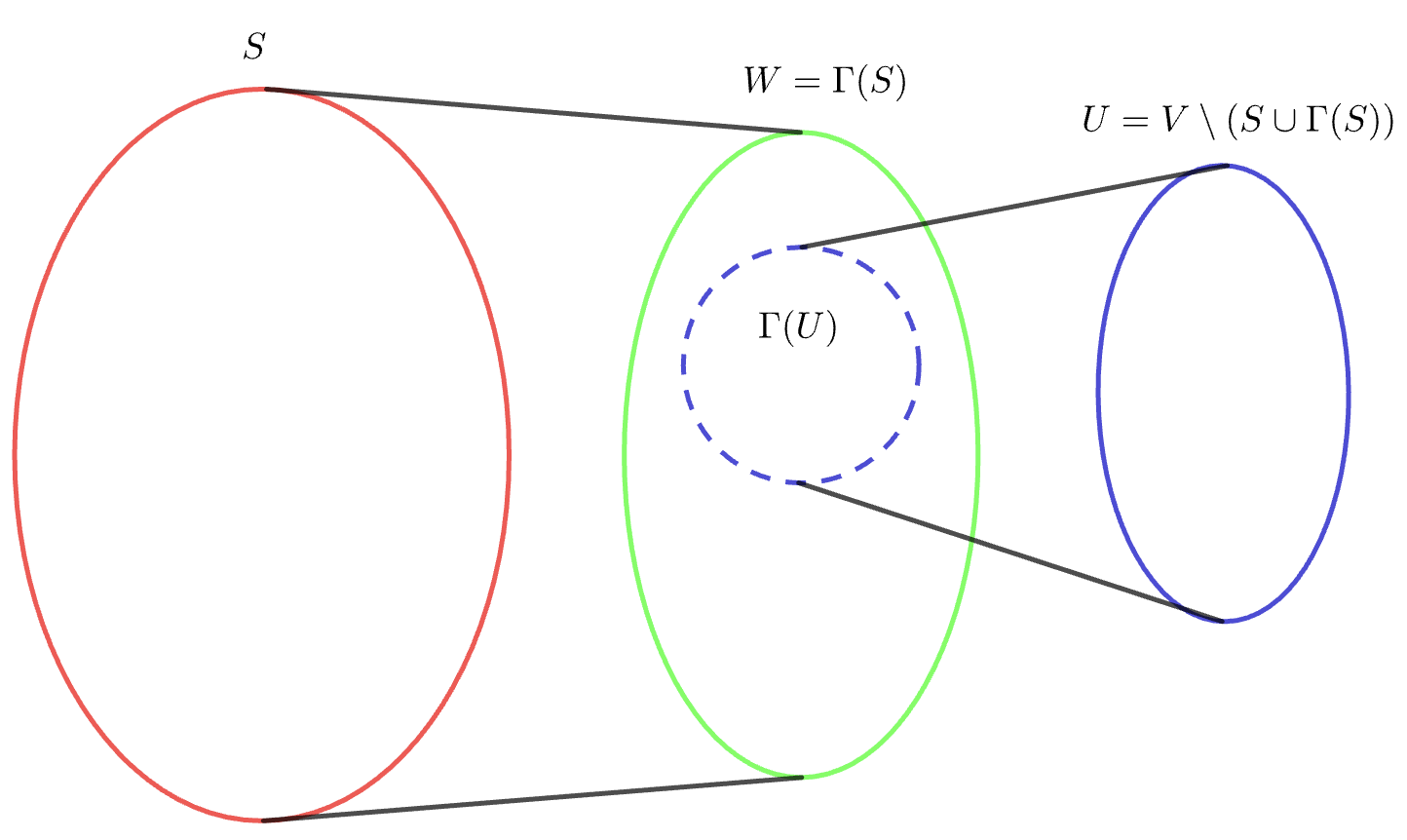}
        \caption{Section 3: Vertices of $G$ partitioned into sets}
        \label{fig:phase3}
    \end{figure}

In the worst case an update will be successful only when a vertex $v \in W$ is selected. The set $\Gamma(U)$ is easier to control than $W$. in terms of the parameter $\phi$. We first consider the number of rounds to first shrink $|U|$ to $0$ and then the number of rounds to shrink $|W|$ to $0$ (both in the worst case).

\par Let $Y$ be a random variable denoting the number of rounds until $|U| = 0$. Let $Y_i$ be the number of rounds until $|U| = i$ given $|U| = i + 1$. We make the assumption that in the worst case $|U|$ can decrease by at most one during an update. Let $|U| = m \leq \lfloor \frac{n}{2} \rfloor -2 $ be the initial size of $U$. Then $Y = \sum_{i = 0}^{m} Y_i$. In order to decrease $|U|$ by one, we must choose a vertex in $\Gamma(U)$. This occurs with probability $\frac{\Gamma(U)}{n}$. Further, $\phi|U| \leq \Gamma(U)$. This implies $\frac{\Gamma(U)}{n} \geq \frac{\phi |U|}{n}$. Therefore each $Y_i$ stochastically dominates a geometric random variable $Y_i'$ with success probability $\frac{\phi |U|}{n} = \frac{\phi(i+1)}{n}$.
We have that

$$\E[Y] = \sum_{i = 0}^{m-1} \E[Y_i] \leq \sum_{i = 0}^{m-1} \frac{n}{\phi(i+1)}$$
Now using that $m \leq \lfloor \frac{n}{2} \rfloor -2  $ gives

$$\leq \sum_{i = 0}^{\lfloor \frac{n}{2} \rfloor -3} \frac{n}{\phi(i+1)} = \frac{n}{\phi} \sum_{i = 0}^{\lfloor \frac{n}{2} \rfloor -3} \frac{1}{i+1} \leq \frac{n}{\phi}H_{\lfloor \frac{n}{2} \rfloor } = \bigO(\frac{n}{\phi}\log n).$$
So $\E[Y] = \bigO(\frac{n}{\phi}\log n)$.

\par Now we assume $|U| = 0$ and consider the number of rounds to shrink $|W| = |\Gamma(S)|$ to $0$. Let $|W| = p \leq \lfloor \frac{n}{2} \rfloor - 1$ be the initial size of $W$. Since $|U| = 0$ then the vertices of $G$ are either in $S$ or in $W$. Let $R$ be a random variable denoting the number of rounds until $|W| = 0$ given $|W| = p$. Let $R_i$ denote the number of rounds until $|W| = i$ given that $|W| = i+1$. Then $R = \sum_{i = 0}^{p-1} R_i$. Each $R_i$ is a geometric random variable with success probability $\frac{|W|}{n} = \frac{i+1}{n}$. Then the expected number of rounds until a successful update is $\frac{n}{i+1}$. Therefore
$$\E[R] = \sum_{i = 0}^{p-1} \E[R_i] \leq \sum_{i = 0}^{p-1} \frac{n}{i+1}  $$
Now using that $p \leq \lfloor \frac{n}{2} \rfloor - 1$ gives
$$ \leq \sum_{i = 0}^{\lfloor \frac{n}{2} \rfloor - 2} \frac{n}{i+1} \leq n \sum_{j=1}^n \frac{1}{i} = n H_n = \bigO(n\log n) = \bigO(\frac{n}{\phi}\log n). $$

\par The last equality is by Lemma \ref{vertex_expansion_bound} since we can bound $0 < \phi \leq 5$. The expected number of rounds until $|S| = n$ is then

$$\E[X] + \E[Z] + \E[Y] + \E[R] = \frac{n}{\delta} + O(\frac{n}{\phi}\log n) + O(\frac{n}{\phi}\log n) + O(\frac{n}{\phi}\log n) = O(\frac{n}{\phi}\log n).$$

Then we have shown $\T(G) = O(\frac{n}{\phi} \log n)$.
    
\end{proof}

We remark that $K_n$ has constant vertex expansion, so $\T(K_n) = \bigO(n\log n)$ agrees with the bound in Lemma \ref{complete_graph_time}. However the path graph $P_n$ has $\phi \approx \frac{2}{n}$ which gives $\T(P_n) = \bigO(n^2\log n)$. This does not agree with Theorem $\ref{asymptotics_G}$ which implies $\T(P_n) = O(n^2)$. In Section 2 (and 3) of Theorem $\ref{asymptotics_vertex}$ using the term $\phi|S|$ to lower bound $|\Gamma(S)|$ is quite `loose' and this may account for the $\log n$ factor which is gained.

\begin{question}
    Can the bound of $\bigO(\frac{n}{\phi}\log n)$ in Theorem \ref{asymptotics_vertex} be improved as to remove the `extra' $\log n$ factor that appears for certain graphs i.e $P_n$? Potentially a different property to the vertex expansion would need to be used to characterise $\T(G)$. It is interesting to note that a bound of the form $\bigO(\frac{n}{\phi}\log (\phi n))$ would be tight for both the path graph $P_n$ ($\phi(P_n)  \approx \frac{2}{n}$) and the complete graph $K_n$ ($\phi(K_n) \approx 1$). Is this the correct formula and can it be derived?
\end{question}

\subsection{Concentration of Convergence}

So far we have analysed the expectation of the random variable $A$ denoting the number of rounds until $S = V$. However it is beneficial to show there is a small probability that the number of rounds is much larger than $\E[A]$. This amounts to showing that $A$ is `concentrated' around its expectation.

\begin{definition} \cite{whp_cite}
    Let $A_n$ denote some event that depends on $n \in \N$ (for example graphs on $n$ vertices with some property). Let $\mathbb{P}(A_n)$ denote the probability of $A_n$ . We say $A_n$ occurs \textit{with high probability} when
    $$\mathbb{P}(A_n) \geq 1 - n^{-\Omega(1)}$$
    as $n \to \infty$.
\end{definition}

An event occurring with high probability means as $n \to \infty$ its probability tends to $1$.

\begin{lemma}\label{var_lemma}
    Suppose $X'$ and $X$ are random variables and $X'$ is stochastically dominated by $X$. Then
    $$Var[X] \leq Var[X'] + \E[X']^2.$$
\end{lemma}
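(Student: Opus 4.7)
The plan is to apply Theorem \ref{stoc_dom} directly with a quadratic function and then reorganise the variance identity. The key observation is that although the stated inequality looks asymmetric (comparing $\mathrm{Var}[X]$ against $\mathrm{Var}[X'] + \E[X']^2 = \E[X'^2]$), it becomes transparent once we expand the variance: since $\E[X]^2 \geq 0$, we immediately get $\mathrm{Var}[X] = \E[X^2] - \E[X]^2 \leq \E[X^2]$, so it suffices to compare the second moments of $X$ and $X'$.

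First I would reduce the claim to showing $\E[X^2] \leq \E[X'^2]$. This is exactly the conclusion of Theorem \ref{stoc_dom} applied to the function $u(x) = x^2$, provided $u$ is non-decreasing on the relevant range. In every place this lemma is actually invoked in the paper, $X$ and $X'$ will be counting the number of rounds of the asynchronous maximum model, so they are non-negative integer-valued random variables. On $[0,\infty)$ the function $x \mapsto x^2$ is non-decreasing, and one can extend it to all of $\R$ by $u(x) = \max\{0,x\}^2$ without changing $\E[u(X)]$ or $\E[u(X')]$. Applying Theorem \ref{stoc_dom} to this $u$ then gives $\E[X^2] \leq \E[X'^2]$.

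Putting the two steps together yields
\[
\mathrm{Var}[X] = \E[X^2] - \E[X]^2 \leq \E[X^2] \leq \E[X'^2] = \mathrm{Var}[X'] + \E[X']^2,
\]
which is the desired inequality. The only subtlety worth flagging explicitly is the non-negativity hypothesis needed to validate the use of $u(x) = x^2$ in Theorem \ref{stoc_dom}; I would either include a one-line remark that the lemma is applied to non-negative random variables throughout the paper, or add the hypothesis $X, X' \geq 0$ to the lemma statement itself. No concentration tool or second-moment machinery beyond this is required, and there is no genuine obstacle; the statement is essentially a cosmetic rearrangement of the monotonicity of second moments under stochastic domination.
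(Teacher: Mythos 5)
Your proposal is correct and follows essentially the same route as the paper: bound $\mathrm{Var}[X] \leq \E[X^2]$, then use stochastic domination with $u(x)=x^2$ to get $\E[X^2]\leq\E[X'^2]=\mathrm{Var}[X']+\E[X']^2$. You are in fact slightly more careful than the paper, which asserts that $x\mapsto x^2$ is non-decreasing without restricting to non-negative random variables; your explicit handling of that point (via $u(x)=\max\{0,x\}^2$ or an added non-negativity hypothesis) is a worthwhile refinement.
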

\begin{proof}
    Since $X'$ is stochastically dominated by $X$ and the map $x \to x^2$ is non-decreasing, then $\E[X^2] \leq \E[X'^2]$. This gives $Var[X] = \E[X^2] - \E[X]^2 \leq \E[X'^2]$. Therefore
    $$Var[X] - \E[X']^2 \leq \E[X'^2]-\E[X']^2  = Var[X']$$
    which implies the result.
\end{proof}

\begin{theorem}\label{conv_time_whp}
    Let $(G,f)$ be a graph and let $A$ be a random variable denoting the number of rounds until a constant valuation state is reached. Let $a = (\frac{n}{\phi}\log n) g(n)$ where $g(n)$ is any function such that $\frac{1}{g^2(n)} \to 0$ as $n \to \infty$. 
    \par Then the event $\{|A - \E[A]| \leq a\}$ occurs with high probability. That is the number of rounds until convergence is in $\bigO((\frac{n}{\phi}\log n)g(n))$ with high probability.
\end{theorem}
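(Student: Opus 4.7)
The plan is to apply Chebyshev's inequality to $A$, using Lemma \ref{var_lemma} to reduce the variance of $A$ to moment estimates for a dominating sum of independent geometric random variables. The decomposition of $A$ obtained in the proof of Theorem \ref{asymptotics_vertex} is exactly what is needed: in each of the three sections of that proof, $A$ is stochastically bounded by a sum of waiting times, each already shown to be dominated by a geometric random variable of known parameter.

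First I would revisit that decomposition and, on a single probability space, construct a dominating variable
\[
B \;=\; X' \,+\, \sum_i Z'_i \,+\, \sum_i Y'_i \,+\, \sum_i R'_i,
\]
whose summands are mutually independent geometrics with the worst-case success probabilities appearing in Sections 1--3 of the proof of Theorem \ref{asymptotics_vertex}. Iterating the coupling idea underlying Theorem \ref{coupling_proof} phase by phase, one can arrange that $A \le B$ almost surely while keeping the summands of $B$ independent; in the paper's convention this says $B$ is stochastically dominated by $A$, so Lemma \ref{var_lemma} applies with $X = A$ and $X' = B$.

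Using independence of the summands of $B$ together with $\mathrm{Var}(\mathrm{Geom}(p)) \le 1/p^2$, the Section 2 contribution to $\mathrm{Var}(B)$ is
\[
\sum_{i=3}^{\lfloor n/2 \rfloor + 1} \frac{n^2}{\phi^2(i-1)^2} \;=\; \frac{n^2}{\phi^2}\sum_{k=2}^{\lfloor n/2 \rfloor} \frac{1}{k^2} \;=\; \bigO\!\left((n/\phi)^2\right),
\]
and the analogous sums for the $Y'_i$ and $R'_i$ terms (using convergence of $\sum 1/k^2$, together with Lemma \ref{vertex_expansion_bound} to absorb any bare $n^2$ from Section 1 and the final part of Section 3 into $(n/\phi)^2$) give $\mathrm{Var}(B) = \bigO((n/\phi)^2)$. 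Combined with $\E[B] = \bigO((n/\phi)\log n)$ from Theorem \ref{asymptotics_vertex}, Lemma \ref{var_lemma} then yields
\[
\mathrm{Var}(A) \;\le\; \mathrm{Var}(B) + \E[B]^2 \;=\; \bigO\!\left((n/\phi)^2 \log^2 n\right).
\]

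Finally, Chebyshev's inequality gives
\[
\mathbb{P}\bigl(|A - \E[A]| \ge a\bigr) \;\le\; \frac{\mathrm{Var}(A)}{a^2} \;=\; \bigO\!\left(\frac{1}{g^2(n)}\right),
\]
which tends to $0$ by the hypothesis on $g$, so the complementary event $\{|A - \E[A]| \le a\}$ occurs with high probability. The main obstacle is the coupling step, since the true per-phase waiting times are not mutually independent (each one's conditional distribution depends on the configuration reached after the preceding phase). Independence of the $Z'_i$, $Y'_i$, $R'_i$ has to be arranged artificially by drawing a fresh geometric with the worst-case parameter for each new phase and checking that the instantaneous success probability of the true process never falls below the value used for the matching geometric, so that $A \le B$ is preserved under the coupling. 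Once this is in place, the variance computation and Chebyshev step are routine.
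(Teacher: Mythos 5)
Your proposal is correct and follows essentially the same route as the paper: dominate $A$ by the sum of independent worst-case geometrics $X' + \sum Z_i' + \sum Y_j' + \sum R_i$ from Theorem \ref{asymptotics_vertex}, apply Lemma \ref{var_lemma} to get $\mathrm{Var}[A] = \bigO(\frac{n^2}{\phi^2}\log^2 n)$, and finish with Chebyshev. Your explicit remark that the per-phase waiting times are not independent and that the coupling must be arranged phase by phase is a point the paper's proof passes over more quickly, but the argument is otherwise the same.
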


The above theorem tells us that for large $n$, if we choose a random initial valuation $f_0$, it is almost guaranteed to take $\bigO((\frac{n}{\phi}\log n)g(n))$ rounds to converge.

\begin{proof}
    The idea will be to apply Chebyshev's inequality to the random variable $A$. We first stochastically dominate $A$ by $A'$ which is a sum of independent random variables and then apply Lemma \ref{var_lemma}. Then we only need to find the asymptotic behaviour of $Var[A'] + \E[A']^2$.
    \par 
    Let the random variables $X,Z,Y,R$ be given as in Theorem \ref{asymptotics_vertex}. Further let $S,U$ and $W$ be subsets of the vertices as given in the theorem. Recall that the random variables denote the following stages of convergence.
    \begin{itemize}
        \item     $X:$ Number of rounds until a strong edge
        \item     $Z:$ Number of rounds where $2 \leq |S| \leq \lfloor \frac{n}{2} \rfloor$.
        \item    $Y:$ Number of rounds for the set $U$ to shrink to $|U| = 0$.
        \item     $R:$ Number of rounds for the set $W$ to shrink to $|W| = 0$.
    \end{itemize}
    Note that $A = X+Z+Y+R$ is the random variable denoting the number of rounds until convergence. This is stochastically dominated by the random variable $A' = X' + \sum_{i=3}^{\lfloor \frac{n}{2} \rfloor }Z_i' + \sum_{j = 0}^{\lfloor \frac{n}{2} \rfloor - 3} Y_j' + \sum_{k = 0}^{\lfloor \frac{n}{2} \rfloor - 2} R_i$ where $X'$, $Z_i', Y_j'$ and $R_i$ are as given in Theorem $\ref{asymptotics_vertex}$. We do this as now $A'$ is a sum of independent random variables. By Lemma \ref{var_lemma} we then have
    \begin{equation}\label{var_a_bound}
        Var[A] \leq Var[A'] + \E[A']^2 = Var[A'] + O(\frac{n^2}{\phi^2}\log^2 n).
    \end{equation}
    Now we want to asymptotically bound 
    \begin{equation}\label{var_sum}
        Var[A'] = Var[X'] + \sum_{i=3}^{\lfloor \frac{n}{2} \rfloor }Var[Z_i'] + \sum_{j = 0}^{\lfloor \frac{n}{2} \rfloor - 3} Var[Y_j'] + \sum_{k = 0}^{\lfloor \frac{n}{2} \rfloor - 2} Var[R_k]
    \end{equation}
    Each of these random variables are geometrically distributed with some success probability $p$ which varies. The variance is given by $\frac{1-p}{p^2}$. We will bound each term in Equation \ref{var_sum} individually.
    \par $X'$ is geometrically distributed with success probability $\frac{\delta}{n}$. Then
    $$Var[X'] = \frac{1-\frac{\delta}{n}}{\frac{\delta^2}{n^2}} = \frac{n(n-\delta)}{\delta^2} = O(n^2) = O(\frac{n^2}{\phi^2}).$$
    \par Each $Z_i'$ is a geometric random variable with success probability $\frac{\phi(i-1)}{n}$. Therefore
    \begin{align*}
        & \sum_{i=3}^{\lfloor \frac{n}{2} \rfloor} Var[Z_i'] = \frac{n}{\phi^2}\sum_{i=3}^{\lfloor \frac{n}{2} \rfloor} \frac{n-\phi(i-1)}{(i-1)^2}  && \text{} \\
        &  = \frac{n}{\phi^2}\sum_{j=3}^{\lfloor \frac{n}{2} \rfloor-1} \frac{n-\phi j}{j^2}  && \text{} \\
        &  = \frac{n}{\phi^2}\left(n\sum_{j=3}^{\lfloor \frac{n}{2} \rfloor-1} \frac{1}{j^2} - \phi \sum_{j=3}^{\lfloor \frac{n}{2} \rfloor-1} \frac{1}{j}  \right) && \text{} \\
        &  \leq \frac{n}{\phi^2}\left(n\sum_{j=3}^{\lfloor \frac{n}{2} \rfloor-1} \frac{1}{j^2}  \right) && \text{} \\
        &  \leq \frac{n^2}{\phi^2} \cdot \frac{\pi^2}{6}&& \text{} \\
        &  = \bigO(\frac{n^2}{\phi^2})&& \text{} \\
    \end{align*}

\par Each $Y_j'$ is a geometric random variable with success probability $\frac{\phi(j+1)}{n}$. The calculation is almost identical to the previous one to get $\sum_{j=0}^{\lfloor \frac{n}{2} \rfloor-3} Var[Y_j'] = O(\frac{n^2}{\phi^2})$.
\par Each $R_k$ is a geometric random variable with success probability $\frac{k+1}{n}$. Therefore
    \begin{align*}
        & \sum_{k=0}^{\lfloor \frac{n}{2} \rfloor-2} Var[R_k'] = n \sum_{k=0}^{\lfloor \frac{n}{2} \rfloor-2} \frac{n-\phi(k+1)}{(k+1)^2}  && \text{} \\
        &  = n \left(n\sum_{k=0}^{\lfloor \frac{n}{2} \rfloor-2} \frac{1}{(k+1)^2} -  \sum_{k=0}^{\lfloor \frac{n}{2} \rfloor-2} \frac{1}{k+1}  \right) && \text{} \\
        &  \leq  n \left(n\sum_{k=0}^{\lfloor \frac{n}{2} \rfloor-2} \frac{1}{(k+1)^2}\right) && \text{} \\
        &  \leq  n^2 \cdot \frac{\pi^2}{6} && \text{} \\
        &  =  \bigO(n^2) && \text{} \\
        &  =  \bigO(\frac{n^2}{\phi^2}) && \text{} \\
    \end{align*}

Combining these results to Equation \ref{var_a_bound} gives
$$ Var(A) \leq \bigO(\frac{n^2}{\phi^2}) + \bigO(\frac{n^2}{\phi^2}\log^2 n) = \bigO(\frac{n^2}{\phi^2}\log^2 n).$$
The purpose of bounding the variance is to apply Chebyshev's Inequality. For $a > 0$ it gives us that
$$\mathbb{P}(|A - \E[A]| \geq a) \leq \frac{Var[A]}{a^2} \leq \frac{\bigO(\frac{n^2}{\phi^2}\log^2 n)}{a^2} \leq \frac{C\frac{n^2}{\phi^2}\log^2 n}{a^2}$$
for some constant $C$ when $n$ is large. Choose $a = (\frac{n}{\phi}\log n)g(n)$ where $g(n)$ is such that $\frac{1}{g^2(n)} \to 0$ as $n \to \infty$. Then
$$\mathbb{P}(|A - \E[A]| \geq a) \leq \frac{C\frac{n^2}{\phi^2}\log^2 n}{a^2} \leq \frac{C}{g^2(n)}.$$
Therefore $\mathbb{P}(|A - \E[A]| \geq a)  \to 0$ as $n\to \infty$.
\par 
Since $\E[A] = \bigO(\frac{n}{\phi}\log n)$ and the RHS $\to 0$ as $n \to \infty$ the above statement gives us that with high probability the number of rounds until convergence is $\E[A] + a =\bigO(\frac{n}{\phi}\log n) + (\frac{n}{\phi}\log n)(g(n))$. Therefore with high probability the number of rounds until convergence is
\begin{equation}\label{conv_time_whp_eq}
    \bigO((\frac{n}{\phi}\log n)g(n))
\end{equation}

\end{proof}

\chapter{Strongly Connected Graphs}

Here we characterise the period and convergence time for strongly connected graphs. We show that the period of the process is $1$. Since the set of all connected undirected graphs are a subset of all strongly connected directed graphs then this result will imply Theorem \ref{path_in_chain_implies_period_1}. The convergence time of the process is shown to be in $\bigO(nb^2 + \frac{n}{\phi'}\log n)$ where $b$ and $\phi'$ are graph parameters defined in Definition \ref{orbit_def} and Theorem \ref{strong_cycle_set_bound}.

\section{Period}
The period of the asynchronous maximum model on a strongly connected graph is 1. The analysis is similar to the undirected graph case.

\begin{lemma}\label{path_in_MC_SCC}
    Let $(G,f)$ be a graph with valuation $f : V \to [n]$ and consider the Markov chain of possibilities $\mathcal{G}$. There exists a directed path in $\mathcal{G}$ from $f$ to $f_{con}$ where $f_{con}$ is some constant valuation.
\end{lemma}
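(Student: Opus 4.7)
The plan is to mirror the proof of Lemma \ref{con_path} but compensate for the asymmetry of the directed setting. In the undirected case, for $v \in \Gamma(S)$ a vertex $u \in S$ that witnesses membership in $\Gamma(S)$ is automatically an out-neighbor of $v$, so a single update of $v$ pulls $M_0$ into $v$. In a directed graph this fails: $v \in \Gamma_G(S)$ merely says $v$ is an out-neighbor of some $u \in S$, whereas the update rule at $v$ consults $v$'s out-neighbors. The natural fix is to work with the dual graph $\widetilde{G}$, since by definition $v \in \Gamma_{\widetilde{G}}(T)$ is equivalent to $v \notin T$ having a $G$-out-neighbor in $T$, which is precisely the condition required for an update of $v$ to copy a value from $T$.

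Concretely, I would set $M_0 = M_f$, $S_0 = \{v \in V : f(v) = M_0\}$, and note that since $G$ is strongly connected so is $\widetilde{G}$. Applying Lemma \ref{nbounpartition} to $\widetilde{G}$ with seed $S_0$ produces $N \in \N \cup \{0\}$ and a partition
$$V = \bigsqcup_{k=0}^{N} \Gamma_{\widetilde{G}}^{k}(S_0).$$
I would then exhibit an explicit sequence of updates in $\mathcal{G}$ by processing these layers in order: first all vertices of $\Gamma_{\widetilde{G}}(S_0)$ in arbitrary order, then all of $\Gamma_{\widetilde{G}}^2(S_0)$, and so on through $\Gamma_{\widetilde{G}}^N(S_0)$.

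To verify the construction works I would induct on $k$. The base case $k=0$ is immediate: every vertex in $S_0$ already has value $M_0$, and no vertex in $S_0$ is selected during the procedure. For the inductive step assume every vertex in $\bigcup_{j \leq k-1} \Gamma_{\widetilde{G}}^{j}(S_0)$ currently has value $M_0$. Let $v \in \Gamma_{\widetilde{G}}^{k}(S_0)$; by Definition \ref{nbound} we have $v \in \Gamma_{\widetilde{G}}(\Gamma_{\widetilde{G}}^{k-1}(S_0))$, which by the dual interpretation above means $v$ has a $G$-out-neighbor $u \in \Gamma_{\widetilde{G}}^{k-1}(S_0)$. By induction $f(u) = M_0$, and since $M_0$ is the initial maximum no out-neighbor of $v$ exceeds $M_0$, so the update rule assigns $f(v) = \max_{v \sim w} f(w) = M_0$. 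The already-$M_0$ vertices are never themselves selected, so their values are preserved throughout.

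After the $|V| - |S_0|$ updates of the procedure every vertex holds $M_0$, i.e. the valuation is the constant valuation $f_{\mathrm{con}} \equiv M_0$; because each update has positive probability in the asynchronous maximum model, the sequence corresponds to a directed path from $f$ to $f_{\mathrm{con}}$ in $\mathcal{G}$. The only real obstacle is the out/in-neighbor mismatch described above; once one commits to phrasing the layered BFS in $\widetilde{G}$, the rest of the argument is a direct transcription of the undirected proof.
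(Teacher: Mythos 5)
Your proposal is correct and follows essentially the same route as the paper's own proof: seed at the set of maximum-valued vertices, partition $V$ into $k$-boundaries of that set in the dual graph $\widetilde{G}$ (invoking Lemma \ref{nbounpartition}), and update layer by layer so each chosen vertex has a $G$-out-neighbour already holding $M_0$. Your explicit induction on the layer index is just a slightly more formal packaging of the paper's ``repeat this process'' step, and you correctly identify the in/out-neighbour asymmetry that makes the dual graph necessary.
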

We will sketch the proof idea with an example. Please refer to Figure \ref{sc_example_period}. To prove the result we show there exists a sequence of vertices that could be chosen by the model to reach a constant valuation. Let $S$ be the set of vertices with the current maximum value in the graph $G$. We then partition the graph $\widetilde{G}$ (which is $G$ but with the edge directions reversed) into $k$-boundaries from $S$. The vertices \textbf{from} $G$ we choose are the following. One at a time choose a vertex from $\Gamma_{\widetilde{G}}(S)$. Since there is an edge (in $G$) from $u \in \Gamma_{\widetilde{G}}(S)$ to $v \in S$ then the value at $u$ is updated to the maximum value $10$. After every vertex in $\Gamma_{\widetilde{G}}(S)$ has been chosen we choose, one at a time, every vertex in $\Gamma_{\widetilde{G}}^2(S)$. This will update these vertices to the value $10$. After every vertex in $\Gamma_{\widetilde{G}}^2(S)$ has been selected we have that every value in the graph is $10$. We now formalise this in the proof below.

\begin{figure}[h!]
\centering
    \begin{tabularx}{0.8\textwidth}{*{2}{>{\centering\arraybackslash}X}}
\begin{tikzpicture}[
every edge/.style = {draw=black,very thick},
 vrtx/.style args = {#1/#2}{%
      circle, draw, thick, fill=white,
      minimum size=8mm, label=#1:#2}]
\node(A) [vrtx= center/10] at (0, 0) {};
\node(B) [vrtx= center/9] at (2, 2.5) {};
\node(C) [vrtx= center/1] at (2, 0) {};
\node(D) [vrtx= center/3] at (2, -3) {};
\node(E) [vrtx= center/2] at (4, 4.5) {};
\node(F) [vrtx= center/4] at (4, 3) {};
\node(G) [vrtx= center/6] at (4, 1.5) {};
\node(H) [vrtx= center/5] at (4, 0) {};
\node(I) [vrtx= center/7] at (4, -1.5) {};
\node(J) [vrtx= center/8] at (4, -3) {};

\path   (B) edge[->] (A)
        (C) edge[->] (B)
        (C) edge[->] (A)
        (D) edge[->] (A)
        (E) edge[->] (B)
        (F) edge[->] (B)
        (G) edge[->] (B)
        (G) edge[->] (C)
        (H) edge[->] (C)
        (I) edge[->] (H)
        (J) edge[->] (I)
        (I) edge[->] (C)
        (J) edge[->] (D)
        (A) edge[->] (J)
        (H) edge[->] (G)
        (G) edge[->] (F)
        (F) edge[->] (E);

\end{tikzpicture}
    \caption*{Graph $G$}  
    &   
\begin{tikzpicture}[
every edge/.style = {draw=black,very thick},
 vrtx/.style args = {#1/#2}{%
      circle, draw, thick, fill=white,
      minimum size=8mm, label=#1:#2}]
\node(A) [vrtx= center/10] at (0, 0) {};
\node(B) [vrtx= center/9] at (2, 2.5) {};
\node(C) [vrtx= center/1] at (2, 0) {};
\node(D) [vrtx= center/3] at (2, -3) {};
\node(E) [vrtx= center/2] at (4, 4.5) {};
\node(F) [vrtx= center/4] at (4, 3) {};
\node(G) [vrtx= center/6] at (4, 1.5) {};
\node(H) [vrtx= center/5] at (4, 0) {};
\node(I) [vrtx= center/7] at (4, -1.5) {};
\node(J) [vrtx= center/8] at (4, -3) {};

\path   (B) edge[<-] (A)
        (C) edge[<-] (B)
        (C) edge[<-] (A)
        (D) edge[<-] (A)
        (E) edge[<-] (B)
        (F) edge[<-] (B)
        (G) edge[<-] (B)
        (G) edge[<-] (C)
        (H) edge[<-] (C)
        (I) edge[<-] (H)
        (J) edge[<-] (I)
        (I) edge[<-] (C)
        (J) edge[<-] (D)
        (A) edge[<-] (J)
        (H) edge[<-] (G)
        (G) edge[<-] (F)
        (F) edge[<-] (E);

\draw [red, very thick] (-0.7,-1) rectangle (0.7,1);
\node[] at (0,1.4) (name) {$S$};

\draw [red, very thick] (1.3,-3.7) rectangle (2.7,3.3);
\node[] at (2,3.7) (name) {$\Gamma_{\widetilde{G}}(S)$};

\draw [red, very thick] (3.3,-3.7) rectangle (4.7,5.2);
\node[] at (4,5.6) (name) {$\Gamma_{\widetilde{G}}^2(S)$};

\end{tikzpicture}
\caption*{Graph $\widetilde{G}$ partitioned into $k$-boundaries}  
    \end{tabularx}
\caption{$G$ and the dual (reversed edge graph) $\widetilde{G}$}
\label{sc_example_period}
\end{figure}

\begin{proof}
To show the existence of such a path it suffices to provide a sequence of vertices that could be chosen by the asynchronous model to reach a constant valuation. If $f$ is already a constant valuation we are done so we assume it is not.

    Let $f_0 = f$ and let
    $$S = \{v \in V | f_0(v) = M_{f}\}$$ 
    be the set of vertices with maximum value. Further assign $M_0 = M_f$ as the initial maximum value.
\par 
    $S$ is non-empty and $G$ is strongly connected, therefore $\widetilde{G}$ is strongly connected. We now partition the vertices of $\widetilde{G}$ according to $k$-boundaries of $S$. From Lemma \ref{nbounpartition} there exists a $k \in \N \cup \{0\}$ such that
$$V = S \sqcup \Gamma_{\widetilde{G} }(S) \sqcup \Gamma_{\widetilde{G} }^2(S) \sqcup \dots \sqcup \Gamma_{\widetilde{G}}^k(S) = \bigsqcup_{i=0}^k \Gamma(S)_{\widetilde{G} }^i$$

The path taken through the Markov chain of possibilities will be the following. Select any $v' \in \Gamma_{\widetilde{G}}(S)$. For all $u \in V$, $f_{0}(u) \leq  M_0 $. Further $f_0(s) = M$ and $(v',s) \in E$ for some $s \in S$. Therefore
updating the graph $G$ on the node $v'$ produces 
        \begin{equation*}
            f_t(v) = f_1(v) = \begin{cases}
                M_0 & \text{if } v = v' \\
                f_{0}(v) & \text{if } v \neq v' \\
            \end{cases}
        \end{equation*}
The valuation $f_0$ is identical to $f_1$ except now we have one more vertex with value $M_0$, namely a vertex $v' \in \Gamma_{\widetilde{G}}(S)$. We can now repeat this process, one at a time and without repetition choose a vertex from $\Gamma_{\widetilde{G}}(S)$. Then one at a time and without repetition choose a vertex in $\Gamma^2_{\widetilde{G}}(S)$. Continue and repeat this until we have chosen all vertices in $\Gamma^k_{\widetilde{G}}(S)$. This ordering guarantees that every time a vertex is updated, its value is changed to $M_0$. This process provides a sequence of vertices to choose such that eventually the graph has a constant valuation $f_{con}$ with values $M_0$ on the vertices. Therefore there exists a path in the Markov chain of possibilities from $(G,f)$ to $(G,f_{con})$.

\end{proof}

\begin{theorem}
    The period of a strongly connected graph under the asynchronous maximum model is 1.
\end{theorem}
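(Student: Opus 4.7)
The plan is to mirror the proof of Theorem \ref{path_in_chain_implies_period_1}, with Lemma \ref{path_in_MC_SCC} taking the place of Lemma \ref{con_path} so that the same two-step template adapts from the undirected setting to the strongly connected directed one. Concretely, I would first argue that every absorbing component of the Markov chain of possibilities $\mathcal{G}$ must contain a constant valuation, and then argue that every constant valuation is a fixed point of the update rule. Together these pin the size of every absorbing component at exactly one, which is the definition of period $1$ (Definition \ref{period_def}).

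For the first step, let $\Delta = (\mathcal{V}', \mathcal{E}') \subseteq \mathcal{G}$ be an absorbing component and pick any $f' \in \mathcal{V}'$. By Lemma \ref{path_in_MC_SCC}, there is a directed path in $\mathcal{G}$ from $f'$ to some constant valuation $f_{con}$. Since $\Delta$ is absorbing (Definition \ref{absorbing_state_def}), no edge of $\mathcal{G}$ leaves $\Delta$, so every vertex on the path lies in $\Delta$; in particular $f_{con} \in \Delta$. For the second step, I would verify that $f_{con}$ is indeed a fixed point. If $f_{con}(v) = k$ for all $v \in V$, then for any chosen $v' \in V$ the update rule either sets the new value at $v'$ to $\max_{v' \sim u}\{f_{con}(u)\} = k$ when $\Gamma(v') \neq \emptyset$, or leaves it unchanged; in both cases the post-update valuation is still $f_{con}$. (Since $G$ is strongly connected with $n \geq 2$ we in fact always have $\Gamma(v') \neq \emptyset$, but recording the degenerate branch keeps the argument airtight.) Consequently the only edge of $\mathcal{G}$ leaving $f_{con}$ is the self-loop. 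If $\Delta$ contained any $f \neq f_{con}$, strong connectedness of $\Delta$ would demand a directed walk in $\mathcal{G}$ from $f_{con}$ to $f$ of positive length, contradicting that the only transition out of $f_{con}$ returns to $f_{con}$. Hence $\Delta = \{f_{con}\}$, so every absorbing component has size exactly $1$.

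The main obstacle is essentially offloaded to Lemma \ref{path_in_MC_SCC}, which does the real work by constructing a deterministic sequence of updates driving an arbitrary valuation to a constant one through the $k$-boundaries of the reversed graph $\widetilde{G}$. With that lemma in hand, the remaining task is just to check the fixed-point claim against the precise wording of the update rule and to invoke strong connectedness of $\Delta$; no new combinatorial input is needed. It is also worth noting that since the class of undirected connected graphs embeds into the class of strongly connected directed graphs (by replacing each undirected edge with two opposite directed edges), this theorem subsumes Theorem \ref{path_in_chain_implies_period_1}, which is a useful sanity check on the template.
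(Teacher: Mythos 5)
Your proposal is correct and is essentially the paper's own argument: the paper's proof of this theorem simply says to repeat the proof of Theorem \ref{path_in_chain_implies_period_1} with Lemma \ref{path_in_MC_SCC} substituted for Lemma \ref{con_path}, which is exactly the template you follow. Your write-up merely spells out the two steps (every absorbing component contains a constant valuation, and constant valuations are fixed points) in more detail than the paper does.
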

\begin{proof}
    The proof is identical to Theorem \ref{path_in_chain_implies_period_1}. But instead of invoking Lemma \ref{con_path}, invoke Lemma \ref{path_in_MC_SCC}.
\end{proof}

\section{Convergence}
In the analysis of undirected graphs we could leverage that $(u,v) \in E$ if and only if $(v,u) \in E$. This allowed us to define a strong edge. For a directed graph we have an analogous concept called a \textit{strong cycle}.

\begin{definition} \label{strong_cycle}
(Strong Cycle) Let $(G,f_t)$ be a directed graph with valuation $f_t$. Let $v_1,v_2,\dots,v_k \in V$ be a sequence of distinct vertices such that $f_t(v_i) = M_t$ for $1\leq i \leq k$. The sequence $v_1,v_2,\dots,v_k,v_1 $ is called a \textit{strong cycle} if $(v_i,v_{i+1}) \in E$ for all $1\leq i \leq k-1$ and $(v_{k},v_1) \in E$.
\end{definition}

A strong cycle is a cycle where every vertex in the cycle has the current maximum value. Note that in an undirected graph a strong edge is equivalent to a strong cycle with $k = 2$. Observe that once a strong cycle is formed, the values at the vertices in the cycle can never be changed. Again we can define an analogous notion to the strong edge set, which we call the \textit{strong cycle set}.

\begin{definition}\label{str_cycle_set}
    (Strong Cycle Set) Let $(G,f_t)$ be strongly connected graph with valuation $f_t$. Suppose $v \in V$ such that at least one of the following conditions hold;
    \begin{enumerate}
        \item $v$ is a member of a strong cycle
        \item There exists a walk $v_1,v_2,\dots v_k$ such that $v_1 = v$, $v_k$ is a member of a strong cycle and $f_t(v_i) = M_t$ for all $1\leq i \leq k$.
    \end{enumerate}
    That is, either $v$ is a member of a strong cycle or there is some walk from $v$ to a strong cycle where every vertex on the walk has the current maximum value. Define $C_t$ to be the largest set of vertices satisfying either conditions 1 or 2 (some vertices in $C_t$ may satisfy both). If the valuation is clear from context then write $C = C_t$
\end{definition}
\begin{example}
    In Figure \ref{sc_example}, Round 6 contains a strong cycle set where $|C| = 6$. There are three vertices in a strong cycle (the directed triangle) and three vertices on a path connected to a strong cycle.
\end{example}

\subsection{Potential Function}
Now we define a potential function on $(G,f_t)$. Let $g : (V \to [n]) \to \mathbb{Z}$ be given by

\begin{align*}
    & h(f_t) = |C_t| && \text{$C_t \subseteq V$ is the strong cycle set under valuation $f_t$} \\
\end{align*}

\begin{lemma}\label{potential_func_1}
    Let $(G,f_0)$ be a graph with initial valuation $f_0$. Update the valuations according to the asynchronous maximum model. The potential function $h$ is non-negative, non-decreasing (with respect to $t$) and bounded above by $n$.
\end{lemma}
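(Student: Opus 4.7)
The approach will parallel the argument for the undirected case in Lemma \ref{potential_func}. Non-negativity and the upper bound $h(f_t) \leq n$ are immediate because $C_t \subseteq V$ and $|V| = n$. The real work is showing the potential is non-decreasing in $t$, and I plan to establish the stronger statement $C_t \subseteq C_{t+1}$ after every update, mirroring the strengthening noted after Lemma \ref{potential_func}.

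The plan is a case analysis on the vertex $v'$ chosen uniformly at random in the transition from $f_t$ to $f_{t+1}$. If $v' \in C_t$, then either $v'$ lies on a strong cycle, in which case its cycle-successor is an out-neighbour with value $M_t$, or $v'$ sits at the start of a walk of max-valued vertices to a strong cycle, in which case the second vertex of that walk is an out-neighbour with value $M_t$. Either way, the update rule forces $f_{t+1}(v') = M_t = f_t(v')$, so no value in the graph changes and $C_{t+1} = C_t$. If instead $v' \notin C_t$, only the value at $v'$ changes and every vertex of $C_t$ keeps its value $M_t$. Any strong cycle in $(G,f_t)$ is disjoint from $v'$ and therefore survives, and any walk witnessing condition 2 of Definition \ref{str_cycle_set} for some $v \in C_t$ uses only vertices of $C_t$, so it also avoids $v'$ and survives. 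Hence $C_t \subseteq C_{t+1}$.

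The main point to be careful about is the closure property inside case 2: the walks witnessing condition 2 for vertices in $C_t$ consist entirely of vertices already in $C_t$, because each intermediate vertex on such a walk itself has value $M_t$ and reaches a strong cycle through max-valued vertices, so it satisfies condition 1 or condition 2 in its own right. Once this closure is isolated, the rest is a transcription of the undirected proof with ``strong cycle'' in place of ``strong edge.'' A potential worry is that $M_{t+1}$ might drop below $M_t$, but this can only occur when $C_t = \emptyset$ (since otherwise some strong-cycle vertex distinct from $v'$ continues to realise $M_t$), in which case $|C_{t+1}| \geq 0 = |C_t|$ is automatic and does not affect the argument.
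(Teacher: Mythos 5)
Your proposal is correct and follows essentially the same route as the paper's proof: non-negativity and the bound by $n$ are immediate, and monotonicity is obtained by the stronger containment $C_t \subseteq C_{t+1}$ via a case split on whether the updated vertex $v'$ lies in $C_t$. You are in fact slightly more explicit than the paper on the two points it treats loosely — that walks witnessing condition 2 stay inside $C_t$, and that $M_{t+1}=M_t$ whenever $C_t\neq\emptyset$ because a strong cycle contains at least two max-valued vertices — so no gap remains.
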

\begin{proof}
    By definition $h(f_t)$ must be non-negative and bounded above by $n$. Therefore all that remains is to show $h(f_t)$ is non-decreasing with respect to $t$. To do this we show $C_t \subseteq C_{t+1}$ for all $t$, which implies $|C_t| \leq |C_{t+1}|$.
    \par Let $v'$ be the vertex chosen for the update from $f_t$ to $f_{t+1}$. We split into two cases depending on whether $v' \in C_t$ or not.
    \par \textbf{Case 1: } Suppose $v' \in C_t$. This implies $v'$ has the maximum value $M_t$ in round $t$ and there exists a $u \in V$ such that $f_t(u) = M_t$ and $v \sim u$. Therefore the value of $v'$ does not change during the update, so $f_{t+1} = f_t$ which implies $C_t = C_{t+1}$. 
    \par \textbf{Case 2:}
    Suppose $v' \notin C_t$. We now have to consider whether the vertex $v'$ has an out-neighbour in the set $C_t$ or not.
    \par \setlength{\leftskip}{0.5cm}  \textbf{Case 2.1:} If there exists a $w \in C_t$ such that $v' \sim w$ then when the update occurs, $v'$ will be added to the strong cycle set. Since only $v'$ changes value, all vertices in $C_t$ are also in $C_{t+1}$. This is because membership to these sets is determined by the values of the vertices (they need to be the current maximum) and the edges between them, neither of which change. Therefore $C_t \cup \{v'\} = C_{t+1}$.
    \par \setlength{\leftskip}{0.5cm}  \textbf{Case 2.2:} Assume there does not exist $w \in C_t$ such that $v' \sim w$. This could occur because of two reasons. Either $|C_t| = 0$ or $|C_t| > 0$ but $v'$ has no out-neighbours in $C_t$. We consider both cases below.
    \par \setlength{\leftskip}{1cm} \textbf{Case 2.2.1:} Suppose $|C_t| = 0$. Then $C_t = \emptyset$. We trivially have that $C_t = \emptyset \subseteq C_{t+1}$.
    \par \setlength{\leftskip}{1cm} \textbf{Case 2.2.1:} Suppose $|C_t| > 0$ and that for all $w \in \Gamma(v')$, $w \notin C_t$. That is $v'$ has no out neighbours in the strong cycle set. Note that for all $v \in C_t$, we also have $v \in C_{t+1}$. This is because membership to $C_t$ only depends on the values on the vertices $v \in C_t$ and the edges between them. A subtle point is that $v'$ cannot update to a value larger than $M_t$. So the vertices in $C_t$ indeed still have the current maximum value and $M_t = M_{t+1}$. Therefore we have that $C_t \subseteq C_{t+1}$.

\par \setlength{\leftskip}{0.3cm} Although this completes the proof we can also say that in Case 2.2.1, $v'$ will join the strong cycle set if by updating it results in the creation of a strong cycle. This is because $v'$ has no out-neighbours in $C_t$, so updating it cannot result in Condition 2 of Definition \ref{str_cycle_set} being satisfied. However potentially $v'$ lies on a directed cycle $W$, where $f_t(v') < M_t$ an $f_t(w) = M_t$ for $w \in W$. Then updating $v'$ will result in the formation of a strong cycle.
\end{proof}

The potential function is useful as now we can analyse the convergence time in terms of $h$. We analyse the following two stages of convergence.

\begin{enumerate}
    \item How long until a strong cycle is formed?
    \item Given there exists a strong cycle, how long until $|C| = n$?
\end{enumerate}
We will call these Phase 1 and Phase 2 respectively.

\subsection{Phase 1: Strong Cycle}

\begin{example}
Please refer to Figure \ref{sc_example} in the Appendix. The red nodes are selected for update. Initially $f_0$ has no strong cycle and the potential function $h(f_0) = 0$. Phase 1 is completed in Round $6$.
\end{example}

\begin{definition} (Maximal Chain of the Cycle $G[U]$)
    Let $U \subseteq V$ be such that $G[U]$ is a cycle. The maximal chain is the longest path $P_U = (u_1,u_2,\dots, u_k)$ in $G[U]$ such that $f_t(u_i) = M_t$ for all $1\leq i\leq k$. We write $P_U$ to denote the maximal chain and the subscript $U$ reminds us that the maximal chain depends on the cycle $U$.
\end{definition}

\begin{example}
    In Figure \ref{sc_example}, after Round 2 there is a maximal chain of length $3$ which is composed of vertices with value $5$ in the cycle of length $6$.
\end{example}

It will be useful for us to select a certain type of maximal chain in the graph $G$. Suppose we select all cycles that contain $M_t$ and out of these cycles restrict to the ones with the smallest length. Then out of these small cycles pick one with the largest maximal chain. Note that there may be multiple small cycles each with the same length of maximal chain so we must make a choice in this process. This maximal chain is captured in the following definition.

\begin{definition}\label{orbit_def}
    (Orbit)
    Let $G$ be a strongly connected graph. For each $v \in V$, define $b(v)$ as the length of the smallest cycle containing $v$. Since $G$ is strongly connected every vertex is contained in a cycle so $b(v)$ is well defined. The \textit{Orbit} of $G$ is
    $$b(G) := \max_{v \in V} b(v)$$
\end{definition}
A clear upper bound on the orbit is the circumference $c$ of the graph, that is the size of the largest cycle in $G$. However we potentially could have $b(G) \ll c$, consider the complete graph $K_n$ where $b(G) = 2 \ll n = c$.

\begin{definition}
    (Max-min Chain in $G$)
    Let $G$ be a strongly connected directed graph with valuation $f_t$ and maximum value $M_t$. Let $W \subseteq V$ be the set of vertices in $G$ with value $M_t$.
    Let
    $$\mathcal{U} = \{U \subseteq V | G[U] \text{ is a cycle containing a } w\in W\}$$
    be the set of cycles containing values in $M_t$.
    Now define $\alpha_{min} = \min_{w \in W}{b(w)}$ to be the size of the smallest cycle containing $M_t$. Let 
    $$\mathcal{U}_{min} = \{U \in \mathcal{U} | |U| = \alpha_{min}\}.$$
    be the set of the smallest cycles containing the value $M_t$.
    Now let $\beta_{max} = \max_{U \in \mathcal{U}_{min}}{|P_U|}$ be the size of the largest maximal chain of all the cycles in $\mathcal{U}_{min}$.
    
    A \textit{max-min chain} in $G$ is a path denoted by $P$ where
    $$P \in \{P_U  | |P_U| = \beta_{max} \text{ and } U \in \mathcal{U}_{min} \}.$$
\end{definition}
Note that there always exists a max-min chain in $G$. There may be multiple cycles satisfying the requirements listed in the definition but we can just pick one of them. The name max-min should suggest that we are looking at a maximal chain in some minimal cycle. In the bound for Phase 1 we will look at a min-max chain $P$ contained in some cycle $U$ and analyse the expected number of rounds until $|P| = |U|$.

\par

With reference to Figure \ref{sc_example}, consider the cycle of length $3$ in Round 1, this is a minimal cycle containing $M_1 = 5$. The min-max chain in $G$ is the vertex valued $5$ contained in the cycle of length $3$. Let $U$ be the vertices containing this min-max chain $P = P_U$. We can then consider $G[U]$ and note that the maximal chain of $G[U]$ can either increase or decrease in length in each round. It may increase until its length $|P_U| = |U|$ and in this case a strong cycle is formed. Otherwise it may decrease in length. For example in Round $3$, $|P_U| = 2$ while in Round $4$, $|P_U| = 1$. In Round $6$ we have $|P_U| = |U| = 3$ and the formation of a strong cycle. 

\par However we may have that $|P_U| = 0$ after some update. In this case we can look towards another cycle $U'$ containing a max-min chain in $G$. We then try to analyse the expected number of rounds until $|P_{U'}| = |U'|$.

\par In the above argument we eventually want to have $|P_U| = |U|$ for some cycle $U$ containing a max-min chain in $G$. Once we \textbf{fix} a cycle $U$, the problem is similar to Gambler's Ruin with two players $A$ and $B$. Player $A$ has $|P_U|$ dollars and Player $B$ has $|U| - |P_U|$ dollars. In each round there is some probability that Player $A$ wins one dollar from Player $B$, there is some Probability that Player $A$ loses one dollar to Player $B$ and there is some probability that neither of them win or lose. We make the following argument in the next theorem to characterise the expected number of rounds until a strong cycle is formed.

\begin{theorem}\label{strong_cycle_exp}
    Let $(G,f)$ be a strongly connected graph with valuation $f$ and orbit $b = b(G)$. Let $X$ be a random variable denoting the number of rounds until a strong cycle is formed. Then $\E[X] = \bigO(nb^2)$.
\end{theorem}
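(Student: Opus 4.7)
My plan is to follow the Gambler's Ruin analogy indicated in the exposition preceding the theorem. The strategy is to fix a single cycle $U$ of length at most $b$, track the length of the longest run of $M_t$-valued vertices inside $G[U]$, and show this length hits $|U|$ (forming a strong cycle on $U$) within $\bigO(nb^2)$ rounds in expectation.

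First, select a max-min chain $P = P_U$ of $G$; by definition of the orbit, $|U| = \alpha_{\min} \leq b$. Let $L_t$ denote the length of the longest consecutive chain of $M_t$-valued vertices inside $G[U]$ at time $t$. Writing the vertices of $U$ cyclically as $u_1, \ldots, u_m$ with $m = |U|$ and the current chain as $(u_1, \ldots, u_j)$, maximality forces $u_m$ and $u_{j+1}$ to carry values strictly less than $M_t$. I will show that in every round, $\Pr[L_{t+1} \geq L_t + 1] \geq 1/n$ while $\Pr[L_{t+1} < L_t] \leq 1/n$. The lower bound on growth comes from choosing the vertex $u_m$: its only out-neighbour inside $G[U]$ is $u_1$, which has value $M_t$, so the update forces $u_m$ to take value $M_t$ and the chain extends. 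The upper bound on shrinkage comes from observing that only updating $u_j$ can reduce $L_t$: for $1 \leq i < j$, the cycle out-neighbour $u_{i+1}$ of $u_i$ is still $M_t$-valued, so updating $u_i$ keeps its value at $M_t$; and updating a vertex outside $U$ cannot change any value on $U$.

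By the stochastic domination argument of Theorem \ref{coupling_proof}, the evolution of $L_t$ dominates from below a lazy simple symmetric random walk $W_t$ on $\{0, 1, \ldots, m\}$ with per-step up and down probabilities $1/n$, reflecting at $0$ and absorbing at $m$. A standard Gambler's Ruin / harmonic-potential computation shows that the first hitting time $T_m$ of state $m$ satisfies $\E[T_m] = \bigO(n m^2) = \bigO(nb^2)$ from any starting state. Transferring this bound back to $L_t$ yields $\E[X] = \bigO(nb^2)$.

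The principal obstacle will be handling the event $\{L_t = 0\}$, in which the chosen chain in $U$ vanishes. When this occurs, either some vertex of $U$ still carries the current graph maximum and the chain can re-start from length $1$, or the maximum is attained only outside $U$, in which case (since $M_t$ is non-increasing and every max-valued vertex lies on a cycle of length at most $b$) the analysis can be re-run on a fresh max-min chain $P_{U'}$ with $|U'| \leq b$. The cleanest formalisation is to couple the entire process with a single reflecting lazy walk on $\{0, 1, \ldots, b\}$, so that returns to $0$ are absorbed by the reflection and contribute no extra cost beyond the $\bigO(nb^2)$ hitting-time estimate. Making this restart / reflection coupling watertight — so that the sequence of resets does not inflate the expected time — is the main technical step required to complete the proof.
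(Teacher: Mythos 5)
Your proposal follows essentially the same route as the paper's proof: fix a max-min chain in a minimal cycle of length at most $b$, bound the per-round probabilities of the chain growing or shrinking by $\frac{1}{n}$ each, handle the vanishing-chain event by restarting on a fresh max-min chain (the paper phrases this as Player $A$ never being ruined), and conclude via the gambler's-ruin hitting-time computation $e_1 = \frac{nb(b-1)}{2} = \bigO(nb^2)$ of Lemma \ref{gamblers_ruin_expectation}. The restart/reflection step you flag as the main remaining technical obstacle is treated at the same informal level in the paper itself, so nothing essential is missing relative to the published argument.
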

\begin{proof}
   Firstly, we argue that $\E[X]$ can be upper bounded by the following game of Gambler's Ruin.  Given the valuation $f$, let $P_U \subseteq U \subseteq V$ denote a max-min chain $P_U$ contained in the cycle $U$. We will now fix attention to the cycle $U$ and the max-min chain $P_U$. Note that as we perform updates, the path $P_U$ may cease to be a min-max chain. This does not matter since we only require that when we \textit{initially} pick $P_U$ and a cycle $U$ that it is a min-max chain.
    \par Initially, Player $A$ has $|P_U| \in \{1,2,\dots, |U|\}$ dollars and Player $B$ has $|U| - |P_U|$ dollars. Player $A$ models the length of the path $P_U$. A strong cycle is formed exactly when $|P_U| = |U|$ for some cycle $U$ containing $P_U$. This occurs exactly when $B$ is ruined. We can now perform updates from $f = f_0$ according to the asynchronous maximum model.
    \par We argue that Player $A$ can never be ruined. We have already fixed a cycle $U$ and are considering a path $P_U$. Suppose in Round $t$ we have $|P_U| = 1$. Now in round $t+1$ suppose we select a vertex in order to make $|P_U| = 0$. (This is possible, consider Figure \ref{sc_example} in Round $t = 5$, if the model selected the vertex valued $5$ in the cycle of length $3$ then this scenario would have occurred in Round $t+1$). However, now in round $t+1$ we immediately switch to another min-max chain $P_{U'}$ (which depends on $f_{t+1}$) contained in a new cycle $U'$. Now we fix attention to this cycle $U'$. We have that $|P_{U'}| = \{1,2,\dots,|U'|\}$. Since the min-max chain depends on the valuation then the size of $U'$ may be larger or smaller or the same compared to the size of $U$. The important observation is that in Round $t+1$, $|P_{U'}| \geq 1$ and we can consider Player $A$ as owning $|P_{U'}|$ dollars and Player $B$ as owning $|U'| - |P_{U'}|$ dollars. For any max-min chain $P$ in a cycle $U$, we have that $|U| \leq b(G)$. In the worst case, Player $A$ would start with $1$ dollar and need to obtain $b(G)$ dollars. For each round given that $P_U$ is the max-min chain in that round we have,
    $$\mathcal{P}(|P_U| \text{ increases}) = \frac{1}{n}$$
    $$\mathcal{P}(|P_U| \text{ decreases}) = 0 \text{ or } \frac{1}{n}$$
    $$\mathcal{P}(|P_U| \text{ does not increase or decrease}) = \frac{n-1}{n} \text{ or } \frac{n-2}{n}$$
    Then the worst case probabilities from the perspective of Player $A$ are
    $$\mathcal{P}(|P_U| \text{ increases}) = \frac{1}{n}$$
    $$\mathcal{P}(|P_U| \text{ decreases}) = \frac{1}{n}$$
    $$\mathcal{P}(|P_U| \text{ does not increase or decrease}) = \frac{n-2}{n}$$
    For example, in Figure \ref{sc_example}, in Round 3 the probability that the maximal chain in the cycle of length $6$ decreases is zero. However if situations like these do not occur then it is worse for Player $A$.
    \par Therefore, the expected number of rounds until we obtain a strong cycle can be bounded by a game of Gamblers Ruin where Player $A$ starts with $k \in \{1,2,\dots, b(G)\}$ dollars and can never be ruined, and Player $B$ starts with $l \in \{0,1,2,\dots, b(G)-1\}$ dollars such that $k + l = b(G)$. We assume $k = 1$ in the worst case. There is a $\frac{1}{n}$ chance of Player $A$ giving a dollar to Player $B$, a $\frac{1}{n}$ chance of Player $B$ giving a dollar to Player $A$ and a $\frac{n-2}{n}$ chance of neither player giving money to the other. Lemma \ref{gamblers_ruin_expectation} completes the proof that $\E[X] = \bigO(nb^2)$.
\end{proof}

\begin{figure}[h]
\centering
\begin{tikzpicture}[
every edge/.style = {->, draw=black,very thick},
 vrtx/.style args = {#1/#2}{%
      circle, draw, thick, fill=white,
      minimum size=9mm, label=#1:#2}
                    ]
\node(A) [vrtx= center/$e_b$] at (0, 0) {};
\node(B) [vrtx= center/$e_{b-1}$] at (2, 0) {};
\node(C) [vrtx= center/$e_{b-2}$] at (4, 0) {};
\node(D) [vrtx= center/$e_{b-3}$] at (6, 0) {};
\node(E) [vrtx= center/$e_{2}$] at (10, 0) {};
\node(F) [vrtx= center/$e_{1}$] at (12, 0) {};
\node(M) [vrtx = center/$\dots$] at (8, 0) {};

\path   (B) [bend left] edge node[yshift = -4mm] {$\frac{1}{n}$} (A);
\path   (B) [bend left] edge node[yshift = 4mm] {$\frac{1}{n}$} (C);
\path   (C) [bend left] edge node[yshift = -4mm] {$\frac{1}{n}$} (B);
\path   (D) [bend left]  edge node[yshift = -4mm] {$\frac{1}{n}$} (C);
\path   (D) [bend left]  edge node[yshift = 4mm] {$\frac{1}{n}$} (M);
\path   (M) [bend left]  edge node[yshift = -4mm] {$\frac{1}{n}$} (D);
\path   (M) [bend left]  edge node[yshift = 4mm] {$\frac{1}{n}$} (E);
\path   (E) [bend left]  edge node[yshift = -4mm] {$\frac{1}{n}$} (M);
\path   (C) [bend left]  edge node[yshift = 4mm] {$\frac{1}{n}$} (D);
\path   (E) [bend left]  edge node[yshift = 4mm] {$\frac{1}{n}$} (F);
\path   (F) [bend left]  edge node[yshift = -4mm] {$\frac{1}{n}$} (E);
\path   (F) edge [loop above] node {$\frac{n-1}{n}$} (F);
\path   (E) edge [loop above] node {$\frac{n-2}{n}$} (E);
\path   (B) edge [loop above] node {$\frac{n-2}{n}$} (B);
\path   (C) edge [loop above] node {$\frac{n-2}{n}$} (C);
\path   (D) edge [loop above] node {$\frac{n-2}{n}$} (D);
\path   (M) edge [loop above] node {$\frac{n-2}{n}$} (M);

\end{tikzpicture} 
\caption{Transition probabilities in the Gamblers Ruin Game for Player $A$ where $A$ has $b-k$ dollars.}
\label{exp_game}
\end{figure}

\begin{lemma} \label{gamblers_ruin_expectation} 
    The expected number of rounds in the game of gamblers ruin described at the end of Theorem \ref{strong_cycle_exp} is in $\bigO(nb^2)$.
\end{lemma}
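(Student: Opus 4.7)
The plan is to perform a first-step analysis on the random walk described in the proof of Theorem~\ref{strong_cycle_exp} and solve the resulting linear recurrence. Let $T_k$ denote the expected number of rounds until Player~$A$ reaches $b$ dollars, given that $A$ currently holds $k$ dollars, for $k \in \{1,\ldots,b\}$. Trivially $T_b = 0$. At the reflecting state $k = 1$, where Player~$A$ cannot be ruined, conditioning on one round gives
\begin{equation*}
T_1 = 1 + \tfrac{1}{n}\, T_2 + \tfrac{n-1}{n}\, T_1,
\end{equation*}
which simplifies to the boundary identity $T_1 - T_2 = n$. For interior states $2 \le k \le b-1$, conditioning on one round yields
\begin{equation*}
T_k = 1 + \tfrac{1}{n}\, T_{k+1} + \tfrac{1}{n}\, T_{k-1} + \tfrac{n-2}{n}\, T_k,
\end{equation*}
which rearranges to the second-order linear recurrence $T_{k+1} - 2T_k + T_{k-1} = -n$ for $2 \le k \le b-1$.

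Next I would solve the recurrence by introducing the first differences $D_k := T_k - T_{k+1}$. The interior recurrence becomes $D_k = D_{k-1} + n$, while the boundary identity supplies $D_1 = n$; hence $D_k = kn$ for every $1 \le k \le b-1$. Telescoping then gives
\begin{equation*}
T_1 = \sum_{k=1}^{b-1} D_k = n \sum_{k=1}^{b-1} k = \frac{n b (b-1)}{2} = \bigO(nb^2).
\end{equation*}
Since Player~$A$ starts with at least one dollar and $T_k$ is non-increasing in $k$, the expected duration of the game is bounded above by $T_1$, which establishes the claim.

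The only mildly subtle step is extracting the boundary condition $T_1 - T_2 = n$ correctly from the reflecting behaviour at state~$1$; once this is in hand, the recurrence is elementary and the telescoping sum gives the bound immediately. As a sanity check, one may alternatively view the chain as a $\tfrac{n-2}{n}$-lazy symmetric simple random walk on $\{1,\ldots,b\}$ with reflection at $1$ and absorption at $b$: the non-lazy walk has expected hitting time $\Theta(b^2)$, and the laziness inflates this by a factor of $\Theta(n)$, again producing $\bigO(nb^2)$.
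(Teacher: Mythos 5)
Your proof is correct and sets up exactly the same first-step-analysis recurrences and boundary conditions as the paper, arriving at the identical closed form $T_1 = \frac{nb(b-1)}{2} = \bigO(nb^2)$. The only difference is in how the tridiagonal linear system is solved: the paper verifies the one-step relation $e_{b-k} = \frac{kn}{2} + \frac{k}{k+1}e_{b-(k+1)}$ by induction (importing the formula from an external reference) and then resolves the reflecting boundary separately, whereas your first-difference telescoping with $D_k = kn$ is a cleaner, self-contained route to the same answer.
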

\begin{proof}
    Let $e_k$ be the expected number of rounds until Player $A$ has $b$ dollars given that Player $A$ starts with $k$ dollars. We have that $k \in \{1,2,\dots, b(G)\}$ and want to find $e_1$. Clearly $e_b = 0$. Please refer to Figure \ref{exp_game}. For $1\leq k \leq b-2$, we can derive the following equation by conditioning on the events that Player $A$ gives money to $B$, gains money from $B$ or neither wins or loses.
    \begin{equation}\label{exp_recc}
        e_{b-k} = 1 + \frac{1}{n}e_{b-(k-1)} + \frac{1}{n}e_{b-(k+1)} + \frac{n-2}{n}e_{b-k}
    \end{equation}
    Now we claim that for $0 \leq k \leq b-2$,
    \begin{equation}\label{recc_ans}
        e_{b-k} = \frac{kn}{2} + \frac{k}{k+1}e_{b-(k+1)} 
    \end{equation}
    where this equation comes from \cite{hthesis}. The proof is by induction on $k$. For $k = 0$, $e_b = 0$. Now suppose the claim holds up to $k-1$. Since we have $1\leq k \leq b-2$ then Equation \ref{exp_recc} gives
    \begin{align*}
        & e_{b-k} = 1 + \frac{1}{n}e_{b-(k-1)} + \frac{1}{n}e_{b-(k+1)} + \frac{n-2}{n}e_{b-k} && \text{} \\
        & e_{b-k} = 1 + \frac{1}{n}\left(\frac{(k-1)n}{2} + \frac{k-1}{k}e_{b-k}  \right) + \frac{1}{n}e_{b-(k+1)} + \frac{n-2}{n}e_{b-k} && \text{Induction Hypothesis} \\
        & \frac{2}{n}\cdot e_{b-k} = 1 + \frac{1}{n}\left(\frac{(k-1)n}{2} + \frac{k-1}{k}e_{b-k}  \right) + \frac{1}{n}e_{b-(k+1)}  && \text{} \\
        & \frac{2}{n}\cdot e_{b-k} = 1 + \frac{(k-1)}{2} + \frac{k-1}{kn}e_{b-k}  + \frac{1}{n}e_{b-(k+1)}  && \text{} \\
        & \left(\frac{2}{n}-\frac{k-1}{kn}\right)  e_{b-k} = 1 + \frac{(k-1)}{2} + \frac{1}{n}e_{b-(k+1)}  && \text{} \\
        & \left(\frac{k+1}{kn}\right)  e_{b-k} = 1 + \frac{(k-1)}{2} + \frac{1}{n}e_{b-(k+1)}  && \text{} \\
        & \left(\frac{k+1}{kn}\right)  e_{b-k} =  \frac{k+1}{2} + \frac{1}{n}e_{b-(k+1)}  && \text{} \\
        &  e_{b-k} =  \frac{kn}{2} + \frac{k}{k+1}e_{b-(k+1)}  && \text{} \\
    \end{align*}
Therefore, by induction we have that Equation \ref{recc_ans} holds for all $0\leq k \leq b-2$. We can derive a recurrence for when $k = b-1$ by again conditioning on whether $A$ wins, loses or the round is a draw. We obtain
$$e_{1} = 1 + \frac{1}{n}e_2 + \frac{n-1}{n}e_1.$$
Therefore 
$$e_1 = n + e_2.$$
Using the claim we have that $e_2 = e_{b-(b-2)} = \frac{(b-2)n}{2} + \frac{b-2}{b-1}e_1$. Combining these equations gives
$$e_1 = n + e_2 =  n +  \frac{(b-2)n}{2} + \frac{b-2}{b-1}e_1.$$
Solving for $e_1$ gives
$$e_1 = \frac{nb(b-1)}{2} = \bigO(nb^2)$$
which is the expected number of rounds until $A$ has $b$ dollars given $A$ starts with $1$ dollar.

\end{proof}

We remark that for undirected graphs $b = 2$, and the time to a strong cycle (which for $b = 2$ is a strong edge) is $\bigO(n)$.

\subsection{Phase 2: Strong Cycle Set}
Now we try to bound the worst case convergence time in Phase 2. That is we assume a strong cycle exists in the graph and bound the number of rounds until the strong cycle set is the entire graph.

\begin{theorem}\label{strong_cycle_set_bound}
    Let $G$ be a directed graph with non-empty strong cycle set $C$. For $2\leq k\leq n$, let $Z$ be a random variable denoting the number of rounds until $|C| = n$ given $|C| = k$. Then
    $$\E[Z] = \bigO(\frac{n}{\phi'}\log n)$$
    Where $\phi'= \min(\phi_{out},\phi_{in})$.
\end{theorem}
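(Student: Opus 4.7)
The plan is to model the growth of $|C|$ as a biased random process where progress occurs whenever a well-chosen vertex is selected, and to bound the expected waiting times by coupling with geometric random variables, essentially adapting Theorem~\ref{asymptotics_vertex} to the directed setting. The key observation is that if $v' \in V \setminus C_t$ has some out-neighbour $w \in C_t$, then updating $v'$ sets $f_{t+1}(v') = M_t$ (since $f_t(w) = M_t$), so $v'$ has a max-valued walk into $C_t$ and by Definition~\ref{str_cycle_set} joins $C_{t+1}$. Hence the set
$$B \;=\; \{\, v \in V\setminus C : \Gamma_G(v) \cap C \neq \emptyset \,\} \;=\; \Gamma_{\widetilde{G}}(C)$$
guarantees progress upon selection, and in each round the probability that $|C|$ strictly increases is at least $|B|/n$. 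I would split the analysis according to which side of the cut $\{C, V\setminus C\}$ is smaller.

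For $|C| \leq \lfloor n/2 \rfloor$ (Phase A), inward expansion applies to $C$ directly, giving $|B| = |\Gamma_{\widetilde{G}}(C)| \geq \phi_{in}|C|$. Letting $Z_i$ be the number of rounds needed to push $|C|$ from $i-1$ to $i$, Theorem~\ref{coupling_proof} lets me dominate $Z_i$ by a geometric variable with success probability $\phi_{in}(i-1)/n$, so $\E[Z_i] \leq n/(\phi_{in}(i-1))$. Summing from $i=k+1$ to $\lfloor n/2\rfloor$ and using the harmonic-number bound yields $\bigO(\tfrac{n}{\phi_{in}} \log n)$.

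For $|C| > \lfloor n/2\rfloor$ (Phase B), write $U = V\setminus C$, so $|U| < n/2$. Here $\phi_{in}$ cannot be applied to $C$ directly, and the crucial structural step is to use outward expansion on $U\setminus B$: every $u \in U\setminus B$ has, by definition of $B$, all of its out-neighbours in $U$, so
$$\Gamma_G(U\setminus B) \;\subseteq\; U \setminus (U\setminus B) \;=\; B.$$
Because $|U\setminus B| \leq |U| \leq n/2$, outward expansion gives $|B| \geq \phi_{out}(|U|-|B|)$, whence $|B| \geq \tfrac{\phi_{out}|U|}{1+\phi_{out}}$, which is $\Omega(\phi_{out}|U|)$ since $\phi_{out}$ is bounded above by a constant by Lemma~\ref{vertex_expansion_bound}. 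Letting $Y_m$ be the rounds needed to reduce $|U|$ from $m$ to $m-1$, each $Y_m$ is stochastically dominated by a geometric with success probability $\Omega(\phi_{out} m /n)$, so summing from $m=1$ to $\lfloor n/2\rfloor$ gives $\bigO(\tfrac{n}{\phi_{out}}\log n)$. Combining the two phases and taking $\phi' = \min(\phi_{out},\phi_{in})$ delivers $\E[Z] = \bigO(\tfrac{n}{\phi'}\log n)$.

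The main obstacle is Phase B: unlike the undirected case where a single symmetric expansion controls both a set and its complement, the natural candidate $\Gamma_{\widetilde{G}}(C)$ now has $|C| > n/2$, outside the valid regime for $\phi_{in}$. The containment $\Gamma_G(U\setminus B) \subseteq B$ is the key workaround and deserves careful verification, as does the degenerate case $B = U$ (where progress probability $|U|/n$ is already strong enough) and the case where $G[U]$ has no out-arcs to $C$, which is impossible by strong connectedness of $G$ together with $C \neq \emptyset$. The remaining checks are that the argument also covers starting values $k > \lfloor n/2 \rfloor$ (where only Phase B is executed) and that the coupling bounds can be concatenated across phases by linearity of expectation, both of which follow standard lines.
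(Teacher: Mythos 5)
Your proposal is correct and follows essentially the same route as the paper: split at $|C| \approx \lfloor n/2\rfloor$, apply $\phi_{in}$ to $C$ in the first half and $\phi_{out}$ to the far side of the complement in the second half, and dominate each step by a geometric random variable whose expectations sum to a harmonic series. The only (harmless) difference is that in the second half the paper shrinks $U = V\setminus(C\cup\Gamma_{\widetilde{G}}(C))$ to zero and then shrinks $W=\Gamma_{\widetilde{G}}(C)$ in two sequential sub-phases, whereas you track $|V\setminus C|$ in a single pass via the inequality $|B| \geq \tfrac{\phi_{out}}{1+\phi_{out}}|V\setminus C|$, which rests on the same structural containment $\Gamma_G(U\setminus B)\subseteq B$ that the paper uses implicitly.
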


\begin{proof}
    We will split process into Phase 2.1 and Phase 2.2. In Phase 2.1 we bound the expected number of rounds until $|C| = \lfloor \frac{n}{2} \rfloor + 1 $. In Phase 2.2 we bound the expected number of rounds from $|C| = \lfloor \frac{n}{2} \rfloor + 1$ until $|C| = n$. We call an update successful if it increases the quantity $|C|$.

    \textbf{Phase 2.1:}
    Let $Z$ be a random variable denoting the number of rounds until $|C| = \lfloor \frac{n}{2} \rfloor + 1 $ given $|C| = k$. We assume that $k \leq \lfloor \frac{n}{2} \rfloor + 1$ as otherwise $\E[Y] = 0 $. Let $Z_i$ be a random variable denoting the number of rounds until $|C| = i$ given that $|C| = i-1$. Each random variable $Y_i$ stochastically dominates a geometric random variable $Y_i'$ with success probability $\frac{\phi_{in} \cdot (i-1)}{n}$. This is because there are at least $\phi_{in}\cdot |C|$ vertices that have edges directed towards the set $C$. 
    
  Therefore
$$\E[Z] = \sum_{i=k+1}^{\lfloor \frac{n}{2} \rfloor + 1} \E[Z_i] \leq \sum_{i=3}^{\lfloor \frac{n}{2} \rfloor + 1} \E[Z_i]$$
Using $\E[Z_i] \leq \E[Z_i']$ then gives
  
    $$ \leq \sum_{i=3}^{\lfloor \frac{n}{2} \rfloor + 1} \E[Z_i'] =\sum_{i=3}^{\lfloor \frac{n}{2} \rfloor + 1} \frac{n}{\phi_{in} (i-1)} \leq \frac{n}{\phi_{in}} \sum_{j=1}^{n} \frac{1}{j} = \bigO(\frac{n}{\phi_{in}}\log n).$$

    \textbf{Phase 2.2:}

     Here we assume $\lfloor \frac{n}{2} \rfloor + 1 \leq |C| < n $. We partition $G$ into the sets $C, W = \Gamma_{\widetilde{G}}(C)$ and $U = V \setminus (S \cup W)$. The diagram is similar to Figure $\ref{fig:phase3}$. In the worst case an update will only be successful when a vertex $v \in W$ is selected. We will first bound the number of rounds to shrink $|U|$ to zero then the number of rounds to shrink $|W|$ to zero.

\par Let $Y$ be a random variable denoting the number of rounds until $|U| = 0$. Let $Y_i$ be the number of rounds until $|U| = i$ given $|U| = i + 1$. We make the assumption that in the worst case $|U|$ can decrease by at most one during an update. Let $|U| = m \leq \lfloor \frac{n}{2} \rfloor -2 \leq n-1$ be the initial size of $U$. Then $Y = \sum_{i = 0}^{m} Y_i$. In order to decrease $|U|$ by one, we choose a vertex in $\Gamma_G(U)$. This occurs with probability $\frac{\Gamma(U)}{n}$. Further, $\phi_{out}|U| \leq \Gamma(U)$. This implies $\frac{\Gamma(U)}{n} \geq \frac{\phi_{out} |U|}{n}$. Therefore each $Y_i$ stochastically dominates a geometric random variable $Y_i'$ with success probability $\frac{\phi_{out} |U|}{n} = \frac{\phi_{out}(i+1)}{n}$.
We have that
$$\E[Y] = \sum_{i = 0}^{m-1} \E[Y_i]  \leq \sum_{i = 0}^{m-1} \E[Y_i'] = \sum_{i = 0}^{m-1} \frac{n}{\phi_{out}(i+1)}   $$

$$ \leq \sum_{i = 0}^{\lfloor \frac{n}{2} \rfloor -3} \frac{n}{\phi_{out}(i+1)} = \frac{n}{\phi_{out}} \sum_{i = 0}^{n-1} \frac{1}{i+1} \leq \frac{n}{\phi_{out}}H_n = \bigO(\frac{n}{\phi_{out}}\log n).$$

\par Now we assume $|U| = 0$ and consider the number of rounds to shrink $|W| = |\Gamma_{\widetilde{G}}(C)|$ to $0$. Let $|W| = p \leq \lfloor \frac{n}{2} \rfloor - 1 \leq n$ be the initial size of $W$. Since $|U| = 0$ then the vertices of $G$ are either in $C$ or $W = \Gamma_{\widetilde{G}}(C)$. Let $R$ be a random variable denoting the number of rounds until $|W| = 0$ given $|W| = p$. Let $R_i$ denote the number of rounds until $|W| = i$ given that $|W| = i+1$. Then $R = \sum_{i = 0}^{p-1} R_i$. Each $R_i$ is a geometric random variable with success probability $\frac{|W|}{n} = \frac{i+1}{n}$. Then the expected number of rounds until a successful update is $\frac{n}{i+1}$. Therefore
$$\E[R] = \sum_{i = 0}^{p-1} \E[R_i] = \sum_{i = 0}^{p-1} \frac{n}{i+1}  $$
Using $p \leq \lfloor \frac{n}{2} \rfloor - 1  \leq n$ gives
$$ \leq \sum_{i = 0}^{\lfloor \frac{n}{2} \rfloor - 2} \frac{n}{i+1} \leq n \sum_{j=1}^n \frac{1}{i} = n H_n = \bigO(n\log n) = \bigO(\frac{n}{\phi_{out}}\log n).$$

The last equality arises since $\phi_{out} \leq 5$. Combining all of the phases gives us that the number of rounds until $|C| = n$ given $|C| = k$ for $2 \leq k \leq n$ is

$$\bigO(\frac{n}{\phi_{in}}\log n) + \bigO(\frac{n}{\phi_{out}}\log n)+\bigO(\frac{n}{\phi_{out}}\log n) = \bigO(\frac{n}{\phi'}\log n).$$
    
\end{proof}

\begin{theorem}\label{conv_time_sc_graph}
    Let $G$ be a strongly connected graph with Orbit $b$ and let $\phi' = \min(\phi_{out},\phi_{in})$. The convergence time of $G$ under the asynchronous maximum model is
    $$\mathcal{T}(G) = \bigO(nb^2 + \frac{n}{\phi'} \log n).$$
\end{theorem}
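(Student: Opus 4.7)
The proof is a clean decomposition that combines the Phase 1 bound (time to form the first strong cycle) with the Phase 2 bound (time to grow the strong cycle set to the whole graph). Fix an arbitrary initial valuation $f \in [\mathcal{F}]$; by Definition \ref{conv_time} it suffices to show $T(G,f) = \bigO(nb^2 + \frac{n}{\phi'}\log n)$ uniformly in $f$. First I would observe that whenever $|C_t| = n$, every vertex in $V$ carries the current graph-wide maximum $M_t$, so $f_t$ is a constant valuation and the process has already entered an absorbing component of $\mathcal{G}$. Letting $Y_f$ denote the number of rounds from $f$ until an absorbing component is reached, I would then write $Y_f \le X + Z$, where $X$ is the (possibly zero) number of rounds until the first strong cycle appears and $Z$ is the number of additional rounds until $|C|$ grows to $n$.

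For the expectations, Theorem \ref{strong_cycle_exp} gives $\E[X] = \bigO(nb^2)$ uniformly in the starting valuation. For $\E[Z]$ I would invoke the strong Markov property at the stopping time $X$: conditional on the valuation at the end of Phase 1 a strong cycle exists, so the strong cycle set has size $|C| \ge 2$, and Theorem \ref{strong_cycle_set_bound} then yields a conditional expectation of $\bigO(\frac{n}{\phi'}\log n)$ uniformly in this conditioning. Linearity of expectation gives $T(G,f) = \E[Y_f] \le \E[X] + \E[Z] = \bigO(nb^2 + \frac{n}{\phi'}\log n)$, and taking the maximum over $f \in [\mathcal{F}]$ delivers the theorem.

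The real work has already been done in Theorems \ref{strong_cycle_exp} and \ref{strong_cycle_set_bound}, so the only thing that could cause trouble is the concatenation of the two phases. The point worth emphasising is that the Phase 2 bound is stated uniformly over all starting strong-cycle-set sizes $k \in \{2,\dots,n\}$, which is exactly what the Markov-property argument requires in order to absorb all of the randomness in how Phase 1 terminates; without this uniformity one would need to track the distribution of $|C|$ at time $X$, which would complicate the argument. No new probabilistic estimate or calculation beyond this bookkeeping is needed.
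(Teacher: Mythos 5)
Your proposal is correct and follows essentially the same route as the paper: both decompose the convergence time into Phase 1 (Theorem \ref{strong_cycle_exp}, giving $\bigO(nb^2)$) and Phase 2 (Theorem \ref{strong_cycle_set_bound}, giving $\bigO(\frac{n}{\phi'}\log n)$) and sum the expectations. Your added remarks --- that $|C_t|=n$ forces a constant valuation and that the uniformity of the Phase 2 bound over the starting size $k$ is what makes the concatenation at the stopping time legitimate --- are welcome points of care that the paper's own one-paragraph proof leaves implicit.
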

\begin{proof}
    Let $f$ be any valuation of $G$. We first need to form a strong cycle set $C$ and then we need to increase $|C|$ until $|C| = n$. If $|C| = 0$ then by Theorem \ref{strong_cycle_exp} in expectation it takes $\bigO(nb^2)$ rounds until $|C| > 0$. By Theorem \ref{strong_cycle_set_bound} in expectation we need $\bigO(\frac{n}{\phi'} \log n)$ rounds until $|C| = n$ given $|C| > 0$. The convergence time $\mathcal{T}(G)$ can therefore be bounded by the sum of these two processes.
\end{proof}

We remark that for undirected graphs, $b = 2$ and $\phi = \phi_{out} = \phi_{in}$ so the bound in Theorem \ref{conv_time_sc_graph} reduces to $\bigO(\frac{n}{\phi}\log n)$ which agrees with our earlier analysis.

\begin{question}
    Is it possible to formulate $\T(G)$ in terms of $\phi_{in}$ (or respectively $\phi_{out}$) only?
\end{question}
If we wanted to show the answer to the above is negative, it would be enough to provide a graph with large (near constant) $\phi_{out}$ but arbitrarily small $\phi_{in}$ (or vice-versa). This would indicate that the parameter $\phi'$ is necessary to obtain a good characterisation of the convergence times in terms of the vertex expansions.

\begin{question}
    Do there exist strongly connected graphs $G$ with $\phi_{in}$ arbitrarily small ($\phi_{in} = o(1)$) while $\phi_{out}$ is approximately a constant ($\phi_{out} = \theta(1)$).
\end{question}

\begin{question}
    Can the parameter $b$ be written in terms of $\phi_{in}$ (respectively $\phi_{out}$). For example we would expect that every vertex in a graph with good vertex out-expansion to be contained in a small cycle.
\end{question}

\chapter{Conclusion}

We introduced the notion of an iterative graph model. This captures \textit{any} model which updates vertices of a graph $G$ according to some rule. If there are only a finite number of possible states a graph $G$ can update to we can define the Markov Chain of Possibilities $\mathcal{G}$. This allows for a rigorous definition of the period and convergence time in terms of absorbing components of $\mathcal{G}$ and the number of rounds taken to reach them in the worst case. The asynchronous maximum model is introduced as one such iterative graph model.

\par Chapter 3 is concerned with undirected graphs $G$. It is shown that every absorbing component in $\mathcal{G}$ has size $1$ and contains a constant valuation state. The convergence time is bounded by looking at two phases. Phase 1 is concerned with the time until a strong edge is formed and Phase 2 examines the expected number of rounds until the values at this strong edge propagate to the rest of the graph. It is shown the convergence time is lower bounded by $\Theta(n \log n)$ and bounded above by $\bigO(n^2)$. Further these bounds are tight by example of $K_n$ and $P_n$. We can potentially improve the upper bound for a large class of graphs by introducing the vertex expansion. We show $\mathcal{T}(G) = \bigO(\frac{n}{\phi}\log n)$ which is better than $\bigO(n^2)$ when $\phi = o(\frac{\log n}{n})$. Further, we show that with high probability the process converges in $\bigO((\frac{n}{\phi}\log n)(g(n)))$ rounds where $g(n)$ is any function such that $\frac{1}{g^2(n)} \to 0$ as $n\to \infty$. This result shows that the number of rounds until convergence only exceeds the expected number of rounds until convergence by a factor of $g(n)$.

\par In Chapter 4 we study the model for strongly connected directed graphs. The period of the model is again shown to be $1$. We again bound the convergence time by splitting into two phases. Phase 1 is concerned with the time taken to reach a strong cycle. The expected number of rounds until a strong cycle is formed is shown to be bounded above by a game of gamblers ruin. Phase 2 is concerned with the expected number of rounds until the strong cycle set has propagated through the entire graph. The expected number of rounds of both phases is shown to be in $\bigO(nb^2 + \frac{n}{\phi'}\log n)$.

\section{Future Research}

The study of the asynchronous maximum model leaves us with some interesting questions for further research. The first clear extension would be to generalise the results for weakly connected graphs. In this case the period is still $1$ however the absorbing states in the Markov Chain are not constant valuation states. An example of such a graph in an absorbing state is given in Figure \ref{fig:wk_ex}.

\begin{figure}[h!]
    \centering
    \begin{tikzpicture}[
every edge/.style = {draw=black,very thick},
 vrtx/.style args = {#1/#2}{%
      circle, draw, thick, fill=white,
      minimum size=8mm, label=#1:#2}]
\node(A) [vrtx= center/2] at (0, 0) {};
\node(B) [vrtx= center/2] at (2, 1) {};
\node(C) [vrtx= center/1] at (2, -1) {};

\path   (B) edge[<-] (A)
        (C) edge[<-] (A);
\end{tikzpicture}
    \caption{The values on this weakly connected graph will not change}
    \label{fig:wk_ex}
\end{figure}

\par The convergence time for a weakly connected graph $G$ could be analysed as follows. Firstly, partition $G$ into maximal strongly connected components. This partition induces a partial ordering of the maximal strongly connected components. The convergence time can then be bounded by the time taken for each component to converge.

\par In the strongly connected analysis we are unsure whether the parameter $\phi' = \min\{\phi_{in},\phi_{out}\}$ is needed. Is it possible to formulate $\mathcal{T}(G)$ in terms of $\phi_{in}$ only? To answer this question in the negative, and an interesting problem in its own right, would be to construct a graph $G$ (which is strongly connected) where one of these parameters is large while the other is arbitrarily small. Does such a $G$ exists?\newline

\par Further, consider the strongly connected case. We know the absorbing states are constant valuation states. The next natural question to ask is what values are on the vertices when the process converges. More precisely, let $f_0$ be an initial valuation which takes values in $S \subseteq [n]$. For each $s \in S$, what is the probability that the final valuation is of the form $f_{con}(x) = s$ for all $x \in V$? We will call the valuation it equals at convergence the absorbing valuation of $(G,f_0)$.

\begin{figure}[h!]
\centering
    \begin{tabularx}{0.95\textwidth}{*{3}{>{\centering\arraybackslash}X}}
\begin{tikzpicture}[
every edge/.style = {draw=black,very thick},
 vrtx/.style args = {#1/#2}{%
      circle, draw, thick, fill=white,
      minimum size=8mm, label=#1:#2}
                    ]
\node(A) [vrtx= center/1] at (0, 0) {};
\node(B) [vrtx= center/1] at (-1, 1.6) {};
\node(C) [vrtx= center/1] at (0,3.2) {};
\node(D) [vrtx= center/1] at (1.8,3.2) {};
\node(E) [vrtx= center/1] at (2.8, 1.6) {};
\node(F) [vrtx= center/1] at (1.8, 0) {};
\path   (A) edge[->] (B)
        (B) edge[->] (C)
        (C) edge[->] (D)
        (D) edge[->] (E)
        (E) edge[->] (F)
        (F) edge[->] (A);
\end{tikzpicture}
    \caption*{$G_1$}  
    &   
\begin{tikzpicture}[
every edge/.style = {draw=black,very thick},
 vrtx/.style args = {#1/#2}{%
      circle, draw, thick, fill=white,
      minimum size=8mm, label=#1:#2}
                    ]
\node(A) [vrtx= center/1] at (0, 0) {};
\node(B) [vrtx= center/2] at (-1, 1.6) {};
\node(C) [vrtx= center/3] at (0,3.2) {};
\node(D) [vrtx= center/4] at (1.8,3.2) {};
\node(E) [vrtx= center/5] at (2.8, 1.6) {};
\node(F) [vrtx= center/6] at (1.8, 0) {};
\path   (A) edge[->] (B)
        (B) edge[->] (C)
        (C) edge[->] (D)
        (D) edge[->] (E)
        (E) edge[->] (F)
        (F) edge[->] (A);
\end{tikzpicture}
\caption*{$G_2$}  
&
\begin{tikzpicture}[
every edge/.style = {draw=black,very thick},
 vrtx/.style args = {#1/#2}{%
      circle, draw, thick, fill=white,
      minimum size=8mm, label=#1:#2}
                    ]
\node(A) [vrtx= center/1] at (0, 0) {};
\node(B) [vrtx= center/2] at (-1, 1.6) {};
\node(C) [vrtx= center/2] at (0,3.2) {};
\node(D) [vrtx= center/2] at (1.8,3.2) {};
\node(E) [vrtx= center/1] at (2.8, 1.6) {};
\node(F) [vrtx= center/2] at (1.8, 0) {};
\path   (A) edge[->] (B)
        (B) edge[->] (C)
        (C) edge[->] (D)
        (D) edge[->] (E)
        (E) edge[->] (F)
        (F) edge[->] (A);
\end{tikzpicture}
    \caption*{$G_3$}  
 \end{tabularx}
\caption{Three graphs with different initial valuations}
\label{fig:example_prob}
\end{figure} 

\par In some cases the answer is trivial. For example, $G_1$ in Figure \ref{fig:example_prob} with probability $1$ the absorbing valuation is the constant valuation sending each vertex to the value $1$. For $G_2$ potentially we should expect that all constant valuations (sending all vertices to a value in $[6]$) are equally likely to be the absorbing valuation. For $G_3$ the analysis would become more difficult but we expect the probability to depend on the following two criteria:
\begin{enumerate}
    \item The initial density of the value in the graph. For example the density of the value $2$ in $G_3$ is $\frac{4}{6}$. We would expect it is more likely for the absorbing valuation to be the valuation sending every vertex to $2$.
    \item The structure of the graph in relation to the value.
\end{enumerate}

The study of the asynchronous maximum model leads to interesting proof techniques which could potentially generalise to other iterative graph models. The questions we are left with also provide some potential avenues for future research.

\RaggedRight\sloppy\printbibliography

\chapter{Appendix}
\section{Valuations}\label{val_app}
In this section we show that our definition of valuations as functions into $[n]$ does not lost any generality in the asynchronous maximum model. This section should be treated as stand alone and notation introduced here (such as a valuation and valuation family) should not be confused with the definitions given in the body of the paper.

\begin{definition}\label{config}
    (Configurations) Let $C$ be a non-empty, totally ordered set. We call $C$ a configuration.
\end{definition}

\begin{definition} (Valuation in terms of $C$)
Let $G = (V,E)$ be a graph (directed or undirected) with vertex set $V$ and edge set $E$. Fix $t \in \Z$ such that $t \geq 0$. A \textit{valuation} of the graph $G$ is a function $f_t : V \to C$.
\end{definition}

We note the above definition is more general that the one used in the paper, since now $C = [n], \N, \R$ is allowed. One may naturally ask if the set $C$ effects the asynchronous maximum model. For example is there a difference between choosing $C = \R$ and $C = \N$?. Given some mild assumptions on the cardinality of $C$ the answer is negative as we now show. This will justify our choice of $C = [n]$ in the paper.

\begin{definition}\label{iso_of_val_def}
    (Isomorphism of Valuations) Fix a graph $G = (V,E)$. Let $C,C'$ be configurations. Let $f : V \to C$ and $g : V \to C'$ be two valuations. An \textit{Isomorphism of Valuations} is a map $\alpha : V \to V$ such that the following hold:
    \begin{enumerate}
        \item $\alpha$ is a graph automorphism. That is $\alpha : V \to V$ such that $(u,v) \in E$ if and only if $(\alpha(u),\alpha(v)) \in E$.
        \item For all vertices $u,v \in V$, $f(u) < f(v)$ (resp. $=, >$) if and only if $g(\alpha(u)) < g(\alpha(v))$ (resp. $=, >$).
    \end{enumerate}
\end{definition}

\begin{example}\label{iso_of_val_ex}
    (Isomorphism of Valuations) Please consider the graph $G$ in Figure \ref{graph_1} with vertex labels $v_1,\dots, v_4$. 
\end{example}

There is an automorphism $\alpha: V \to V$ given by
    $$\alpha(v_1) = v_2$$
    $$\alpha(v_2) = v_1$$
    $$\alpha(v_3) = v_3$$
    $$\alpha(v_4) = v_4.$$
Now consider the following two valuations $f$ and $g$ given in Figure \ref{graphs_ex_2}. We have that $\alpha$ is an isomorphism of valuations. Note that the automorphism $\alpha$ swaps $v_1$ and $v_2$ so it does indeed preserve orderings between all vertices. Even though $f$ contains numbers from $\R$ and $g$ contains numbers in $\N$, we expect the asynchronous maximum model to update them in the same way. Therefore the isomorphism of valuations preserves the important information in the valuation, namely the relative ordering of the vertices.

\begin{figure}[h!]
    \centering
    \begin{tikzpicture}[
every edge/.style = {draw=black,very thick},
 vrtx/.style args = {#1/#2}{%
      circle, draw, thick, fill=white,
      minimum size=8mm, label=#1:#2}
                    ]
\node(A) [vrtx= center/$v_1$] at (0, 1) {};
\node(B) [vrtx= center/$v_2$] at (0, -1) {};
\node(C) [vrtx= center/$v_3$] at (1.5, 0) {};
\node(D) [vrtx= center/$v_4$] at (3, 0) {};

\path   (A) edge (B)
        (A) edge (C)
        (C) edge (B)
        (C) edge (D);
\end{tikzpicture}
    \caption{Graph $G$}
    \label{graph_1}
\end{figure}

\begin{figure}[h!]
\centering
    \begin{tabularx}{0.8\textwidth}{*{2}{>{\centering\arraybackslash}X}}
\begin{tikzpicture}[
every edge/.style = {draw=black,very thick},
 vrtx/.style args = {#1/#2}{%
      circle, draw, thick, fill=white,
      minimum size=8mm, label=#1:#2}
                    ]
\node(A) [vrtx= center/$\frac{1}{2}$] at (0, 1) {};
\node(B) [vrtx= center/$\frac{1}{3}$] at (0, -1) {};
\node(C) [vrtx= center/$\sqrt{2}$] at (1.5, 0) {};
\node(D) [vrtx= center/$15$] at (3, 0) {};

\path   (A) edge (B)
        (A) edge (C)
        (C) edge (B)
        (C) edge (D);
\end{tikzpicture}
    \caption*{Valuation $f$}  
    &   
\begin{tikzpicture}[
every edge/.style = {draw=black,very thick},
 vrtx/.style args = {#1/#2}{%
      circle, draw, thick, fill=white,
      minimum size=8mm, label=#1:#2}
                    ]
\node(A) [vrtx= center/$1$] at (0, 1) {};
\node(B) [vrtx= center/$2$] at (0, -1) {};
\node(C) [vrtx= center/$3$] at (1.5, 0) {};
\node(D) [vrtx= center/$4$] at (3, 0) {};

\path   (A) edge (B)
        (A) edge (C)
        (C) edge (B)
        (C) edge (D);
\end{tikzpicture}
\caption*{Valuation $g$}  
    \end{tabularx}
\caption{Two isomorphic valuations}
\label{graphs_ex_2}
\end{figure}

\begin{notation}
    The $\Box$ relation will be used to denote one of $<,=$ or $>$. This is to ease notation instead of writing (resp. $<, =, >$) in each proof. This is because condition 2 in Definition \ref{iso_of_val_def} requires us to check $<,=$ and $>$ but the argument is generally the same in each case.
\end{notation}
The definition below is more general than the one used in the paper.
\begin{definition}
    Let $G$ be a graph. The \textit{valuation family of $G$} is the set of all valuations of $G$ and is denoted by
    $$\mathcal{F}_G = \{f \, | f: V \to C \text{ is a valuation }\}$$
    When the graph is clear from context, $\mathcal{F}_G = \mathcal{F}$.
\end{definition}

 The isomorphism of valuations will allow us to define an equivalence relation on the valuation family $\mathcal{F}_G$. Provided the updates of the asynchronous maximum model are in some sense the same on these equivalence classes then the number of valuations can instead be bounded by the number of equivalence classes. This idea is made precise in the following three lemmas.

\begin{lemma}\label{iso_of_val_is_equiv_rel}
    Fix a graph $G$ and consider $\mathcal{F}$. Define the relation $f \equiv g$ if there exists an isomorphism of valuations from $f$ to $g$. Then $\equiv$ is an equivalence relation on $\mathcal{F}$.
\end{lemma}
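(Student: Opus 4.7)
The plan is to verify the three defining properties of an equivalence relation in turn: reflexivity, symmetry, and transitivity. Each check will amount to exhibiting an appropriate automorphism of $V$ and verifying the order-preservation condition (2) of Definition \ref{iso_of_val_def}. Throughout I will use the $\Box$ shorthand for the three cases $<$, $=$, $>$, since the verification is uniform in that symbol.

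For reflexivity, I would take $\alpha = \id_V$. This is clearly a graph automorphism, and for any $u,v \in V$ the statement $f(u) \Box f(v) \iff f(\id(u)) \Box f(\id(v))$ is a tautology, so $\id_V$ is an isomorphism of valuations from $f$ to $f$.

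For symmetry, suppose $\alpha : V \to V$ witnesses $f \equiv g$. Since $\alpha$ is a graph automorphism it is in particular bijective, so $\alpha^{-1}$ exists, and $\alpha^{-1}$ is itself a graph automorphism (this is standard: $(u,v) \in E \iff (\alpha(\alpha^{-1}(u)), \alpha(\alpha^{-1}(v))) \in E \iff (\alpha^{-1}(u), \alpha^{-1}(v)) \in E$, where the last step uses that $\alpha$ is an automorphism). To verify condition (2) for $\alpha^{-1}$ as a map from $g$ to $f$, pick any $u,v \in V$ and set $u' = \alpha^{-1}(u)$, $v' = \alpha^{-1}(v)$. Condition (2) for $\alpha$ gives $f(u') \Box f(v') \iff g(\alpha(u')) \Box g(\alpha(v'))$, i.e. $f(\alpha^{-1}(u)) \Box f(\alpha^{-1}(v)) \iff g(u) \Box g(v)$, which is exactly condition (2) for $\alpha^{-1}$.

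For transitivity, suppose $\alpha$ witnesses $f \equiv g$ and $\beta$ witnesses $g \equiv h$. Then $\beta \circ \alpha : V \to V$ is a composition of graph automorphisms, hence a graph automorphism. For condition (2), chain the biconditionals: for any $u,v \in V$,
\begin{equation*}
f(u) \Box f(v) \iff g(\alpha(u)) \Box g(\alpha(v)) \iff h(\beta(\alpha(u))) \Box h(\beta(\alpha(v))),
\end{equation*}
where the first equivalence uses $\alpha$ and the second uses $\beta$ applied to the vertices $\alpha(u), \alpha(v)$. Hence $\beta \circ \alpha$ is an isomorphism from $f$ to $h$, proving $f \equiv h$. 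There is no real obstacle here; the only thing to be careful about is that condition (2) must be verified for all three comparators $<, =, >$, but this is handled uniformly by the $\Box$ notation since each equivalence in the chain holds for each fixed choice of $\Box$.
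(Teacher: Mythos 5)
Your proposal is correct and follows essentially the same route as the paper's proof: identity for reflexivity, the inverse automorphism for symmetry (with the same substitution $u' = \alpha^{-1}(u)$, $v' = \alpha^{-1}(v)$), and composition with chained biconditionals for transitivity, all handled uniformly via the $\Box$ notation. No gaps; if anything your write-up of the symmetry step is cleaner than the paper's.
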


\begin{proof}
    Let $C,C',C''$ be configurations. Let $f:V \to C,g:V \to C',h:V\to C''$ be valuations in $\mathcal{F}$.
    \par \textbf{Reflexive} The map $\alpha : V \to V$ given by $\alpha(v) = v$ is a graph isomorphism. Further, it is an isomorphism of valuations since for $u,v \in V$ we have $f(u) \Box f(v)$ if and only if $f(u) = f(\alpha(u)) \Box f(\alpha(v)) = f(v)$. Therefore $f \equiv f$.
    \par \textbf{Symmetric}
    Suppose $f \equiv g$. Then there exists an isomorphism of valuations $\alpha: V \to V$. We want to show $\alpha^{-1}$ is an isomorphism of valuations from $g$ to $f$. $\alpha^{-1}$ is a graph isomorphism since the inverse of a graph automorphism is itself a graph isomorphism.
    \par To show the second condition let $u',v' \in V$. Note these can be written as $u' = \alpha^{-1}(u)$ and $ v' = \alpha^{-1}(v))$ for some $u,v \in V$. Since $f\equiv g$, we have $f(u') \Box f(v')$ if and only if $g(\alpha(u')) \Box g(\alpha(v'))$. This implies $f(\alpha^{-1}(u)) \Box f(\alpha^{-1}(v))$ if and only if $g(u) = g(\alpha(\alpha^{-1}(u))) < \Box g(\alpha(\alpha^{-1}(u))) = g(v)$, which is exactly the second condition. Therefore $g \equiv f$.
    \par \textbf{Transitive}
    Suppose $f \equiv g$ under the isomorphism $\alpha$ and $g \equiv h$ under the isomorphism $\beta$. We show $f \equiv h$ under the isomorphism of valuations $\beta \circ \alpha$. A composition of automorphisms is an automorphism, so $\beta \circ \alpha$ indeed satisfies the first condition.
    \par Let $u,v \in V$. Since $f \equiv g$, we have $f(u) \Box f(v)$ if and only if $g(\alpha(u)) \Box g(\alpha(v))$. Further since $g \equiv h$, $g(\alpha(u)) \Box g(\alpha(v))$ if and only if $h(\beta(\alpha(u)) \Box h(\beta(\alpha(v)))$. Therefore
    $$f(u) \Box f(v) \text{ if and only if } h(\beta(\alpha(u)) \Box h(\beta(\alpha(v)))$$
    Therefore the relation is transitive.
\end{proof}

\begin{notation}
    Let $[\mathcal{F}]$ denote the equivalence classes of $\mathcal{F}$ under the equivalence relation $\equiv$ defined in $\ref{iso_of_val_is_equiv_rel}$. If $f \in \mathcal{F}$, let $[f]$ denote its equivalence class.
\end{notation}
An equivalence relation partitions a set into equivalence classes. We now show that valuations in the same equivalence class remain in the same equivalence class after being updated by the asynchronous maximum model. Informally the updates of the asynchronous maximum model do not care about the underlying configuration set $C$, but rather only the equivalence class of the valuation function $f: V \to C$.

\begin{lemma}\label{iso_con_period}
    Let $G$ be a graph and $C,C'$ be configurations. Let $f_t: V \to C$, $g_t: V \to C'$ be valuations such that $f_t \equiv g_t$ under the isomorphism $\alpha$. Then $f_{t+1} \equiv g_{t+1}$.
\end{lemma}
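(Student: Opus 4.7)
The plan is to build a coupling between the updates of $f_t$ and $g_t$: fix an isomorphism of valuations $\alpha : V \to V$ witnessing $f_t \equiv g_t$, and suppose that whenever the asynchronous maximum model selects a vertex $v' \in V$ to update in $f_t$ (producing $f_{t+1}$), the corresponding update to $g_t$ selects $\alpha(v')$ (producing $g_{t+1}$). This coupling is legitimate because $\alpha$ is a bijection on $V$, so both choices occur with the same probability $1/n$. I then aim to show that the very same map $\alpha$ remains an isomorphism of valuations from $f_{t+1}$ to $g_{t+1}$, which yields $f_{t+1} \equiv g_{t+1}$.

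Condition 1 of Definition \ref{iso_of_val_def} is immediate, since $\alpha$ is still the same graph automorphism, a property that does not depend on any valuation. For Condition 2, I would verify $f_{t+1}(u)\, \Box\, f_{t+1}(v) \iff g_{t+1}(\alpha(u))\, \Box\, g_{t+1}(\alpha(v))$ for all $u,v \in V$ by case analysis on whether $u$ or $v$ coincides with $v'$. If neither does, then by the update rule $f_{t+1}$ agrees with $f_t$ on $\{u,v\}$, and since $\alpha$ is injective, $g_{t+1}$ agrees with $g_t$ on $\{\alpha(u),\alpha(v)\}$, so the claim reduces directly to $f_t \equiv g_t$. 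If $u = v = v'$, the statement is trivial.

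The substantive case is when exactly one of them, say $u$, equals $v'$. If $\Gamma(v') = \emptyset$, then by the automorphism property $\Gamma(\alpha(v')) = \emptyset$ as well, so neither $v'$ nor $\alpha(v')$ changes value and we are back to the easy subcase. Otherwise $f_{t+1}(v') = f_t(w^*)$ for some maximizer $w^* \in \Gamma(v')$. Since $\alpha$ is a graph automorphism, $\Gamma(\alpha(v')) = \{\alpha(w) : w \in \Gamma(v')\}$, and since $\alpha$ preserves strict orderings (and hence maxima), $g_t(\alpha(w^*))$ realises $\max_{\alpha(v')\sim w'} g_t(w')$, so $g_{t+1}(\alpha(v')) = g_t(\alpha(w^*))$. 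Chaining the equivalences
\[ f_{t+1}(v')\, \Box\, f_{t+1}(v) \iff f_t(w^*)\, \Box\, f_t(v) \iff g_t(\alpha(w^*))\, \Box\, g_t(\alpha(v)) \iff g_{t+1}(\alpha(v'))\, \Box\, g_{t+1}(\alpha(v)) \]
(the middle step using $f_t \equiv g_t$, the last using $v \neq v'$ and hence $g_{t+1}(\alpha(v)) = g_t(\alpha(v))$) completes the verification.

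The main obstacle is precisely this one-differs case: I must argue that $\alpha$ sends a maximizer of $f_t$ on $\Gamma(v')$ to a maximizer of $g_t$ on $\Gamma(\alpha(v'))$. This is exactly where the relative-ordering clause of Definition \ref{iso_of_val_def} is deployed, combined with the fact that a graph automorphism transports out-neighbourhoods to out-neighbourhoods. Once this observation is in hand the coupling closes, and because $\alpha$ is a bijection the induced transitions of the Markov chain of possibilities occur with matching probabilities, upgrading the statement into a genuine probabilistic coupling of the two processes.
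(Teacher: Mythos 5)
Your proposal is correct and follows essentially the same route as the paper: fix the witnessing isomorphism $\alpha$, pair the update at $v'$ with the update at $\alpha(v')$, and verify Condition 2 by case analysis, with the crux being that $\alpha$ carries a maximizer of $f_t$ over $\Gamma(v')$ to a maximizer of $g_t$ over $\Gamma(\alpha(v'))$ because it transports out-neighbourhoods and preserves relative order. Your explicit handling of the $\Gamma(v') = \emptyset$ subcase is a small point the paper's proof leaves implicit, but otherwise the two arguments coincide.
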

\begin{proof}
       We claim that $\alpha$ is also an isomorphism from $f_{t+1}$ to $g_{t+1}$. It is still a graph automorphism so it remains to show that the second condition is satisfied.
       Let $v' \in V$ be the vertex chosen for update under the asynchronous maximum model under valuation $f_t$. This corresponds to the choice of $\alpha(v')$ for the valuation $g_t$. By definition we have that
               \begin{equation*}
            f_{t+1}(v) = \begin{cases}
                \text{max}_{v \sim u}\{f_{t}(u)\} & \text{if } v = v' \\
                f_{t}(v) & \text{if } v \neq v' \\
            \end{cases}
        \end{equation*}

                \begin{equation*}
            g_{t+1}(\alpha(v)) = \begin{cases}
                \text{max}_{\alpha(v) \sim \alpha(u)}\{g_{t}(\alpha(u))\} & \text{if } \alpha(v) = \alpha(v') \\
                g_{t}(\alpha(v)) & \text{if } \alpha(v) \neq \alpha(v') \\
            \end{cases}
        \end{equation*}

    Let $u,v \in V$. We consider two cases.
    \par \textbf{Case 1: } Both $u,v \neq v'$. Since $\alpha$ is an isomorphism of valuations then for $u,v \in V$,
    
    \begin{equation*}
    f_t(u) \Box f_t(v) \text{ if and only if } g_t(\alpha(u)) \Box g_t(\alpha(v)) 
    \end{equation*}
    
    Since for $v\neq v'$, $f_{t+1}(v) = f_t(v)$ and $g_{t+1}(v) = g_t(v)$ then
    \begin{equation*}
     f_{t+1}(u) \Box f_{t+1}(v) \text{ if and only if } g_{t+1}(\alpha(u)) \Box g_{t+1}(\alpha(v)) 
    \end{equation*}
    Therefore the second condition of an isomorphism of valuations holds.
    
    \par \textbf{Case 2: } One of $u$ or $v$ is $v'$. Without loss of generality suppose $v = v'$. Suppose
    $$f_{t+1}(u) \Box f_{t+1}(v').$$
    We show this holds if and only if 
    $$g_{t+1}(\alpha(u)) \Box f_{t+1}(\alpha(v')).$$
    If $f_{t+1}(u) \Box f_{t+1}(v')$ then using the definition of $f_t$ gives
    \begin{equation}\label{assume}
        f_{t+1}(u) = f_{t}(u) \Box \max_{v'\sim w}\{f_t(w)\} = f_{t+1}(v')
    \end{equation}
    
    Let $w'$ be the vertex adjacent to $v'$ such that $\max_{v'\sim w}\{f_t(w)\} = f_t(w')$. Then Equation $\ref{assume}$ implies
    \begin{equation}\label{assume2}
        f_{t}(u) \Box f_t(w')
    \end{equation}
    
    Since $f_t \equiv g_t$, Equation \ref{assume2} holds \textit{if and only if} $g_{t}(\alpha(u))\Box g_t(\alpha(w')).$ But $g_{t}(\alpha(u)) = g_{t+1}(\alpha(u))$ since the vertex $\alpha(u)$ is not updated. Note $\alpha$ is a graph automorphism, so it preserves edge relations. Therefore $\alpha(v')\sim \alpha(w)$ if and only if $v'\sim w$. Further, $\alpha$ preserves in(equalities) between all vertices in the graph, so it preserves the maximum. Therefore
    $$g_t(\alpha(w')) = \max_{\alpha(v')\sim \alpha(w)}\{g_t(\alpha(w))\}$$
    But by definition we have
    $$g_t(\alpha(w')) = \max_{\alpha(v')\sim \alpha(w)}\{g_t(\alpha(w))\} = g_{t+1}(\alpha(v')).$$
    Combining this gives that
    $g_{t+1}(\alpha(u)) = g_{t}(\alpha(u))\Box g_t(\alpha(w')) = g_{t+1}(\alpha(v'))$. Therefore the second condition holds.
\end{proof}

Lemma \ref{iso_con_period} states we can consider updates on an element in $[\mathcal{F}]$ rather than on a single valuation in $\mathcal{F}$. When combined with the Lemma below it classifies the equivalence classes of possible valuations on an arbitrary graph $G$.

\begin{lemma}\label{iso_to_n}
    Let $G$ be a graph and $f: V \to C$ a valuation. There exists $g: V \to [n]$ and an isomorphism of valuations $\alpha$ from $f$ to $g$.
\end{lemma}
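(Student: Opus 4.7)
The plan is to take $\alpha = \id_V$, the identity map on $V$, and construct $g$ by replacing each value $f(v) \in C$ with its rank among the values attained by $f$. Since $\id_V$ is trivially a graph automorphism, the entire burden of the proof is to build $g: V \to [n]$ so that $f$ and $g$ induce the same ternary order relation on pairs of vertices.

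More concretely, I would first consider the image $f(V) \subseteq C$. Since $|V| = n$ and $C$ is totally ordered, $f(V)$ is a finite totally ordered set of cardinality $k$ for some $1 \leq k \leq n$. I would enumerate its elements in increasing order as $c_1 < c_2 < \cdots < c_k$, and then define
\[
g(v) = i \quad \text{whenever} \quad f(v) = c_i.
\]
This $g$ is well-defined because the $c_i$ are distinct, and it maps into $\{1, 2, \ldots, k\} \subseteq [n]$, so $g: V \to [n]$ as required.

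To verify that $\alpha = \id_V$ is an isomorphism of valuations from $f$ to $g$, the first condition of Definition \ref{iso_of_val_def} is immediate since the identity is an automorphism of any graph. For the second condition, let $u, v \in V$, say $f(u) = c_i$ and $f(v) = c_j$. Then $g(\alpha(u)) = g(u) = i$ and $g(\alpha(v)) = g(v) = j$. Because the enumeration of $f(V)$ is order-preserving, $c_i \Box c_j$ (where $\Box$ is one of $<, =, >$) if and only if $i \Box j$, which gives the required equivalence $f(u) \Box f(v) \iff g(\alpha(u)) \Box g(\alpha(v))$.

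There is no real obstacle here: the argument amounts to observing that only the order type of $f$ on $n$ points matters, and every such order type on $n$ points can be realised by a function into $[n]$. The only mild subtlety is making sure the target set is $[n]$ rather than an arbitrary subset of $\N$, but since we have at most $n$ distinct values we can always fit them into $\{1, \ldots, n\}$ using the rank assignment above.
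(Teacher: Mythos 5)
Your proposal is correct and takes essentially the same approach as the paper: both choose $\alpha$ to be the identity map and define $g(v)$ as the rank of $f(v)$ among the distinct values attained by $f$, the only difference being that the paper builds this rank function recursively over a sorted enumeration of the vertices while you define it directly from the sorted distinct values $c_1 < \cdots < c_k$. Your version is, if anything, a slightly cleaner presentation of the identical argument.
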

Therefore we can relabel the vertices with integers in $[n].$ $\alpha$ in Example $\ref{iso_of_val_ex}$ is one such isomorphism.

\begin{proof}
    Let $n = |V|$. Consider the valuation $f$. Let $v_1,v_2,\dots,v_n$ be the vertices of $V$. Since $C$ is a totally ordered set, there exists an ordering $i_1,i_2,\dots,i_n \in \N$ of the vertices such that the sequence
    $$f(v_{i_1}),f(v_{i_2}),f(v_{i_3}),\dots,f(v_{i_n})$$
    is non-decreasing. For $j \in \{1,2,\dots,n\}$, Define 
    \begin{equation*}
        g(v_{i_j}) = \begin{cases}
            1 & \text{ if $j = 1$} \\
            g(v_{i_{j-1}}) & \text{ if $j \geq 2$ and $f(v_{i_{j-1}}) = f(v_{i_{j-1}})$} \\
            g(v_{i_{j-1}}) + 1 & \text{ if $j \geq 2$ and $f(v_{i_{j-1}}) < f(v_{i_{j-1}})$} \\
        \end{cases}
    \end{equation*}
    Now consider $\alpha: V \to V$ which is the identity map, that is it maps every vertex to itself. This is an autormophism. Further, it preserves the ordering of the vertices since that is precisely how $g$ was defined, to preserve the order of the valuation $f$.
    
\end{proof}

\par 
Therefore Lemmas \ref{iso_con_period} and \ref{iso_to_n} state that every valuation on a graph $G$ can be though of as a function from $V \to [n]$. Then without loss of generality in the asynchronous maximum model we can assume $C = [n]$. Further, the number of equivalence classes in $[\mathcal{F}]$ is upper bounded by the number of functions from $V \to [n]$, which is $n^n$.

\section{Inequalities}
We will make use of some important inequalities.

\subsection{Harmonic Number}
\begin{lemma}\label{harmonic_num}
    Let $H_n = \sum_{i=1}^n \frac{1}{i}$ denote the $n$-th harmonic number. Then
    $$n H_{n-2} = \Theta(n \log n)$$
\end{lemma}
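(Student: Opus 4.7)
The plan is to prove this via the standard integral-comparison bounds on the harmonic numbers, which is the most direct route. First I would establish the well-known two-sided estimate $\log(n+1) \leq H_n \leq 1 + \log n$ by comparing the sum $\sum_{i=1}^n \frac{1}{i}$ to the integral $\int_1^{n+1} \frac{1}{x}\,dx$ from below (using that $\frac{1}{i} \geq \int_i^{i+1} \frac{1}{x}\,dx$) and to $1 + \int_1^n \frac{1}{x}\,dx$ from above (using that $\frac{1}{i} \leq \int_{i-1}^i \frac{1}{x}\,dx$ for $i \geq 2$). This yields $H_n = \Theta(\log n)$.

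Next I would transfer the estimate from $H_n$ to $H_{n-2}$. Since $H_n - H_{n-2} = \frac{1}{n-1} + \frac{1}{n} \to 0$ as $n \to \infty$, we get $H_{n-2} = H_n - o(1) = \Theta(\log n)$. Equivalently, one can observe directly from the integral bounds that $\log(n-1) \leq H_{n-2} \leq 1 + \log(n-2)$, and both sides are $\Theta(\log n)$ since $\log(n-2)/\log n \to 1$.

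Finally, multiplying through by $n$ gives constants $c_1, c_2 > 0$ and $n_0 \in \N$ such that $c_1 \cdot n \log n \leq n H_{n-2} \leq c_2 \cdot n \log n$ for all $n \geq n_0$, which is exactly the statement $n H_{n-2} = \Theta(n \log n)$. The argument is entirely routine; there is no real obstacle, just careful bookkeeping of the integral comparison and the shift from $n$ to $n-2$ inside the logarithm. The only minor subtlety is handling small values of $n$ (say $n \leq 2$) where $H_{n-2}$ is not defined or trivial, but since the claim is asymptotic as $n \to \infty$ this is irrelevant.
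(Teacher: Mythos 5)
Your proof is correct, and it reaches the conclusion by a slightly different (and more self-contained) route than the paper. The paper imports the refined expansion $H_n = \log n + \gamma + \tfrac{1}{2n} - \varepsilon_n$ with $0 \leq \varepsilon_n \leq \tfrac{1}{8n^2}$ from a cited reference and then does explicit constant-chasing to extract both the $\Omega(n\log n)$ and $\bigO(n\log n)$ bounds, including numerical constants valid for small $n$ (e.g.\ $0.868\, n\log n \leq nH_{n-2}$ for $n \geq 10$). You instead derive the two-sided bound $\log(n+1) \leq H_n \leq 1 + \log n$ directly from the integral comparison, shift to $H_{n-2}$ via $H_n - H_{n-2} = \tfrac{1}{n-1} + \tfrac{1}{n} = o(1)$, and multiply by $n$. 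Your argument needs no external citation and is shorter; the paper's version buys sharper explicit constants, which it does not actually use elsewhere, so nothing downstream depends on that extra precision. Both proofs are complete, and your handling of the shift from $n$ to $n-2$ and of small $n$ is adequate for an asymptotic statement.
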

\begin{proof}
    It is well known (\cite{Boas}) that
    $$H_n = \log n + \gamma + \frac{1}{2n} - \varepsilon_n.$$
    Where $\gamma \approx 0.577 $ is a constant and $0\leq \varepsilon_n \leq \frac{1}{8n^2}$. First we show $nH_n = \Omega(n\log n)$. We have
    \begin{align*}
        & nH_n = n\log n + n\gamma + \frac{1}{2} - n\varepsilon_n && \text{} \\
        & \geq n\log n - \frac{n}{8n^2} && \text{} \\
        & = n\log n - \frac{1}{8n} && \text{} \\
        & \geq n\log n - \frac{1}{8} && \text{} \\
        & \geq n\log n(1-\frac{1}{8n\log n})&& \text{For $n \geq 2$} \\
        & \geq n\log n(1-\frac{1}{8 \cdot 2 \log 2})&& \text{For $n \geq 2$} \\
    \end{align*}
  Therefore $nH_{n} = \Omega(n\log n)$. Then we note that
  $$nH_n - \frac{n}{n-1} - 1 = nH_{n-2}$$
  For $n \geq 3$ we have
  $$nH_n - \frac{5}{2} \leq nH_{n-2}.$$
  Since $nH_{n} = \Omega(n\log n)$ then there exists a constant $C = (1-\frac{1}{8 \cdot 2 \log 2}) > 0$ such that
  $$C n\log n -\frac{5}{2} \leq nH_{n-2}.$$
  Therefore
  $$n\log n(C -\frac{5}{2 n\log n} )\leq nH_{n-2}.$$
  Let $N \geq 3$ be the smallest natural number such that $(C -\frac{5}{2 N\log N} ) > 0$. This occurs at $N = 3$. Then for $n \geq N = 3$ we have
   $$n\log n(C -\frac{5}{2 \cdot 3\log 3} )\leq nH_{n-2}.$$
   
   Therefore $n H_{n-2} = \Omega(n\log n)$. We note that for $n \geq 10$ we have that
   $$0.868 \cdot n\log n \approx  n\log n(C -\frac{5}{2 \cdot 10\log 10} ) \leq n\log n(C -\frac{5}{2 n\log n} )\leq nH_{n-2}$$
   Therefore even for small $n \geq 10$ we have a good lower bound on $nH_{n-2}$.

   \par Now we show $nH_{n-2} =  O(n \log n)$. We have that
   $$H_n = \log n + \gamma + \frac{1}{2n} - \varepsilon_n \leq \log n + 1 + \frac{1}{2} \leq \log n + \log 5.$$
   Therefore for $n \geq 5$, $H_n \leq 2 \log n$. So for $n \geq 5$ we have
   $$n H_{n-2} \leq n H_{n} \leq 2 n \log n = \bigO(n \log n) .$$
   Therefore $nH_{n-2} = \Theta(n\log n)$.
\end{proof}

\subsection{Vertex Boundaries Proof}
\begin{proof}\label{vertex_bound_proof}
Let $A \subset V$ be such that $0 < |A| \leq \frac{n}{2}$. Since $G$ is connected we have that $\frac{1}{\frac{n}{2}} \leq \frac{|\Gamma(A)|}{|A|}$. Therefore $\frac{2}{n} \leq \phi_{out}$.

    \par Now we show $\phi \leq 5$. For $n = 2$ vertices there is one connected graph on two vertices so we know $\phi = 1$.
    \par Assume $n\geq 3$. Let $A \subset V$ be any set of vertices such that $|A| = \lfloor \frac{n}{2} \rfloor$. Then
    \begin{align*}
        & \frac{|\Gamma(A)|}{|A|} = \frac{|\Gamma(A)|}{\lfloor \frac{n}{2} \rfloor} && \text{} \\
        & = \frac{|\Gamma(A)|}{\frac{n}{2}-1} && \text{} \\
        & = \frac{n - \lfloor \frac{n}{2} \rfloor}{\frac{n}{2}-1} && \text{$|A| = \lfloor \frac{n}{2} \rfloor$ so $|\Gamma(A)| \leq n - \lfloor \frac{n}{2} \rfloor$} \\
        & = \frac{n - (\frac{n}{2}-1)}{\frac{n}{2}-1} && \text{} \\
        & = \frac{\frac{n}{2}+1}{\frac{n}{2}-1} && \text{} \\
        & \leq 5 && \text{} \\
    \end{align*}
The last inequality is since the function $\frac{\frac{n}{2}+1}{\frac{n}{2}-1} $ is decreasing with respect to $n$ so is maximised at its endpoint $n = 3$.
Since $\phi_{out}$ is a minimum of $\frac{|\Gamma(A)|}{|A|}$ over all possible $A \subset V$ with $0 < |A| \leq \frac{n}{2}$ and $\frac{|\Gamma(A)|}{|A|} \leq 5$ for some choice of $A$ then we have $\phi_{out} \leq 5$.
\end{proof}

\section{Boundaries Proof}\label{n_boundaries_app}
\begin{lemma}
    Let $G$ be a strongly connected directed graph. Let $\emptyset \subsetneq S \subseteq  V$. There exists a $N \in \N \cup \{0\}$ such that $$V =  \bigsqcup_{k=0}^N \Gamma^k(S) $$
    where $\Gamma^k(S) \neq \emptyset$ for all $0 \leq k \leq N$ and $\Gamma^{k}(S) = \emptyset$ for all $k > N$. That is the $k$-boundaries of $S$ partition the set $V$ into exactly $N+1$ disjoint subsets.
\end{lemma}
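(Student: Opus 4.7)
The plan is to identify each $\Gamma^k(S)$ with the $k$-th BFS layer from $S$. Define $d(v)$ to be the length of the shortest directed walk from some vertex of $S$ to $v$, with $d(s) = 0$ for $s \in S$. If I can show $\Gamma^k(S) = \{v \in V : d(v) = k\}$, then disjointness is immediate (each vertex has a unique distance from $S$), the covering $V = \bigsqcup_{k=0}^{N}\Gamma^k(S)$ follows from strong connectivity (every vertex is reachable from any $s \in S$), and the value $N = \max_{v \in V} d(v)$ is finite since every shortest walk in a graph on $n$ vertices has length at most $n-1$.

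First I would prove the identity $\Gamma^k(S) = \{v : d(v) = k\}$ by induction on $k$. The cases $k=0,1$ follow directly from the definitions of $\Gamma^0(S) = S$ and $\Gamma^1(S) = \Gamma(S)$. For the inductive step, suppose the identity holds for all $j < k$. The reverse inclusion is straightforward: if $d(v) = k$, then $v$ has an in-neighbour $u$ on a shortest walk with $d(u) = k-1$, placing $u \in \Gamma^{k-1}(S)$ and $v \in \Gamma(\Gamma^{k-1}(S))$; also $v \notin \Gamma^{k-2}(S)$ since $d(v) \neq k-2$. For the forward inclusion, if $v \in \Gamma^k(S)$, then $v$ has an in-neighbour in $\Gamma^{k-1}(S)$, giving $d(v) \leq k$, while the definition $\Gamma(A) = \{v \notin A : u \in A, u \sim v\}$ rules out $d(v) = k-1$, and $v \notin \Gamma^{k-2}(S)$ rules out $d(v) = k-2$.

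Having established the identification, I would set $N = \max_{v \in V} d(v)$, which is finite by the above observations. The covering follows from strong connectivity. For non-emptiness of each layer in the range $0 \leq k \leq N$, pick $w \in V$ with $d(w) = N$ and a shortest walk $s = v_0 \to v_1 \to \cdots \to v_N = w$ from $S$ to $w$; each intermediate $v_k$ satisfies $d(v_k) = k$ (else the walk could be shortened), so $v_k \in \Gamma^k(S)$. For $k > N$, emptiness follows directly from the identification.

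The main obstacle lies in the forward direction of the inductive step: one must rule out the possibility that some $v \in \Gamma(\Gamma^{k-1}(S)) \setminus \Gamma^{k-2}(S)$ has distance $d(v) = j$ for some $j \leq k-3$. In the undirected case this is automatic, since neighbours of a vertex at distance $k-1$ lie at distance $k-2$, $k-1$, or $k$. In the directed setting the same conclusion holds for the BFS layers, but verifying it from the recursive definition as written (which only subtracts $\Gamma^{k-2}(S)$) requires an auxiliary inductive invariant ensuring that $\Gamma^{k-1}(S)$ itself contains no vertex of strictly smaller distance; alternatively the definition can be read as subtracting $\bigcup_{j<k} \Gamma^j(S)$, after which the argument above goes through unchanged.
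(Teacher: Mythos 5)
Your route is genuinely different from the paper's: the paper proves this by induction on $|V|$, deleting a vertex $v\in S$, applying the inductive hypothesis to $G[V\setminus\{v\}]$ with $S'=(S\cup\Gamma(S))\setminus\{v\}$, and showing $\Gamma_{G'}^{k}(S')=\Gamma_G^{k+1}(S)$; you instead identify each $\Gamma^k(S)$ with the level set $\{v \mid d(v)=k\}$ of your distance function by induction on $k$. Your identification is the cleaner and more informative statement. However, the obstacle you flag in your final paragraph is not a loose end that an ``auxiliary inductive invariant'' can close --- it is a genuine failure of the claim for the definition as written. Take the directed triangle $s\to a\to b\to s$ with $S=\{s\}$: then $\Gamma^1(S)=\{a\}$, $\Gamma^2(S)=\{b\}$, and $\Gamma^3(S)=\Gamma(\{b\})\setminus\Gamma^1(S)=\{s\}\neq\emptyset$, so $\Gamma^3(S)$ meets $\Gamma^0(S)$ and the boundaries never become empty. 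The invariant you propose (that $\Gamma^{k-1}(S)$ contains no vertex of strictly smaller distance) already follows from your induction hypothesis and does not help, because the failure mode is a back-edge from a vertex at distance $k-1$ to a vertex at distance at most $k-3$, which the recursion only guards against at distance exactly $k-2$. In an undirected graph such back-edges cannot exist (distances of adjacent vertices differ by at most $1$), which is why your argument is complete there.

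The only workable repair among the two you offer is the second: redefine $\Gamma^k(S)$ for $k\ge 2$ as $\Gamma(\Gamma^{k-1}(S))\setminus\bigcup_{j<k}\Gamma^j(S)$, i.e., genuine breadth-first layers. With that definition your induction on $k$ goes through verbatim and the lemma follows, and this is evidently what the paper intends, since every later use of the $k$-boundaries treats them as such layers. So: state the amended definition explicitly, drop the first proposed fix, and your proof is correct --- and, unlike the paper's vertex-deletion induction, it also delivers the explicit description $\Gamma^k(S)=\{v \mid d(v)=k\}$ for free.
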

\begin{proof}

We use induction on $|V|$.
    \par If $|V| = 1$ then $S = V$ by assumption, since $S\subseteq V$ is non-empty. Then we have
    $$V = S.$$
    is a partition into $k$-boundaries with $N = 0$.
    \par Now suppose $|V| > 1$. The idea of the proof is to choose a suitable vertex $v \in V$ to delete from $G$. Denote $V' = V \setminus \{v\}$. This results in $G' = G[V']$. The induced subgraph will have $|V| - 1$ vertices and by the induction hypothesis $V'$ is partitioned into $k$-boundaries by any $S' \subseteq V'$. Then we show adding back in this vertex $v$ does not violate any of the desired properties.

    \par 

\begin{figure}[!h]
    \centering
    \includegraphics[scale = 0.4]{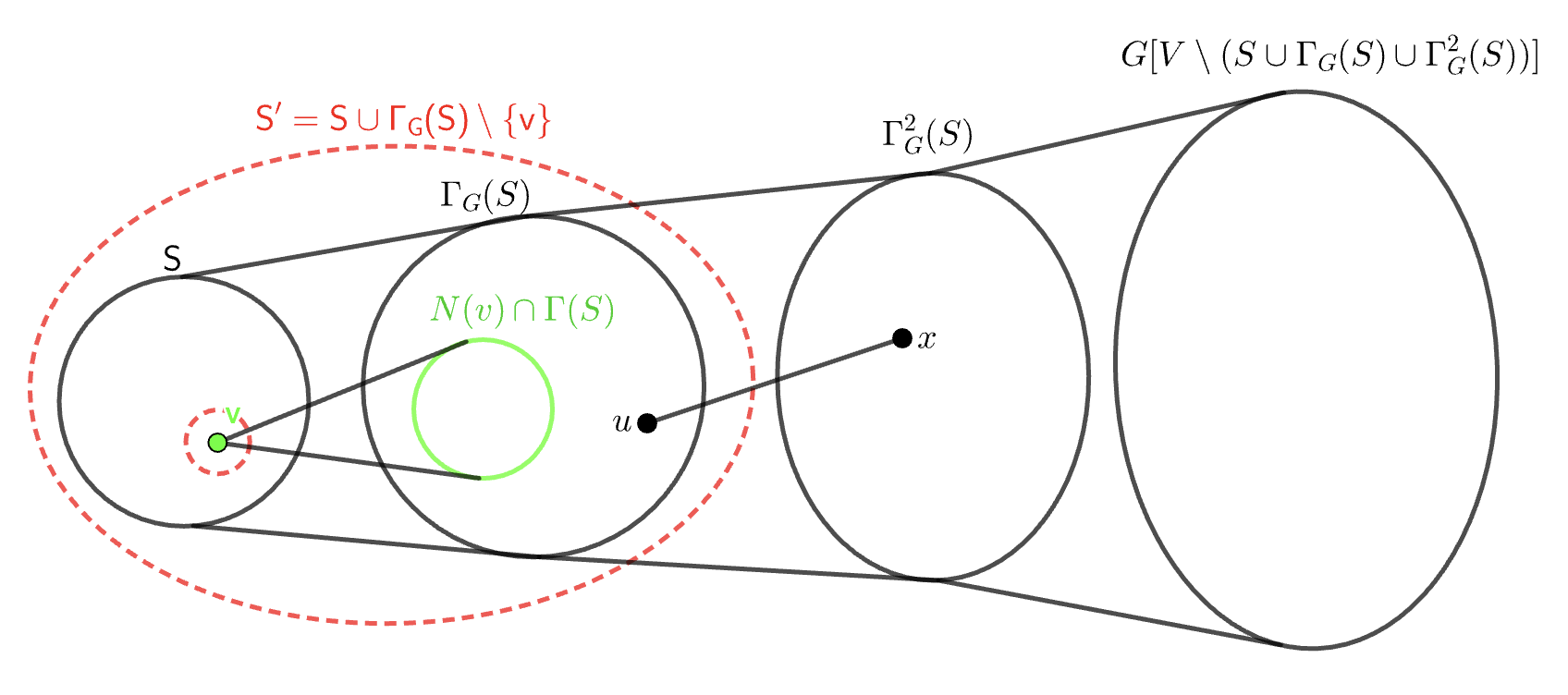}
    \caption{Graph $G$ with components $S, \Gamma(S)$, $S'$,}
    \label{fig:nboundproof}
\end{figure}

Please refer to Figure \ref{fig:nboundproof}. Let $S \subseteq V$. Since $S$ is non-empty then we can fix any $v \in S$. Consider $V' = V \setminus \{v\}$ and let $G' = G[V'] = (V',E')$ be the subgraph of $G$ induced by $V'$. Since $|V'| = |V| - 1$ the induction hypothesis implies for any non empty set $S'$, there exists an $N \in \N$ such that
\begin{equation}\label{partitionV'}
    V' = \bigsqcup_{k=0}^N \Gamma_{G'}^k(S').
\end{equation}

where for all $0\leq k \leq N$, $\Gamma_{G'}^k(S') \neq \emptyset$ and $\Gamma_{G'}^k(S') = \emptyset$ for all $k > N$. Take $S' = (S \cup \Gamma_G(S)) \setminus \{v\}$ in the induction hypothesis.
\par \textbf{Claim:} We show that for $k = 1,2,\dots, N$,
$$\Gamma_{G'}^k(S') = \Gamma_{G}^{k+1}(S).$$

The proof of the claim is by induction on $k$.
\par \textbf{Base Case:}
For $k = 1$ this amounts to using the definition of the boundary. We show $\Gamma_{G'}(S') \subseteq \Gamma_{G}^{2}(S)$ and $\Gamma_{G'}(S') \supseteq \Gamma_{G}^{2}(S)$.
\par Suppose $x \in \Gamma_{G'}(S') $. Then $x \in V'$ and there exists an edge $(u,x) \in E'$ such that
\begin{equation}\label{uinS'}
    u \in S' = S \cup \Gamma_{G}(S) \setminus \{v\} 
\end{equation}
\begin{equation} \label{xnotinS'}
    x \notin S' = S \cup \Gamma_{G}(S) \setminus \{v\}
\end{equation}

To show that $x \in \Gamma_G^2(S)$ we use the definition,
\begin{align*}
    & \Gamma_G^2(S) = \Gamma_G(\Gamma_G(S)) \setminus S  && \text{}\\
    & = \{v' \in V | u \in \Gamma_G(S), v' \notin \Gamma_G(S) \text{ and } u \sim v'\} \setminus S  && \text{}\\
\end{align*}
Which implies
\begin{equation}\label{gamma^2}
   \Gamma_G^2(S)  = \{v' \in V | u \in \Gamma_G(S), v' \notin \Gamma_G(S) \cup S \text{ and } u \sim v'\} 
\end{equation}

Observe (\ref{uinS'}), (\ref{xnotinS'}) and that $(u,x)$ is an edge in $G'$ implies $u \in \Gamma_{G}(S)$. This is because the only edges between the sets $S$ and $(S')^c$ contain the vertex $v$. However $N(v) \subseteq S'$ which would imply $x \in S'$ also, a contradiction. So the edge from $u$ to $x$ must be from $ u \in \Gamma_{G}(S)$ to $x \notin S'$.  Therefore $x \in \Gamma_G^2(S)$ by (\ref{gamma^2}). This shows $\Gamma_{G'}(S') \subseteq \Gamma_{G}^{2}(S)$.
\par Now suppose $x \in \Gamma_G^2(S)$. Then there exists an edge $(u,x) \in E$ such that
\begin{equation*}
    u \in \Gamma_G(S) = \{v' \in V | w \in S, v' \notin S \text{ and } w \sim v'\}
\end{equation*}
\begin{equation} \label{xnotingammaScupS}
    x \notin \Gamma_G(S) \cup S = \{v' \in V | w \in S, v' \notin S \text{ and } w \sim v'\} \cup S
\end{equation}
Since $u \in \Gamma_G(S)$ then $u \in S'$ (satisfying (\ref{uinS'})). Further, $x \notin S'$ (the only check we need is that $x \neq v$, which holds by (\ref{xnotingammaScupS}) since $x \notin S$.)
Then we have shown the edge $(u,x) \in E' \subseteq E$ and that $x \in V'$. Therefore $x \in \Gamma_{G'}(S')$. This shows $\Gamma_{G'}(S') \supseteq \Gamma_{G}^{2}(S)$.
Therefore we have that $\Gamma_{G'}(S') = \Gamma_{G}^{2}(S)$

\par \textbf{Induction:} Let $k \geq 2 $ in the claim. Then
\begin{align*}
    & \Gamma_{G'}^k(S') = \Gamma_{G'}(\Gamma_{G'}^{k-1}(S')) \setminus \Gamma_{G'}^{k-2}(S')&& \text{Definition of $\Gamma_{G'}(S')$} \\
    & = \Gamma_{G'}(\Gamma_{G}^{k}(S)) \setminus \Gamma_{G}^{k-1}(S)&& \text{Inductive hypothesis} \\
    & = \Gamma_{G}(\Gamma_{G}^{k}(S)) \setminus \Gamma_{G}^{k-1}(S)&& \text{See *} \\
    & = \Gamma_{G}^{k+1}(S)&& \text{Definition} \\
\end{align*}
(*) Let $A\subset V$ be any set such that $A \cap (N(v) \cup \{v\}) = \emptyset$. That is $A$ does not contain $v$ or any vertices in the neighbourhood. $\Gamma_G(A) = \Gamma_{G'}(A)$ since any vertex in one set satisfies the conditions of being in the other.
$$\Gamma_G(A) = \{v \in V | u \in A, v \notin A \text{ and } u \sim v\}  = \{v' \in V' | u \in A, v' \notin A \text{ and } u \sim v'\} = \Gamma_{G'}(A).$$
Letting $A = \Gamma_G^k(S)$ and noting that since $k\geq 2$, $A \cap (N(v) \cup \{v\}) = \emptyset$.
\newline
\par 
Therefore the claim holds by finite induction.
\par Now consider Equation $\ref{partitionV'}$. We can write this as
\begin{align*}
    & V' = \bigsqcup_{k=0}^N \Gamma_{G'}^k(S') && \text{} \\
    & = S' \sqcup \Gamma_{G'}(S') \sqcup \Gamma_{G'}^2(S') \sqcup \dots \sqcup \Gamma_{G'}^N(S') && \text{} \\
    & = S' \sqcup \Gamma_{G}(S) \sqcup \Gamma_{G}^{2}(S) \sqcup \dots \sqcup \Gamma_{G}^{N+1}(S) && \text{By Claim} \\
    & = (S \cup \Gamma(S) \setminus\{v\})  \sqcup \Gamma_{G}(S) \sqcup \Gamma_{G}^{2}(S) \sqcup \dots \sqcup \Gamma_{G}^{N+1}(S) && \text{By definition of $S'$} \\
\end{align*}
Therefore, since $S$ and $\Gamma(S)$ are disjoint by definition and $v$ is only a member of $S$, then
$$V = V' \sqcup \{v\} = S \sqcup \Gamma(S)  \sqcup \Gamma_{G}(S) \sqcup \Gamma_{G}^{2}(S) \sqcup \dots \sqcup \Gamma_{G}^{N+1}(S). $$
Which is the desired partition of $V$.

\end{proof}

\section{Coupling Procedure} \label{coupling_proc_app}
In this section we provide a `potential proof' of Theorem \ref{coupling_proof}. Please note this is not a complete proof and may require more attention. For clarity the theorem is stated below. The general idea can be found in \cite{couple_arg}.

\begin{theorem*}
Let $Q_j$ for $j\geq 1$ be a Bernoulli random variable with success probability $q \neq 1$. That is
    $$Q_j = \begin{cases}
        1 & \text{with probability } q \\
        0 & \text{with probability } 1-q 
    \end{cases}$$
    Suppose that for all $j \geq 1$, $q\leq p_j \leq 1$. Let $P_j$ for $j\geq 1$ be a random variable defined by
    $$P_j = \begin{cases}
        1 & \text{with probability } p_j \\
        0 & \text{with probability } 1-p_j \\
    \end{cases}$$
    Define the following random variables.
    $$Q = \min\{j | Q_j = 1\}$$
    $$P = \min\{j | P_j = 1\}$$
    Then for all $k \in \N$, $\mathbb{P}(Q \leq k) \leq \mathbb{P}(P \leq k)$. In other words, $Q$ is stochastically dominated by $P$.
\end{theorem*}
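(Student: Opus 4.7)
The plan is to establish the inequality by an explicit coupling of the two processes on a single probability space, so that the comparison $P \le Q$ holds pointwise rather than only in distribution. Once the coupling is in place, stochastic domination is immediate from monotonicity of the event $\{\,\cdot\, \le k\}$.

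First I would introduce a sequence $U_1, U_2, \ldots$ of i.i.d.\ random variables uniformly distributed on $[0,1]$, chosen independently of everything else. Using the same $U_j$ I would then define both Bernoulli sequences on this common space by setting
\[
    \widetilde{Q}_j = \mathbf{1}\{U_j \le q\}, \qquad \widetilde{P}_j = \mathbf{1}\{U_j \le p_j\}.
\]
The point of this construction is that $\widetilde Q_j$ and $\widetilde P_j$ have exactly the prescribed marginal distributions (Bernoulli with parameters $q$ and $p_j$ respectively), so the induced first-success times $\widetilde Q = \min\{j:\widetilde Q_j = 1\}$ and $\widetilde P = \min\{j:\widetilde P_j = 1\}$ have the same laws as $Q$ and $P$.

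Next I would exploit the hypothesis $q \le p_j$ to observe that $U_j \le q$ implies $U_j \le p_j$, so $\widetilde Q_j \le \widetilde P_j$ pointwise for every $j$. Consequently, on any sample path, the first index at which $\widetilde Q_j = 1$ is at least as large as the first index at which $\widetilde P_j = 1$, i.e.\ $\widetilde P \le \widetilde Q$ with probability one. For any $k \in \N$ this yields $\{\widetilde Q \le k\} \subseteq \{\widetilde P \le k\}$, and taking probabilities (and using the distributional equalities $\widetilde Q \stackrel{d}{=} Q$, $\widetilde P \stackrel{d}{=} P$) gives $\mathbb{P}(Q \le k) \le \mathbb{P}(P \le k)$, which is the desired conclusion.

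The main subtlety, and the step that would require the most care, is when the probabilities $p_j$ are not constants but depend on the history of the process (as they do in the applications of Theorem \ref{asymptotics_G} and Theorem \ref{asymptotics_vertex}, where $p_j$ depends on the current strong edge set). To handle this I would build the coupling sequentially: having already realised $U_1,\dots,U_{j-1}$ and thus determined the history, I would let the environment reveal $p_j$, and only then use the independent uniform $U_j$ to define both $\widetilde Q_j$ and $\widetilde P_j$ as above. The inequality $\widetilde Q_j \le \widetilde P_j$ remains valid almost surely conditional on the history, and so the pointwise domination $\widetilde P \le \widetilde Q$ still holds. Getting this measurability and filtration bookkeeping correct (and stating the hypothesis $p_j \ge q$ in the appropriate conditional form) is the only genuinely delicate part; the rest of the argument is essentially a one-line comparison.
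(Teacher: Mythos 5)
Your proposal is correct and takes essentially the same route as the paper's appendix proof: both construct a coupling under which the $P$-process succeeds no later than the $Q$-process on every sample path, and your uniform-variable construction realizes exactly the joint law the paper builds by hand (success of $P_j'$ with conditional probability $\frac{p_j-q}{1-q}$ when $Q_j=0$). Your closing remark about handling history-dependent $p_j$ through a sequential, filtration-aware coupling addresses precisely the point the paper itself flags as leaving its argument only a ``potential proof.''
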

\begin{proof}
    We define the following procedure called \textit{Coupling}$(P,Q)$. This procedure takes two random variables $P$ and $Q$ with the assumptions given in the theorem as input. The output is a new random variable $P'$ which has the same distribution as $P$ but is `coupled' with $Q$. Given $P,Q$ (and therefore $Q_j,P_j$) define
    $$P_j' = \begin{cases}
        1 & \text{if $Q_j = 1$} \\
        1 & \text{with probability $\frac{p_j-q}{1-q}$} \\
        0 & \text{otherwise} \\
    \end{cases}$$
That is we imitate $Q_j$ whenever it is a success, and even if $Q_j$ is not a success, $P_j'$ may still be a success with some well defined probability. It is well defined since $q\neq 1$ and $p_j \geq q$. Further define the output of \textit{Coupling}$(P,Q)$ as
$$P' = \min\{j | P_j' = 1\}.$$
We have that
$$\mathbb{P}(P_j' = 1) = q + (1-q)(\frac{p_j-q}{1-q}) = p_j.$$
Therefore the distribution of $P_j'$ is the same as the distribution of $P_j$. This also implies $P$ and $P'$ have the same distribution. If we treat $1$ as a success and $0$ as a failure, the process defined by $P_j'$ always reaches a success before or at the same round as the process $Q_j$.

Therefore 
$$\mathbb{P}(Q \leq k) \leq \mathbb{P}(P' \leq k) = \mathbb{P}(P \leq k).$$
where the last equality is since $P$ and $P'$ have the same distribution. Therefore $P$ stochastically dominates $Q$ as desired.
    
\end{proof}

\section{Large Figures}
Below we introduce a large figure. It displays 7 rounds of the asynchronous maximum model where the highlighted vertices are chosen for the update.

\begin{figure}[h!]
\centering
\begin{tikzpicture}[
every edge/.style = {->, draw=black,very thick},
 vrtx/.style args = {#1/#2}{%
      circle, draw, thick, fill=white,
      minimum size=8mm, label=#1:#2}
                    ]
\node(A) [vrtx= center/6] at (0, 0) {};
\node(B) [vrtx= center/5] at (2, 0) {};
\node(C) [vrtx= center/4] at (3.2, -2) {};
\node(D) [vrtx= center/3] at (2, -4) {};
\node(E) [vrtx= center/2] at (0, -4) {};
\node(F) [vrtx= center/1] at (-1.2, -2) {};
\node(G) [vrtx= center/5] at (5.2, -2) {};
\node(H) [vrtx= center/2] at (7.2, -0.7) {};
\node(I) [vrtx= center/3] at (7.2, -3.3) {};
\path   (A) edge (B)
        (B) edge (C)
        (C) edge (D)
        (D) edge (E)
        (E) edge (F)
        (F) edge (A)
        (C) edge (G)
        (H) edge (G)
        (I) edge (H)
        (G) edge (I)
        (G) edge (D);
\end{tikzpicture}\clearpage
    \caption*{Round $0$: $h(f_0) = 0$}  
\begin{tikzpicture}[
every edge/.style = {->, draw=black,very thick},
 vrtx/.style args = {#1/#2}{%
      circle, draw, thick, fill=white,
      minimum size=8mm, label=#1:#2}
                    ]
\node(A) [red][vrtx= center/5] at (0, 0) {};
\node(B) [vrtx= center/5] at (2, 0) {};
\node(C) [vrtx= center/4] at (3.2, -2) {};
\node(D) [vrtx= center/3] at (2, -4) {};
\node(E) [vrtx= center/2] at (0, -4) {};
\node(F) [vrtx= center/1] at (-1.2, -2) {};
\node(G) [vrtx= center/5] at (5.2, -2) {};
\node(H) [vrtx= center/2] at (7.2, -0.7) {};
\node(I) [vrtx= center/3] at (7.2, -3.3) {};
\path   (A) edge (B)
        (B) edge (C)
        (C) edge (D)
        (D) edge (E)
        (E) edge (F)
        (F) edge (A)
        (C) edge (G)
        (H) edge (G)
        (I) edge (H)
        (G) edge (I)
        (G) edge (D);
\end{tikzpicture}
    \caption*{Round $1$: $h(f_1) = 0$}  

\begin{tikzpicture}[
every edge/.style = {->, draw=black,very thick},
 vrtx/.style args = {#1/#2}{%
      circle, draw, thick, fill=white,
      minimum size=8mm, label=#1:#2}
                    ]
\node(A) [vrtx= center/5] at (0, 0) {};
\node(B) [vrtx= center/5] at (2, 0) {};
\node(C) [red][vrtx= center/5] at (3.2, -2) {};
\node(D) [vrtx= center/3] at (2, -4) {};
\node(E) [vrtx= center/2] at (0, -4) {};
\node(F) [vrtx= center/1] at (-1.2, -2) {};
\node(G) [vrtx= center/5] at (5.2, -2) {};
\node(H) [vrtx= center/2] at (7.2, -0.7) {};
\node(I) [vrtx= center/3] at (7.2, -3.3) {};
\path   (A) edge (B)
        (B) edge (C)
        (C) edge (D)
        (D) edge (E)
        (E) edge (F)
        (F) edge (A)
        (C) edge (G)
        (H) edge (G)
        (I) edge (H)
        (G) edge (I)
        (G) edge (D);
\end{tikzpicture}
    \caption*{Round $2$: $h(f_2) = 0$}  
\begin{tikzpicture}[
every edge/.style = {->, draw=black,very thick},
 vrtx/.style args = {#1/#2}{%
      circle, draw, thick, fill=white,
      minimum size=8mm, label=#1:#2}
                    ]
\node(A) [vrtx= center/5] at (0, 0) {};
\node(B) [vrtx= center/5] at (2, 0) {};
\node(C) [vrtx= center/5] at (3.2, -2) {};
\node(D) [vrtx= center/3] at (2, -4) {};
\node(E) [vrtx= center/2] at (0, -4) {};
\node(F) [vrtx= center/1] at (-1.2, -2) {};
\node(G) [vrtx= center/5] at (5.2, -2) {};
\node(H) [red][vrtx= center/5] at (7.2, -0.7) {};
\node(I) [vrtx= center/3] at (7.2, -3.3) {};
\path   (A) edge (B)
        (B) edge (C)
        (C) edge (D)
        (D) edge (E)
        (E) edge (F)
        (F) edge (A)
        (C) edge (G)
        (H) edge (G)
        (I) edge (H)
        (G) edge (I)
        (G) edge (D);
\end{tikzpicture}
    \caption*{Round $3$: $h(f_3) = 0$}  
\caption{Round 0 to Round 3}
\label{sc_example}
\end{figure} 

\begin{figure}[h!]
\centering
\begin{tikzpicture}[
every edge/.style = {->, draw=black,very thick},
 vrtx/.style args = {#1/#2}{%
      circle, draw, thick, fill=white,
      minimum size=8mm, label=#1:#2}
                    ]
\node(A) [vrtx= center/5] at (0, 0) {};
\node(B) [vrtx= center/5] at (2, 0) {};
\node(C) [vrtx= center/5] at (3.2, -2) {};
\node(D) [vrtx= center/3] at (2, -4) {};
\node(E) [vrtx= center/2] at (0, -4) {};
\node(F) [vrtx= center/1] at (-1.2, -2) {};
\node(G) [red][vrtx= center/3] at (5.2, -2) {};
\node(H) [vrtx= center/5] at (7.2, -0.7) {};
\node(I) [vrtx= center/3] at (7.2, -3.3) {};
\path   (A) edge (B)
        (B) edge (C)
        (C) edge (D)
        (D) edge (E)
        (E) edge (F)
        (F) edge (A)
        (C) edge (G)
        (H) edge (G)
        (I) edge (H)
        (G) edge (I)
        (G) edge (D);
\end{tikzpicture}\clearpage
    \caption*{Round $4$: $h(f_4) = 0$}  

\begin{tikzpicture}[
every edge/.style = {->, draw=black,very thick},
 vrtx/.style args = {#1/#2}{%
      circle, draw, thick, fill=white,
      minimum size=8mm, label=#1:#2}
                    ]
\node(A) [vrtx= center/5] at (0, 0) {};
\node(B) [vrtx= center/5] at (2, 0) {};
\node(C) [vrtx= center/5] at (3.2, -2) {};
\node(D) [vrtx= center/3] at (2, -4) {};
\node(E) [vrtx= center/2] at (0, -4) {};
\node(F) [vrtx= center/1] at (-1.2, -2) {};
\node(G) [vrtx= center/3] at (5.2, -2) {};
\node(H) [vrtx= center/5] at (7.2, -0.7) {};
\node(I) [red][vrtx= center/5] at (7.2, -3.3) {};
\path   (A) edge (B)
        (B) edge (C)
        (C) edge (D)
        (D) edge (E)
        (E) edge (F)
        (F) edge (A)
        (C) edge (G)
        (H) edge (G)
        (I) edge (H)
        (G) edge (I)
        (G) edge (D);
\end{tikzpicture}
    \caption*{Round $5$: $h(f_5) = 0$}  

\begin{tikzpicture}[
every edge/.style = {->, draw=black,very thick},
 vrtx/.style args = {#1/#2}{%
      circle, draw, thick, fill=white,
      minimum size=8mm, label=#1:#2}
                    ]
\node(A) [vrtx= center/5] at (0, 0) {};
\node(B) [vrtx= center/5] at (2, 0) {};
\node(C) [vrtx= center/5] at (3.2, -2) {};
\node(D) [vrtx= center/3] at (2, -4) {};
\node(E) [vrtx= center/2] at (0, -4) {};
\node(F) [vrtx= center/1] at (-1.2, -2) {};
\node(G) [red][vrtx= center/5] at (5.2, -2) {};
\node(H) [vrtx= center/5] at (7.2, -0.7) {};
\node(I) [vrtx= center/5] at (7.2, -3.3) {};
\path   (A) edge (B)
        (B) edge (C)
        (C) edge (D)
        (D) edge (E)
        (E) edge (F)
        (F) edge (A)
        (C) edge (G)
        (H) edge (G)
        (I) edge (H)
        (G) edge (I)
        (G) edge (D);
\end{tikzpicture}
    \caption*{Round $6$: $h(f_6) = 6$ and a strong cycle is formed}  
\begin{tikzpicture}[
every edge/.style = {->, draw=black,very thick},
 vrtx/.style args = {#1/#2}{%
      circle, draw, thick, fill=white,
      minimum size=8mm, label=#1:#2}
                    ]
\node(A) [vrtx= center/5] at (0, 0) {};
\node(B) [vrtx= center/5] at (2, 0) {};
\node(C) [vrtx= center/5] at (3.2, -2) {};
\node(D) [red][vrtx= center/2] at (2, -4) {};
\node(E) [vrtx= center/2] at (0, -4) {};
\node(F) [vrtx= center/1] at (-1.2, -2) {};
\node(G) [vrtx= center/5] at (5.2, -2) {};
\node(H) [vrtx= center/5] at (7.2, -0.7) {};
\node(I) [vrtx= center/5] at (7.2, -3.3) {};
\path   (A) edge (B)
        (B) edge (C)
        (C) edge (D)
        (D) edge (E)
        (E) edge (F)
        (F) edge (A)
        (C) edge (G)
        (H) edge (G)
        (I) edge (H)
        (G) edge (I)
        (G) edge (D);
\end{tikzpicture}
    \caption*{Round $7$: $h(f_7) = 6$}  
\caption{Round 4 to Round 7}
\end{figure}

\end{document}